\documentclass[11pt]{article}
\usepackage{amsthm,amsmath,amssymb,amsfonts}
\usepackage{epsfig}
\usepackage{latexsym,nicefrac,bbm}
\usepackage{xspace}
\usepackage{color,fancybox,graphicx,subfigure,fullpage}
\usepackage[top=1in, bottom=1in, left=1in, right=1in]{geometry}
\usepackage{tabularx}
\usepackage{hyperref}
\usepackage{pdfsync}
\usepackage{multicol}
\usepackage{cite,cleveref}

\renewcommand{\epsilon}{\varepsilon}

\newcommand{\nfrac}{\nicefrac}

\newcommand{\cM}{\mathcal{M}}

\newcommand{\cU}{\mathcal{U}}

\newcommand{\st}{\mathrm{s.t.}}

\newcommand{\cR}{\mathbb{R}}

\usepackage{bbm}
\newcommand{\white}[1]{ \textcolor{white} {#1}}
\usepackage{enumitem}
\newcommand\labelthis[1]{\addtocounter{equation}{1}\tag{{#1},\ \theequation}}
\usepackage{subfiles}
\usepackage{mathtools}
\newcommand\numberthis{\addtocounter{equation}{1}\tag{\theequation}}
\DeclareMathOperator*{\eE}{\mathbb{E}}
\newcommand{\argmax}{\operatornamewithlimits{argmax}}
\newcommand{\cP}{\mathcal{P}}
\newcommand{\cL}{\mathcal{L}}
\newcommand{\cQ}{\mathcal{Q}}

\usepackage{epsfig}
\usepackage{verbatim}

\theoremstyle{definition}

\usepackage{algorithmic}
\usepackage{algorithm,caption}

\clubpenalty=10000
\widowpenalty = 10000

\usepackage{mhequ}
\def \be{\begin{equs}}
\def \ee{\end{equs}}

\newtheorem{theorem}{Theorem}[section]
\newtheorem{lemma}[theorem]{Lemma}

\newtheorem*{theorem*}{Theorem}
\newtheorem{remark}[theorem]{Remark}

\crefname{theorem}{Theorem}{Theorems}
\crefname{observation}{Observation}{Observations}
\crefname{proposition}{Proposition}{Propositions}
\crefname{claim}{Claim}{Claims}
\crefname{condition}{Condition}{Conditions}
\crefname{example}{Example}{Examples}
\crefname{fact}{Fact}{Facts}
\crefname{lemma}{Lemma}{Lemmas}
\crefname{corollary}{Corollary}{Corollaries}
\crefname{definition}{Definition}{Definitions}
\crefname{remark}{Remark}{Remarks}

\title{Toward Controlling Discrimination in Online Ad Auctions\footnote{The code for the simulations is available at \url{https://github.com/AnayMehrotra/Fair-Online-Advertising}.}}

\usepackage{authblk}

\author[1]{L. Elisa Celis}
\author[2]{Anay Mehrotra}
\author[3]{Nisheeth K. Vishnoi}
\affil[1,3]{\small Yale University}
\affil[2]{\small Indian Institute of Technology, Kanpur}
\date{}

\begin{document}

\maketitle

\thispagestyle{empty}
\begin{abstract}
  Online advertising platforms are thriving due to the customizable audiences they offer advertisers.
  However, recent studies show that advertisements can be discriminatory with respect to the gender or race of the audience that sees the ad, and may inadvertently cross ethical and/or legal boundaries.
  To prevent this, we propose a constrained ad auction framework that maximizes the platform’s revenue conditioned on ensuring that the audience seeing an advertiser’s ad is distributed appropriately across sensitive types such as gender or race.
  Building upon Myerson’s classic work, we first present an optimal auction mechanism for a large class of fairness constraints.
  Finding the parameters of this optimal auction, however, turns out to be a non-convex problem.
  We show that this non-convex problem can be reformulated as a more structured non-convex problem with no saddle points or local-maxima; this allows us to develop a gradient-descent-based algorithm to solve it.
  Our empirical results on the A1 Yahoo! dataset demonstrate that our algorithm can obtain uniform coverage across different user types for each advertiser at a minor loss to the revenue of the platform, and a small change to the size of the audience each advertiser reaches.
\end{abstract}

\newpage

\thispagestyle{empty}
\tableofcontents
\newpage

\section{Introduction}
  Online advertisements are the main source of revenue for social-networking sites and search engines such as Google~\cite{google_revenue}.
  Ad exchange platforms allow advertisers to select the target audience for their ad by specifying desired user demographics, interests and browsing histories~\cite{ad_targeting}.
  Every time a user loads a webpage or enters a search term, bids are collected from relevant advertisers~\cite{relevant_advertisers}, and an auction is conducted to determine which ad is shown, and how much the advertiser is charged~\cite{Muthukrishnan09,yuan_et_al,varian2007position}.
  As it is not practical for advertisers to place individual bids for every user, the advertiser instead gives some high-level preferences about their budget and target audience, and the platform places bids on their behalf~\cite{auto_bid}.

  More formally, let there be $n$ advertisers, and $m$ types of users.
  Each advertiser $i$ specifies their target demographic, average bid, and budget to the platform, which then decides a distribution, $\cP_{ij}$, of bids of advertiser $i \in [n]$ for user type $j \in [m]$.
  These distributions represent the value of the user to the advertiser, and ensure that the advertiser only bids for users in their target demographic, with the expected bid not exceeding the amount specified by the advertiser~\cite{bids_represent_value}.
  At each time step, a user visits a web page (e.g., Facebook or Twitter), the user's type $j\in [m]$ is observed, and a bid $v_i$ is drawn from $\cP_{ij}$, for each advertiser $i\in[n]$.
  Receiving these bids as input, the mechanism $\mathcal{M}$ decides an allocation $x(v)$ and price $p(v)$ for the advertisement slot.
  Several Ad Exchanges including Google Ads~\cite{snd_price_google2} and Facebook Ads~\cite{snd_price_in_practice},
  use variants of second price auction mechanism~\cite{ostrovsky2011reserve}\footnote{If the auction sells a single item, then Myerson's mechanism~\cite{myer} reduces to a second price auction mechanism with a reserve price~\cite{treatise_on_mechanisms}.}.

  Overall, such targeted advertising leads to higher utilities for the advertisers who show content to relevant audiences, for the users who view related advertisements, and for the platform which can benefit from selling targeted advertisements~\cite{farahat,yan_et_al,creepy_or_cool,goldfarb_tucker}.
  However, targeted advertising can also lead to discriminatory practices.
  For instance,  searches with ``black-sounding" names were much more likely to be shown ads suggestive of an arrest record~\cite{sweeney13}.
  Another study found that women were shown fewer advertisements for high paying jobs than men with similar profiles~\cite{datta}.
  In fact, recent experiments demonstrate that ads can be \emph{inadvertently} discriminatory;
  \cite{lambrecht_tucker} found that STEM job ads, specifically designed to be unbiased by the advertisers, were shown to more men than women across all major platforms (Facebook Ads, Google Ads, Instagram and Twitter).
  On Facebook, a platform with $52\%$ women~\cite{audience_insights} the advertisement was shown to $20\%$ more men than women.
  \cite{ali2019discrimination} find that this could be a result of \textit{competitive spillovers} among advertisers, and is neither a pure reflection of pre-existing cultural bias, nor a result of user input to the algorithm.
  Such (likely inadvertent) discrimination has led to two recent cases filed against Facebook, which will potentially lead to civil lawsuits alleging employment and housing discrimination~\cite{facebook_gender_bias,facebook_housing_1,facebook_housing_2,facebook_job_ethinicity}.

  To gain intuition on how inadvertent discrimination could happen, consider the setting in which there are two advertisers with similar bids/budgets, but one advertiser specifically targets women (which is allowed for certain types of ads, e.g., related to clothing), while the second advertiser  does not target based on gender (e.g., because they are advertising a job).
  The first advertiser creates an imbalance on the platform by taking up ad slots for women and, as a consequence, the second advertiser ends up advertising to disproportionately fewer women and is  inadvertently discriminatory.
  Currently, online advertising platforms have no mechanism to check this type of discrimination.
  In fact, the only way around this would be for the advertiser to set up separate campaigns for different user types and ensure that each campaign reached a similar number of the sub-target audience.
  However, online platforms often reject such campaigns in the apprehension of discriminatory practices~\cite{lambrecht_tucker, discriminatory_practice}.

  \begin{figure}[t]
    \footnotesize
    \centering
    \hspace{-4mm}
    \subfigure[
    \footnotesize
    Only the first advertiser bids for this user.
    ]
    {\includegraphics[clip, trim=2.0cm 22.9cm 8.4cm 3.1cm,width=8.2cm]{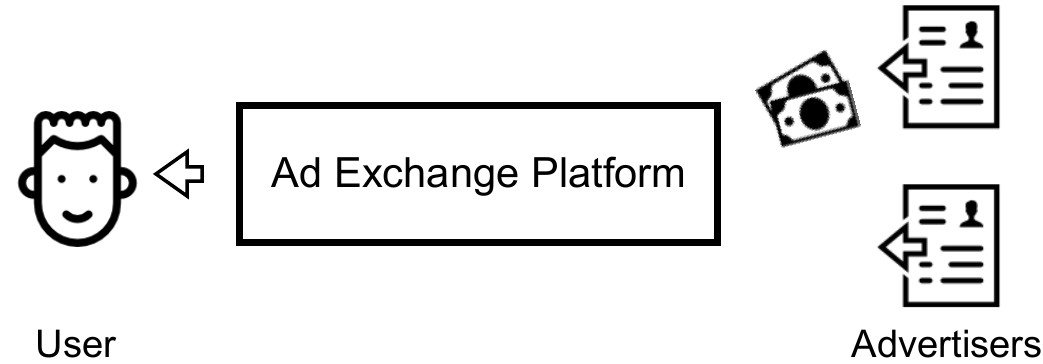}}
    \hspace{2mm}\subfigure[
    \footnotesize
    Both advertisers place bids for this user.
    ]
    {\includegraphics[clip, trim=2.0cm 18.25cm 8.4cm 7.75cm, width=8.2cm]{figures/entities_on_ad_platforms.pdf}}
    \caption{
    When a user visits the platform, the platform accepts bids from {\em relevant} advertisers, and conducts an auction to decide the ad to be shown to the user.
    Different advertisers have different target audiences and only bid for users in their target audience.
    }
    \label{fig:entities_on_ad_platforms}
  \end{figure}

  \subsection*{Our Contributions}
    Our main contribution is an optimization-based framework which maximizes the revenue of the platform subject to satisfying constraints that prevent the emergence of inadvertent discrimination as described above.
    The constraints can be formulated as any one of a wide class of ``group fairness'' constraints as presented in~\cite{CKSV18}, which constrains the distribution of an ad’s audience across the sensitive types to ensure proportionality across types as defined by the platform.
    The framework allows for intersectionality, allowing constraints across multiple sensitive attributes (e.g., gender, race, geography and economic class) and allows for restricting different advertisers to different constraints.

    Formally, building on Myerson's seminal work~\cite{myer}, we characterize the truthful revenue-optimal mechanism which satisfies the given  constraints (Theorem~\ref{thm:1}).
    The user types, as defined by their sensitive attributes, are taken as input along with the type-specific bid distributions for each advertiser, and we assume that bids are drawn from these distributions independently.
    Our mechanism is parameterized by constant ``shifts" which it applies to bids for each advertiser-type pair.
    Finding the parameters of this optimal mechanism, however, is a non-convex optimization problem, both in the objective and the constraints.
    Towards solving this,
    we first propose a novel reformulation of the objective as a composition of a convex function constrained on a polytope, and an unconstrained non-convex function (Theorem~\ref{thm:2New}).
    Interestingly, the non-convex function is reasonably well behaved, with no saddle-points or local-maxima.
    This allows us to develop a gradient descent based scheme (Algorithm~\ref{Algorithm1}) to solve the reformulated program, which under mild \hyperref[assumptions]{assumptions} has a fast  convergence rate of $\widetilde{O}(\nfrac{1}{\epsilon^2})$ (Theorem~\ref{thm:2}).

    We evaluate our approach empirically by studying the effect of the constraints on the revenue of the platform and the advertisers using the \textit{Yahoo! Search Marketing Advertising Bidding Data}~\cite{yah}. We find that our mechanism can obtain uniform coverage across different user types for each advertiser while losing less than 5\% of the revenue (Figure~\ref{fig:Exp1Result1}).
    Further, we observe that the total-variation distance between the fair and unconstrained distributions of total advertisements an advertiser shows on the platform is less than $0.05$ (Figure~\ref{fig:Exp1Result2}).

    To the best of our knowledge, we are the first to give a framework to prevent inadvertent discrimination in online ad auctions.

    \section{Our Model}\label{sec:ourModel}

    \subsection{Preliminaries}
      In this section, we provide some key preliminaries.
      For a detailed discussion, we refer the reader to the excellent treatises \cite{treatise_on_mechanisms,nisan_roughgarden_tardos_vazirani_2007} on Mechanism design.
      A {mechanism} $\mathcal{M}$ is defined by its allocation rule $x\colon \cR^n \to [0,1]^n$, and its payment rule $p\colon \cR^n \to \cR_{\geq 0}^{n}$.
      {\em Truthful mechanisms} are those in which revealing the true valuation is optimal for all bidders.
      Further, the can be shown that the allocation rule $x(b_1,b_2,\dots,b_n)$, of any truthful mechanism must be monotone in $b_i$ for all $i\in[n]$.
      \cite{revelation} proved for any mechanism $\cM$ there exists a truthful mechanism $\tau(\cM)$ such that $\tau(\cM)$ offers the same revenue to the seller and the same utility to each bidder as $\cM$.
      As such, we restrict ourselves to truthful mechanisms.
      Furthermore, it is a well known fact \cite{nisan_roughgarden_tardos_vazirani_2007} that for any truthful mechanism its payment rule $p$, is uniquely defined by its allocation rule $x$.
      Hence, for any truthful mechanism our only concern is the allocation rule $x$.

      Let $\cP$ be the distribution of valuation of a bidder, ${\rm pdf}\colon \cR\to\cR_{>0}$ be its probability density function, and ${\rm cdf}\colon \cR\to[0,1]$ be its cumulative density function, then we define the {\em virtual valuation} $\phi\colon\mathrm{supp}(\cP)\to \cR$, as $\phi(v) \coloneqq v - (1-{\rm cdf}(v))({\rm pdf}(v))^{-1}$.
      We say $\cP$ is {\em regular} if $\phi(v)$ is non-decreasing in $v$.
      Likewise, we say $\cP$ is {\em strictly regular} if $\phi(v)$ is strictly increasing in $v$.
      \paragraph{Myerson's Optimal Mechanism.}
        Myerson's mechanism is defined as the VCG mechanism~\cite{clarke,groves,vickrey} where the virtual valuation $\phi_i$, is submitted as the bid $v_i$ for each bidder $i$.
        If the valuations $v_i$, and therefore, the virtual valuations $\phi_i$ are \textit{independent}, then for any truthful mechanism the virtual surplus $\sum_{i\in [n]}\phi_i x_i(\phi_i)$, is equal to the revenue in expectation over the bids.
        Since VCG is surplus maximizing, if Myerson's mechanism is truthful then it maximizes the revenue.
        It can be shown that if the bids have a regular distribution, then Myerson's mechanism is truthful, and therefore, revenue maximizing.

      \paragraph{Notation.}
        Let $\phi_{ij}\in \cR$ be the virtual valuation of advertiser $i \in [n]$ for type $j \in [m]$, $f_{ij}\colon \cR\to \cR_{\geq 0}$ be its probability density function, and $F_{ij}\colon \cR\to [0,1]$ be its cumulative density function.
        We denote the joint virtual valuation of all advertisers for type $j$ by $\phi_j\in \cR^{n}$, and its joint probability density function by $f_{j}\colon \cR^{n}\to \cR_{\geq 0}$.
        The types $j\in[m]$ are distributed according to a known distribution $\cU$.
        Finally, given a user of type $j$, let a mechanism's allocation rule be $x_j\colon\cR^n\to[0,1]^n$.

      \subsection{Fairness Constraints}
        We would like to guarantee that advertisers have a fair coverage across user types.
        We do so by placing constraints on the coverage of an advertiser.
        Formally, we define advertiser $i$'s coverage of type $j$, $q_{ij}$, as the joint probability that advertiser $i$ wins the auction and the user is of type $j$
        \begin{align*}
          \label{def:coverage1}
          &\hspace{-4mm}q_{ij}(x_j)\coloneqq \Pr\nolimits_\cU[j]\hspace{-3mm}\int\limits_{{\rm supp}(\phi_j)} \hspace{-4mm}x_{ij}(\phi_j)df_j(\phi_j), \labelthis{Coverage}
        \end{align*}
        where $x_{ij}(\phi_j)$ is the $i$-th component of $x_{j}(\phi_j)$.
        Then, we consider the proportional coverage of the advertiser on each type.
        Given vectors $\ell_j$, $u_j$ $\in [0,1]^n \ \forall \ j \in [m]$, we define $(\ell,u)$-fairness constraints for each advertiser $i$ and type $j$, as a lower bound $\ell_{ij}$, and an upper bound $u_{ij}$, on the proportion of users of type $j$ the advertiser shows ads to, i.e., we impose the following constraints for all $i\in [n]$ and $j\in[m]$
        \begin{align}\label{eq:constraints}
          &\hspace{51.5mm}\ell_{ij} \leq \frac{q_{ij}  }{\sum\nolimits_{t=1}^{m}q_{it}  } \leq u_{ij}.
          \labelthis{$(\ell,u)$-fairness constraints}
        \end{align}
    \subsection{Discussion of Fairness Constraints}
      Returning to the example presented in the introduction, we can ensure that the advertiser shows  $x$\% of total ads to women, by choosing a lower bound of $x$ for this advertiser on women.
      More generally, for $m$ user types, moderately placed lower bounds and upper bounds ($\ell_{ij} \sim \nfrac{1}{m}$ and $u_{ij} \sim \nfrac{1}{m}$), for some subset of advertisers, ensure this subset has a  uniform coverage across all types, while allowing other advertisers to target specific types.

      Importantly, while ensuring fairness across multiple types our constraints allow for targeting within any single type.
      This is vital as the advertiser may not derive the same utility from each user, and could be willing to pay a higher amount for more relevant users in the same type.
      For example, if the advertiser is displaying job ads, then a user already looking for job opportunities may be of a higher value to the advertiser than one who is not.

      For a detailed discussion on how such constraints can encapsulate other popular metrics, such as statistical parity, we refer the reader to~\cite{CelisHKV19}.
    \subsection{Optimization Problem}
      We would like to develop a mechanism which maximizes the revenue while satisfying the upper and lower bound constraints in
      Eq.~(\hyperref[eq:constraints]{2}).
      Towards formally stating our problem, we define the revenue of mechanism $\mathcal{M}$, with an allocation rule $x_j\colon \cR^{n} \to[0,1]^{n}$ for type $j$ as
      \begin{align}
        \label{def:revenue1}
        &\hspace{-2mm}\mathrm{rev}_{\mathcal{M}}\coloneqq\sum_{\substack{i\in [n],\ j\in [m]}}\Pr\nolimits_{\cU}[j]\hspace{-3mm}\int\limits_{{\rm supp}(\phi_j)}\hspace{-5mm}\phi_{ij}x_{ij}(\phi_j)df_j(\phi_j), \labelthis{Revenue}
      \end{align}
      where $x_{ij}(\phi_j)$ and $\phi_{ij}$ are the $i$-th component of $x_{j}(\phi_j)$ and $\phi_{j}$ respectively.
      Thus, we can express our optimization problem with respect to {\em functions} $x(\cdot)$, or as an infinite dimensional optimization problem as follows.
      \noindent{(\bf Infinite-dimensional fair advertising problem). }For all user types $j\in[m]$, find the optimal allocation rule $x_{j}(\cdot)\colon \cR^{n} \to [0,1]^n$ for
      \begin{align}
        \label{problem:primal}
        &\hspace{-8mm}\hspace{25mm}\hspace{-1mm}\max_{x_{ij}(\cdot)\geq 0} \mathrm{rev}_{\mathcal{M}}(x_1,x_2,\dots,x_m) \quad\quad \numberthis& \\
        &\hspace{-8mm}\hspace{28.0mm}\mathrm{s.t.}   \label{eq:primalLowerbound}  \hspace{3mm} q_{ij}(x_j)\geq \ell_{ij}\sum_{t=1}^{m} q_{it}(x_t)  \ \hspace{0.5mm}\forall \ j \in [m], i \in [n]   \\
        &\hspace{-8mm} \hspace{28.0mm}\hspace{8mm} q_{ij}(x_j)\leq u_{ij}\sum_{t=1}^{m}q_{it}(x_t)   \
        \forall \ j \in [m], i \in[n] \label{eq:primalupper bound} \\
        &\hspace{-8mm}\hspace{28.0mm} \hspace{7mm}\sum\limits_{i=1}^{n}x_{ij}(\phi_j) \leq\ 1  \hspace{14mm} \forall \ j \in [m], \phi_j,  \label{eq:primalSingleItem}
      \end{align}
      where \eqref{eq:primalLowerbound} encodes the lower bound constraints, \eqref{eq:primalupper bound} encodes the upper bound constraints, and \eqref{eq:primalSingleItem} ensures that only one ad is allocated.
      In the above problem, we are looking for a collection of optimal continuous functions $x^\star$.
      To be able to solve this problem, we need -- in the least -- a finite dimensional formulation of the fair online advertisement problem.

      \section{Other Related Work}
        \cite{DworkI19} consider a framework which selects an ad category (e.g., job or housing) every time a user visits the platform.
        Given fair mechanisms for each category, they construct a {\em fair composition} of these mechanisms.
        However, they do not show how to design fair mechanisms for each category, or study how the composition affects the platform's ad revenue.
        Another related problem is to design optimal mechanisms which satisfy contract constraints~\cite{ghosh09,BalseiroFMM14,Pai2012Auction};
        these constraints allocate a minimum number of ad spots to advertisers with a contract,
        and are different from our constraints which control the fraction of each sensitive type the ads are shown to.

        Several prior works address the problems of polarization and algorithmic bias,
        including~\cite{GarimellaMGM18,CKSKV19}
        who control polarization in social-networks and personalized feeds,
        \cite{PanigrahiSAT12} who diversify personal feeds,
        and \cite{dpp_data_summarization} who create a diverse and balanced summary of a set of results.
        In addition, ~\cite{RadlinskiKJ08,asudeh_stoyanovich,CSV18} study fair ranking algorithms; these could be used to generate a balanced list of results on job platforms and other search engines.
        While these works are related to our broad goal of controlling algorithmic bias, their formulation is different since they do not involve a bidding mechanism.
        Therefore, their solutions cannot be applied to our problem.

        Finally, a framework approach to fairness constraints has shown to be effective in various other applications such as classification~\cite{CelisHKV19, HV19, ZafarVGG17}, selection of representatives~\cite{CelisHV18}, and personalization~\cite{CV17}.

        \section{Theoretical Results}\label{sec:main_results}
          Our first result is structural, and gives a characterization of the optimal solution $x^\star$, to the {infinite-dimensional fair advertising problem}, in terms of a matrix $\alpha\in\cR^{n\times m}$, making it a finite-dimensional optimization problem with respect to $\alpha$.
          \begin{theorem}
            \label{thm:1}
            \textbf{(Characterization of an optimal allocation rule).}
            There exists an $\alpha = \{\alpha_j\}_{j\in[m]} \in \cR^{n\times m}$ such that
            if for all $j\in [m]$, $\cP_{j}$ are strictly regular and independent,
            then the set of  allocation rules $x_j(\cdot,\alpha_j)\colon \cR^{n}\to[0,1]^{n}\ \forall \ j\in[m]$, defined below,
            is optimal for the infinite-dimensional fair advertising problem
            \begin{align*}
              \hspace{35mm}
              x_{ij}(v_j,\alpha_j)\  \coloneqq \   \mathbb{I}[\ i\in\argmax\nolimits_{\ell \in[n]}(\phi_{\ell j}(v_{\ell j})+\alpha_{\ell j})\ ].
              \labelthis{$\alpha$-shifted mechanism}\label{def:alpha_shifted_mechanism}
            \end{align*}
            Where we randomly breaks ties if any (this is equivalent to the allocation rule of the VCG mechanism).
          \end{theorem}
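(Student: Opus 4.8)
The plan is to recognize the infinite-dimensional fair advertising problem as a linear program over allocation functions and to extract the $\alpha$-shifted rule from its Lagrangian dual. First I would observe that the objective $\mathrm{rev}_{\mathcal{M}}$ and every coverage term $q_{ij}(x_j)$ are \emph{linear} functionals of the allocation $x$, so the fairness constraints are linear inequalities and the region carved out by the single-item constraint \eqref{eq:primalSingleItem} together with $x_{ij}\ge 0$ is a convex (pointwise simplex) set. Hence maximizing revenue subject to the constraints is an infinite-dimensional LP, and the shifts $\alpha_{ij}$ should arise as dual variables associated with the fairness constraints.

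Next I would Lagrangian-relax \emph{only} the fairness constraints, keeping the single-item and nonnegativity constraints as the explicit domain. Introducing multipliers $\lambda_{ij}\ge 0$ for the lower bounds and $\mu_{ij}\ge 0$ for the upper bounds and substituting $q_{ij}(x_j)=\Pr\nolimits_{\cU}[j]\int x_{ij}(\phi_j)\,df_j(\phi_j)$, I would collect the coefficient of each $q_{ij}$. Because the coupling terms $\sum_t \ell_{it}\lambda_{it}$ and $\sum_t u_{it}\mu_{it}$ involve only a fixed advertiser $i$ (contributing a per-advertiser, type-independent shift), the Lagrangian collapses to
\begin{align*}
L = \sum_{i\in[n],\,j\in[m]}\Pr\nolimits_{\cU}[j]\int_{\supp(\phi_j)}\bigl(\phi_{ij}+\alpha_{ij}\bigr)\,x_{ij}(\phi_j)\,df_j(\phi_j), \quad \alpha_{ij} \coloneqq \lambda_{ij}-\mu_{ij}-\sum_{t}\ell_{it}\lambda_{it}+\sum_{t}u_{it}\mu_{it}.
\end{align*}
For each type $j$ and realization $\phi_j$, maximizing the integrand over $\{x_{\cdot j}\ge 0,\ \sum_i x_{ij}\le 1\}$ places all the mass on the advertiser with the largest shifted virtual value $\phi_{ij}+\alpha_{ij}$, which is exactly the $\alpha$-shifted mechanism of the statement. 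Here strict regularity makes each $\phi_{ij}(\cdot)$ strictly increasing, hence a bijection, and together with independence and continuity of $f_j$ this guarantees the argmax is attained uniquely outside a measure-zero set; so the rule is well defined almost everywhere, is monotone in each $v_i$, and is therefore truthful, which is precisely what lets me equate revenue with the virtual surplus $L$ (for $\lambda=\mu=0$).

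The crux is to produce multipliers for which this Lagrangian optimizer is both \emph{feasible} and optimal, i.e.\ to establish strong duality. I would verify a constraint qualification (Slater's condition) by exhibiting a strictly feasible allocation---for instance a randomized rule whose coverage proportions $q_{ij}/\sum_t q_{it}$ lie strictly inside $(\ell_{ij},u_{ij})$---which, combined with the linearity and convexity established above, yields a zero duality gap and attainment of an optimal dual pair $(\lambda^\star,\mu^\star)$. Defining $\alpha^\star_{ij}$ by the displayed formula, complementary slackness ($\lambda^\star_{ij}>0$ only when a lower bound is tight, $\mu^\star_{ij}>0$ only when an upper bound is tight) forces $x(\cdot,\alpha^\star)$ to be primal-feasible with Lagrangian value equal to the primal value, certifying its optimality.

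The two places I expect friction are the infinite-dimensionality of the duality step---carefully justifying strict feasibility and the absence of a duality gap rather than merely quoting finite-dimensional LP duality---and reconciling the ``always allocate'' form of the stated argmax rule with the fact that the pointwise maximizer of $L$ would leave the slot empty whenever every shifted virtual value is negative. Handling this cleanly likely requires treating the ad slot as always filled (or adjoining a null advertiser of virtual value $0$), so that the stated argmax rule coincides with the true pointwise maximizer and the reduction to a finite-dimensional search over $\alpha$ goes through.
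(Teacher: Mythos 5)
Your proposal is correct and takes essentially the same route as the paper's proof: Lagrangian duality over the fairness constraints, with the optimal multipliers interpreted as constant shifts $\alpha_{ij}=\lambda^\star_{ij}-\mu^\star_{ij}-\sum_t(\ell_{it}\lambda^\star_{it}-u_{it}\mu^\star_{it})$ of the virtual valuations (the paper's extra $\nfrac{1}{\Pr_{\cU}[j]}$ factor arises only because it attaches multipliers to the unnormalized coverages $\int x_{ij}\,df_j$ rather than to $q_{ij}$), after which the residual problem is recognized as unconstrained Myerson. The only differences are cosmetic --- the paper also dualizes the single-item constraint via a functional multiplier $\gamma_j(\cdot)$ instead of keeping the pointwise simplex explicit, and it merely asserts strong duality from linearity and feasibility where you invoke Slater --- while the friction you flag about the always-allocate argmax versus the pointwise Lagrangian maximizer (which would withhold the slot when every shifted virtual value is negative, fixable by a null advertiser at $0$) is a genuine gap that the paper's own proof silently elides as well.
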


          We present the proof of Theorem~\ref{thm:1} in Section~\ref{sec:proof1}.
          In the proof, we analyze the dual of the {infinite-dimensional fair advertising problem}.
          We reduce the dual problem to one lagrangian variable, by fixing the lagrangian variables corresponding lower bound~(\hyperref[eq:primalLowerbound]{5}) and upper bound~(\hyperref[eq:primalupper bound]{6}) constraints to their optimal values.
          The resulting problem turns out to be the dual of the unconstrained revenue maximizing problem, for which Myerson's mechanism is the optimal solution.
          We interpret the fixed lagrangian variables as shifting the original virtual valuations $\phi_{ij}$.
          It then follows that for some shift $\alpha\in\cR^{n\times m}$, the  $\alpha$-shifted mechanism~(\hyperref[def:alpha_shifted_mechanism]{8}) is the optimal solution to the { infinite-dimensional fair advertising problem}.

          Now, our task is reduced from finding an optimal allocation rule, to finding an $\alpha$ characterizing the optimal allocation rule.
          Towards this, let us define the revenue, $\mathrm{rev}_{\mathrm{shift}}\colon \cR^{n\times m}\to \cR$ and coverage $q_{ij}\colon \cR^{n\times m}\to [0,1]$ as functions of $\alpha$
          \begin{align*}
            &\hspace{-2mm}\mathrm{rev}_{\mathrm{shift}}(\alpha)\coloneqq\sum\limits_{\substack{i\in[n]\\
            j\in[m]}}\Pr_{\cU}[j]\hspace{-3mm}\int\limits_{\mathrm{supp}(f_{ij})}
            \hspace{-4mm}y f_{ij}(y)\hspace{-1.0mm}\prod\limits_{k\in [n]\backslash \{i\}}\hspace{-2.5mm}F_{kj}(y+
            \alpha_{ij}-\alpha_{kj})dy\labelthis{Revenue $\alpha$-shifted mechanism}\label{eq:revenue_as_function_of_alpha}\\
            &\hspace{4.0mm}q_{ij}(\alpha)\coloneqq\hspace{9.5mm}\Pr_{\cU}[j]\hspace{-3mm}\int\limits_{\mathrm{supp}(f_{ij})}
            \hspace{-4mm} f_{ij}(y)\hspace{-1.0mm}\prod\limits_{k\in [n]\backslash \{i\}}\hspace{-2.5mm}F_{kj}(y+\alpha_{ij}-\alpha_{kj})dy.
            \labelthis{Coverage $\alpha$-shifted mechanism}
            \label{def:coverage2}
          \end{align*}
          These follow by observing that (\hyperref[def:alpha_shifted_mechanism]{8}) selects the advertiser with the highest shifted virtual valuation, and then using this allocation rule in Eq.~(\hyperref[def:revenue1]{3}) and Eq.~(\hyperref[def:coverage1]{1}) respectively.
          Depending on the nature of the distribution, the gradients $\nfrac{\partial \mathrm{rev}_{\mathrm{shift}}(\alpha)}{\partial\alpha_i}$ and $\nfrac{\partial q_{ij}(\alpha)}{\partial\alpha_i}$ may not be monotone in $\alpha$ (e.g., consider the exponential distribution).
          Therefore, in general neither is ${\rm rev}_{\rm shift}(\cdot)$ a concave, nor is $q_{ij}(\cdot)$ a convex function of $\alpha$ (see Section~\ref{app:revenue_is_non_concave} for a concrete example).
          Hence, this optimization problem is non-convex both in its objective and in its constraints.
          We require further insights to solve the problem efficiently.

          Towards this, we observe that revenue is a concave function of $q$.
          Consider two optimal allocation rules obtaining coverages $q_1,\ q_2\in [0,1]^{n\times m}$ and revenues $R_1,\ R_2\in \cR$ respectively.
          If we use the first with probability $\gamma \in [0,1]$, we achieve a coverage $\gamma q_1+(1-\gamma)q_2$ with revenue $\gamma R_1 + (1-\gamma)R_2$.
          Therefore, the optimal allocation rule achieving $\gamma q_1+(1-\gamma)q_2$ has a revenue of at least $\gamma R_1 + (1-\gamma)R_2$.
          This shows that for optimal allocation rules revenue is a concave function of the coverage $q$.

          Let $\mathrm{rev}\colon  [0,1]^{(n-1)\times m}\to \cR$, be the maximum revenue of the platform as a function of coverage $q$.\footnote{We drop $q_{ij}$ for one $i\in[n]$ and each $j\in[m]$. This is crucial to calculate $\nabla {\rm rev}$ (see Remark~\ref{rem:fix_shift_for_inverting_jacobian}).
          By some abuse of notation we write ${\rm rev}(q)$ for $q\in \cR^{n\times m}$ instead of using $q\in \cR^{(n-1)\cdot m}$.}
          Consider the following two optimization problems.

          \vspace{2mm}

          \noindent{\bf(Optimal coverage problem).}
          Find the optimal $q \in[0,1]^{n\times m}$ for,
          \begin{align*}
            \label{problem:convex_problem}
            &\hspace{-5mm}\hspace{36mm}\max_{q\in[0,1]^{n}}\  \mathrm{rev}(q) \numberthis& \\
            &\hspace{-5mm}\hspace{40.5mm}\mathrm{s.t.}
            \label{eq:primalLowerbound2}
            \hspace{1mm}\hspace{2.5mm} q_{ij} \geq \ell_{ij}\sum_{t=1}^{m}q_{it}  \hspace{11mm} \forall \ j \in [m], i \in [n]\numberthis\\
            \label{eq:primalupper bound2}
            &\hspace{-5mm}\hspace{40.5mm}\hspace{6.0mm}\hspace{2.5mm} q_{ij}\leq u_{ij}\sum_{t=1}^{m}q_{it}  \hspace{10.6mm}
            \forall \ j \in[m], i \in [n] \numberthis\\
            \label{eq:primalSingleItem2}
            &\hspace{-5mm} \hspace{47.5mm}\hspace{-1.5mm}\hspace{2.5mm}\sum\limits_{i=1}^{n}q_{ij} \leq\ \Pr\nolimits_{\cU}[j]  \ \hspace{7.0mm} \forall \ j \in [m].\numberthis
          \end{align*}
          \noindent{\bf(Optimal shift problem).}
          Given the target coverage $\delta\hspace{-0.50mm}\in\hspace{-0.50mm}[0,1]^{n\times m}$, find the optimal $\alpha \hspace{-0.50mm}\in\hspace{-0.50mm}\cR^{n\times m}$ for
          \begin{align*}
            \label{problem:nonconvex_unconstrained}
            \min_{\alpha\in \cR^{n\times m}}\cL(\alpha)\coloneqq\|\delta- q(\alpha)\|_F^2.\numberthis
          \end{align*}

          \noindent Our next result relates the solution of the above two problems with the {infinite-dimensional fair advertising problem}.

          \begin{theorem}
            \label{thm:2New}
            Given a solution $q^\star\in[0,1]^{n\times m}$ to the optimal coverage problem, the solution $\alpha^\star$ to the optimal shift problem with $\delta=q^\star$, defines an optimal $\alpha$-shifted mechanism~(\hyperref[def:alpha_shifted_mechanism]{8}) for the infinite-dimensional fair advertising problem.
          \end{theorem}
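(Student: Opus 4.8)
The plan is to link the three programs through two bridges: the infinite-dimensional fair advertising problem is shown to have the same optimum as the optimal coverage problem, and an $\alpha$-shifted mechanism is shown to be revenue-optimal among all rules achieving its own coverage, so that realizing the coverage $q^\star$ through the optimal shift problem automatically recovers the optimal revenue. First I would prove the equivalence of values. Let $\opt$ be the optimal value of the infinite-dimensional fair advertising problem. Any feasible allocation rule $x$ has a coverage $q(x)$ that satisfies the fairness constraints verbatim, and its single-item constraint $\sum_i x_{ij}(\phi_j)\le 1$ integrates, weighted by $\Pr_{\cU}[j]\,df_j$, to $\sum_i q_{ij}(x)\le\Pr_{\cU}[j]$; hence $q(x)$ is feasible for the optimal coverage problem. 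Because $\mathrm{rev}(\cdot)$ is by definition the maximum revenue over rules of a prescribed coverage, evaluating it at the coverage of an optimal rule gives a feasible point of value at least $\opt$, while any feasible coverage $q$ is realized by a rule attaining $\mathrm{rev}(q)$ that is feasible for the infinite-dimensional problem and hence of value at most $\opt$. The two optima therefore coincide, and $\mathrm{rev}(q^\star)=\opt$, so $q^\star$ is the coverage of an optimal allocation rule.

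The crux is the claim that for every $\alpha$ the $\alpha$-shifted mechanism maximizes revenue among all single-item rules of coverage $q(\alpha)$, that is $\mathrm{rev}_{\mathrm{shift}}(\alpha)=\mathrm{rev}(q(\alpha))$. I would prove it by a cancellation argument. Pointwise the $\alpha$-shifted mechanism puts all mass on the largest $\phi_{ij}+\alpha_{ij}$, so it maximizes the shifted virtual surplus $\sum_{i,j}\Pr_{\cU}[j]\int(\phi_{ij}+\alpha_{ij})x_{ij}\,df_j$ over all single-item rules. For any competitor $x'$ with $q(x')=q(\alpha)$, the shift contribution equals $\sum_{i,j}\alpha_{ij}q_{ij}(x')=\sum_{i,j}\alpha_{ij}q_{ij}(\alpha)$, which is identical for $x'$ and for the $\alpha$-shifted rule; subtracting this common quantity from the surplus inequality leaves $\mathrm{rev}_{\mathrm{shift}}(\alpha)\ge\mathrm{rev}(x')$, establishing the claim.

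Combining the pieces finishes the proof. By Theorem~\ref{thm:1} an optimal rule is an $\alpha$-shifted mechanism, so the optimal coverage $q^\star$ lies in the image of $\alpha\mapsto q(\alpha)$; thus the optimal shift problem with $\delta=q^\star$ attains its minimum value $0$ at some $\alpha^\star$ with $q(\alpha^\star)=q^\star$. The resulting $\alpha^\star$-shifted mechanism satisfies the fairness constraints (since $q^\star$ does) and, by the key lemma, earns revenue $\mathrm{rev}_{\mathrm{shift}}(\alpha^\star)=\mathrm{rev}(q^\star)=\opt$; being feasible and of optimal value, it solves the infinite-dimensional fair advertising problem, which is exactly the assertion of Theorem~\ref{thm:2New}.

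The main obstacle is the key lemma, and in particular the subtlety that any shift $\alpha^\star$ realizing the target coverage --- not merely the specific shift constructed in the proof of Theorem~\ref{thm:1} --- must be revenue-optimal; the cancellation of the $\alpha$-terms under equal coverage is precisely what removes this difficulty. A secondary point is the achievability of a given optimal $q^\star$ when the optimal coverage is non-unique. Here I would lean on strict regularity: it forces each $\phi_{ij}$ to be strictly increasing, which together with the full-rank Jacobian $\partial q/\partial\alpha$ (noted after the optimal coverage problem) makes the map from the $(n-1)\cdot m$ shift-differences to the reduced coverages injective, pinning down $\alpha^\star$ up to an additive per-type constant and guaranteeing that $q^\star$ indeed lies in the image of $q(\cdot)$.
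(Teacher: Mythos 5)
Your architecture is genuinely different from the paper's, and in one respect more complete. The paper's proof reduces Theorem~\ref{thm:2New} entirely to a bijectivity statement: after normalizing $\alpha_{1j}=0$ for all $j$, the map $\alpha\mapsto q(\alpha)$ hits every target coverage (existence, argued via a greedy scheme that increases each $\alpha_{ij}$ with $q_{ij}(\alpha)<\delta_{ij}$, using monotonicity and continuity of $q$ in $\alpha$) and hits it exactly once (uniqueness, via the coordinate $(i',j')$ maximizing $|\alpha_{ij}-\beta_{ij}|$). What the paper leaves implicit is the identity $\mathrm{rev}_{\mathrm{shift}}(\alpha)=\mathrm{rev}(q(\alpha))$, i.e., that the $\alpha$-shifted rule is revenue-optimal among \emph{all} rules achieving its own coverage; your cancellation lemma supplies exactly this step and is the right argument. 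One repair is needed in its proof: the claim that the shifted rule maximizes shifted virtual surplus ``over all single-item rules'' fails pointwise whenever $\max_i(\phi_{ij}+\alpha_{ij})<0$, since withholding the item then does strictly better than the always-allocating rule~(\hyperref[def:alpha_shifted_mechanism]{8}). But because rule~(\hyperref[def:alpha_shifted_mechanism]{8}) always allocates, any competitor $x'$ with $q(x')=q(\alpha)$ satisfies $\sum_i q_{ij}(x')=\Pr_{\cU}[j]$, hence $\sum_i x'_{ij}(\phi_j)=1$ almost everywhere; among fully allocating rules the pointwise comparison is valid, so the lemma survives with this one-line fix.

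The genuine gap is the achievability of the \emph{given} $q^\star$. Theorem~\ref{thm:1} produces one optimal rule of shifted form, hence places one optimal coverage in the image of $q(\cdot)$; since $\mathrm{rev}$ is concave but not strictly concave, the optimal coverage problem can have a face of solutions, and the prescribed $q^\star$ need not be the coverage of the Theorem~\ref{thm:1} rule. Your patch for this case --- strict regularity plus the invertible Jacobian make $q$ injective, ``guaranteeing that $q^\star$ indeed lies in the image'' --- is a non sequitur: injectivity says nothing about surjectivity, and an everywhere-invertible Jacobian yields only local surjectivity. Indeed, the image of $q(\cdot)$ is constrained ($\sum_i q_{ij}(\alpha)=\Pr_{\cU}[j]$ for every $\alpha$, so the paper's own ``for all $\delta\in[0,1]^{n\times m}$'' must be read on the appropriate simplices), and when the virtual-value densities have overlapping full support one has $q_{ij}(\alpha)>0$ for every finite $\alpha$, so a boundary optimum with $q^\star_{ij}=0$ is not attained at all; if $q^\star$ were outside the image, the optimal shift problem would have a positive minimum and the resulting mechanism would not have coverage $q^\star$, breaking the theorem's conclusion. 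This is exactly the step the paper's proof addresses head-on with its existence argument. To close your proof you would need either that direct monotonicity-based surjectivity argument, or a properness/degree (invariance-of-domain) argument showing $q$ maps $\{\alpha:\alpha_{1j}=0\ \forall j\}$ onto the interior of the feasible coverages; Theorem~\ref{thm:1} plus injectivity alone does not do it.
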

          \begin{proof}
            For any $j\in [m]$ adding the all $1$ vector, $1_{n}$, to $\alpha_j$ does not change the allocation rule
            in (\hyperref[def:alpha_shifted_mechanism]{8}).
            Thus, it suffices to show that for all $\delta\in [0,1]^{n\times m}$, there is a unique $\alpha$ with $\alpha_{1j}=0\ \forall \ j\in[m]$, such that $q(\alpha)=\delta$.

            We can show that for all $\delta\in[0,1]^{n\times m}$, there is at-least one $\alpha\in\cR^{n\times m}$ such that $q(\alpha)=\delta$.
            In fact, the greedy algorithm which increases all $\alpha_{ij}$, where $q_{ij}(\alpha)<\delta_{ij}$ and $i\neq1$, will find the required $\alpha$.

            To prove it is unique consider distinct $\alpha,\beta\in\cR^{n\times m}$ such that $\alpha_{1j}=\beta_{1j}=0$.
            We can show that $q(\alpha)\neq q(\beta)$.
            In particular, that $q_{i^\prime j^\prime}(\alpha)\neq q_{i^\prime j^\prime}(\beta)$ for $(i^\prime,j^\prime)= \argmax_{i\in[n],\ j\in[m]}|\alpha_{ij}-\beta_{ij}|$.
            Now, the uniqueness of $\alpha$ follows by contradiction.
          \end{proof}

          The above theorem allows us to find the optimal $\alpha$ by solving the {\em optimal coverage} and {\em optimal shift} problems.
          First, let us consider the {\em optimal coverage problem}.
          We already know that its objective is concave.
          We can further observe that its constraints are linear in $q$, and
          in particular, they define a constraint-polytope $\mathcal{Q}\subseteq [0,1]^{n\times m}$.
          Therefore, it is a convex program, and one approach to solve it is to use gradient-based algorithms.

          The problem is that we do not have access to $\nabla {\rm rev}$.
          The key idea is that if we let $\alpha=q^{-1}(\delta)$, then we can calculate $\nabla {\rm rev}(\delta)$ by solving the following linear-system,
          \begin{align*}
            &\hspace{36mm}(J_q(\alpha))^\top \nabla \mathrm{rev}(\delta) = \nabla \mathrm{rev}_{\mathrm{shift}}(\alpha),\labelthis{Gradient Oracle}\label{eq:linearSystem}
          \end{align*}
          where $J_q(\alpha)$ is the Jacobian of $\mathrm{vec}(q(\alpha))\in \cR^{(n-1)m}$ \footnote{${\rm vec(\cdot)}$ represents the vectorization operator.}, with respect to $\mathrm{vec}(\alpha)\in \cR^{(n-1)m}$.
          It turns out that $J_q(\alpha)$ is invertible for all $\alpha\in\cR^{n\times m}$(see Section~\ref{sec:proof_overview_calculating_the_gradient}), and therefore, the above linear-system has an exact solution.

          Now, let us consider the {\em optimal shift problem}.
          Its objective is non-convex (see Figure~\ref{fig:8b}).
          $\nabla \cL(\alpha)$ is a linear combination of $\nabla q_{ij}(\alpha)$ for all $i\in [n]$ and $j\in[m]$.
          Since $J_q(\alpha)$ is invertible, its rows $\{\nabla q_{ij}(\alpha)\}$, are linearly independent, and the gradient is never zero unless we are at the global minimum where $\alpha=q^{-1}(\delta)$.
          This guarantees that the objective does not have a saddle-point or local-maximum, and that any local-minimum is a global minimum.
          Using this we can develop an efficient algorithm to solve the {\em optimal coverage problem} (Lemma~\ref{thm:3}).

          This brings us to our main algorithmic result, which is an algorithm to find the optimal allocation rule for the {\em infinite-dimensional fair advertising problem}.
          \begin{theorem}\label{thm:2}
            \textbf{(An algorithm to solve the infinite-dimensional fair advertising problem).}
            There is an algorithm (Algorithm \ref{Algorithm1}) which outputs $\alpha\in \cR^{n\times m}$ $\text{such that
            if}$ assumptions~(\hyperref[asmp:eta_coverage]{17}), (\hyperref[asmp:distributed_distribution]{18}), (\hyperref[asmp:lipschitz_distribution]{19}),  and (\hyperref[asmp:bounded_bid]{20}) are satisfied,
            the $\alpha$-shifted mechanism~(\hyperref[def:alpha_shifted_mechanism]{8}) achieves a revenue $\epsilon$-close to the optimal for the infinite-dimensional fair advertising problem in
            \begin{align*}
              \widetilde{O}\bigg(\frac{n^7\log m}{\epsilon^2} \frac{(\mu_{\max}\rho)^2}{(\mu_{\min}\eta)^4}
              (L+n^2\mu_{\max}^2)\bigg) \ \mathrm{steps.}
            \end{align*}
            Where the arithmetic calculations in each step are bounded by calculating
            $\nabla \mathrm{rev}$ once
            and $\widetilde{O}$ hides $\log$ factors in $n,\ \rho,\ \eta,\ \mu_{\max}, \ \nfrac{1}{\epsilon}$ and $\nfrac{1}{\mu_{\min}}$.
          \end{theorem}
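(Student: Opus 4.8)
The plan is to turn Theorem~\ref{thm:2New} into an algorithm built from two nested gradient-based loops, and then convert the resulting coverage accuracy into a revenue guarantee. By Theorem~\ref{thm:2New} it suffices to (i) solve the convex \emph{optimal coverage problem} to obtain a near-optimal $q^\star\in\cQ$, and (ii) solve the \emph{optimal shift problem} with $\delta=q^\star$ to recover the shift $\alpha^\star$ that the $\alpha$-shifted mechanism~(\hyperref[def:alpha_shifted_mechanism]{8}) uses. I would run projected gradient ascent for the outer problem, invoking the \hyperref[eq:linearSystem]{gradient oracle} — which itself calls an inner gradient-descent solver for the optimal shift problem — at each step, and then finish with one more call to the inner solver at $\delta=q^\star$.

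For the \textbf{outer loop}, the objective $\mathrm{rev}(\cdot)$ is concave and the feasible set is the polytope $\cQ\subseteq[0,1]^{n\times m}$, so the optimal coverage problem is convex and amenable to a projected gradient method; projection onto $\cQ$ is cheap since its constraints are linear. I would bound the smoothness/Lipschitz constant of $\mathrm{rev}$ and the diameter $D=O(\sqrt{nm})$ of $\cQ$, which together with the inexactness of the oracle (below) fix the outer iteration count. The only subtlety is that $\nabla\mathrm{rev}$ is not available in closed form and must be produced by the oracle.

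The \textbf{gradient oracle} at a point $\delta$ proceeds in two stages. First, invert the coverage map by solving the optimal shift problem: although $\cL$ is non-convex, the discussion preceding the theorem shows it has no saddle points or local maxima and a unique global minimizer $\alpha=q^{-1}(\delta)$. To get a quantitative rate I would establish (a) smoothness of $\cL$ from the Lipschitz-density and bounded-bid assumptions~(\hyperref[asmp:lipschitz_distribution]{19}) and~(\hyperref[asmp:bounded_bid]{20}), and (b) a Polyak--{\L}ojasiewicz-type inequality of the form $\|\nabla\cL(\alpha)\|^2 \gtrsim \sigma_{\min}(J_q(\alpha))^2\,\cL(\alpha)$, where a uniform lower bound $\sigma_{\min}(J_q)\gtrsim \mu_{\min}\eta$ follows from invertibility of $J_q$ together with the coverage lower bound~(\hyperref[asmp:eta_coverage]{17}); this yields linear convergence of the inner gradient descent. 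Second, with $\alpha$ in hand I would solve the \hyperref[eq:linearSystem]{gradient-oracle linear system}; since $J_q(\alpha)$ is invertible the system is well posed, and the condition number of $J_q$ (controlled by $\mu_{\max}$, $\mu_{\min}$ and $\eta$) governs how inner error is amplified into error in $\nabla\mathrm{rev}(\delta)$.

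Finally, I would assemble the \textbf{error analysis}. Because each oracle call returns only an approximate gradient, the outer loop is an \emph{inexact} projected gradient ascent; invoking inexact-gradient guarantees, it suffices to drive the inner tolerance polynomially below the target $\epsilon$ so that accumulated error does not spoil the target accuracy, and, combined with the smoothness constants and diameter, this fixes the $\widetilde{O}(1/\epsilon^2)$ dependence of the outer loop. After the outer loop returns $q^\star$, one last inner solve with $\delta=q^\star$ yields $\alpha^\star$, and Lipschitz continuity of $\mathrm{rev}_{\mathrm{shift}}$ (again via~(\hyperref[asmp:lipschitz_distribution]{19}) and~(\hyperref[asmp:bounded_bid]{20})) converts the Frobenius-norm coverage error into the claimed $\epsilon$ revenue gap. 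Counting the total number of gradient steps — each dominated by a single gradient evaluation together with one solve of the gradient-oracle linear system — and substituting the smoothness and condition-number bounds, which contribute the $n^7\log m$, $(\mu_{\max}\rho)^2/(\mu_{\min}\eta)^4$, and $(L+n^2\mu_{\max}^2)$ factors, produces the stated bound. I expect the \textbf{main obstacle} to be the coupled error propagation between the two loops: bounding how the approximate inner minimizer and the ill-conditioning of $J_q$ corrupt the outer gradient while still showing that inexact projected gradient ascent converges, and, inseparable from this, pinning down the uniform bound $\sigma_{\min}(J_q)\gtrsim\mu_{\min}\eta$ that both drives the inner PL inequality and sets the $\mu_{\min},\eta$ dependence in the final rate.
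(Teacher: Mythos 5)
Your proposal matches the paper's proof essentially step for step: an outer projected gradient ascent on the concave coverage objective over $\cQ$, an inner gradient descent on $\cL$ made linearly convergent by exactly the gradient lower bound you call a PL inequality (Lemma~\ref{lem:lower_bounding_gradient_of_loss}) together with the $O(n(L+n^2\mu_{\max}^2))$-smoothness of $\nabla\cL$ (Lemma~\ref{lem:revenue_is_Lipschitz_in_shift}), per-step inexactness $\xi = O(\epsilon^2)$ absorbed into the standard projected-gradient recurrence, and Lipschitzness of $\mathrm{rev}$ in the coverage (Lemma~\ref{lem:revenue_is_Lipschitz}, giving $G = n^2\mu_{\max}\rho/(\mu_{\min}\eta)$) to convert coverage accuracy into the revenue gap. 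One caution on the point you yourself flag as the main obstacle: the paper explicitly notes that the quantitative bound does \emph{not} follow from invertibility of $J_q(\alpha)$ alone (linear combinations of linearly independent vectors can be arbitrarily small even with large weights), and instead derives it from the entrywise bound $|\partial q_{ij}/\partial \alpha_{sj}| \geq \eta\mu_{\min}$ plus a diagonal-dominance and pigeonhole argument (Lemmas~\ref{lem:lower_bound_gradient_of_coverage} and~\ref{lem:lower_bounding_linear_combination_2}), which is the actual substance behind your $\sigma_{\min}(J_q) \gtrsim \mu_{\min}\eta$ claim.
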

          \noindent Roughly, the above algorithm has a convergence rate of $\widetilde{O}(\nfrac{1}{\epsilon^2})$, under the assumptions which we list below.
          \begin{center}
            \fbox{
            \minipage{1.0\linewidth}
            Assumptions \hrule
            \vspace{2mm}
            \label{assumptions}
            $\mathrm{For\ all\ } i \in [n],\ j\in[m],  \mathrm{\ and \ } y_1,y_2\in \mathrm{supp}(f_{ij})$
            \vspace{1mm}

            \begin{enumerate}[noitemsep]
              \item \textcolor{white}{.}

              \vspace{-15mm}

              \begin{align*}
                &\hspace{-70mm}q_{ij} > \eta \hspace{70mm}\labelthis{$\eta$-coverage}\label{asmp:eta_coverage}
              \end{align*}
              \item\textcolor{white}{.}

              \vspace{-15mm}

              \begin{align*}
                \hspace{-33mm}\mu_{\min}\leq f_{ij}(y_1)\leq\mu_{\max} \hspace{33mm}\label{asmp:distributed_distribution}\labelthis{Distributed distribution}
              \end{align*}
              \item\textcolor{white}{.}

              \vspace{-15mm}

              \begin{align*}
                \hspace{-28.5mm}|f_{ij}(y_1)-f_{ij}(y_2)|< L|y_1-y_2|\hspace{+28.5mm}\label{asmp:lipschitz_distribution}\labelthis{Lipschitz distribution}
              \end{align*}
              \item \textcolor{white}{.}

              \vspace{-15mm}

              \begin{align*}
                \hspace{-32.0mm}\big|\eE[\phi_{ij}]\big| = \bigg|\ \int\limits_{\mathrm{supp}(f_{ij})}\hspace{-4mm} zf_{ij}(z)dz\ \bigg| < \rho.\labelthis{Bounded bid}\hspace{+34.5mm}\label{asmp:bounded_bid}
              \end{align*}
            \end{enumerate}
            \endminipage
            }
          \end{center}
          Assumption~(\hyperref[asmp:eta_coverage]{17}) guarantees that all advertisers have at least an $\eta$ probability of winning on every type,
          assumption~(\hyperref[asmp:distributed_distribution]{18}) places lower and upper bounds on the probability density functions of the $\phi_{ij}$,
          assumption~(\hyperref[asmp:lipschitz_distribution]{19}) guarantees that the probability density functions of the $\phi_{ij}$ are $L$-Lipschitz continuous,
          and assumption~(\hyperref[asmp:bounded_bid]{20}) assumes that the expected $\phi_{ij}$ is bounded.

          We expect Assumptions~(\hyperref[asmp:eta_coverage]{17}) and (\hyperref[asmp:bounded_bid]{20}) to hold in any real-world setting.
          We can drop the lower bound in Assumption~(\hyperref[asmp:distributed_distribution]{18}) by introducing ``jumps" in $\alpha$ to avoid ranges where the measure  of bids is small.
          Removing assumption~(\hyperref[asmp:lipschitz_distribution]{19}) would be an interesting direction for future work.
          \begin{remark}
            We inherit the assumption of \textit{independent} and \textit{regular distributions} from Myerson.
            In addition, we require the the distributions of valuations are \textit{strictly regular} to guarantee that ties between advertisers happen with $0$ probability.
            We can drop this assumption by incorporating a randomized tie-breaking rule which retains fairness.
            The above allocation rule is monotone and allocates the ad spot to the bidder with the highest shifted valuation $\phi_{ij}+\alpha_{ij}$ for a given user.
            Thus, it defines a unique truthful mechanism and corresponding payment rule.
          \end{remark}
          \section{Our Algorithm}
          \label{sec:algorithm}
          \begin{algorithm}[t!]
            \caption{\footnotesize Algorithm1($\cQ,G,L,\eta, \mu_{\max},\mu_{\min}, \epsilon$)}\label{Algorithm1}
            \textbf{Input:}
            Constraint polytope $\cQ\subseteq[0,1]^{n\times m}$,
            Lipschitz constant $G>0$ of $\mathrm{rev}(\cdot)$,
            Lipschitz constant $L>0$ of $f_{ij}(\cdot)$,
            minimum coverage $\eta>0$,
            lower and upper bounds, $\mu_{\min}$ and $\mu_{\max}$ of $f_{ij}(\cdot)$,
            and a constant $\epsilon>0$.\\
            \hspace{-8mm}\textbf{Output:} Shifts $\alpha\in\cR^{n\times m}$ for the optimal mechanism.

            \begin{algorithmic}[1]
              \STATE Initialize $\gamma\coloneqq\nfrac{\epsilon}{2G^2},\hspace{0.5mm} \xi\coloneqq (G\gamma)^2,\hspace{0.5mm} T=(\nfrac{\sqrt 2G}{\epsilon})^2$
              \STATE Compute $q_1\coloneqq\mathrm{proj}_{\cQ}(q(\mathit{0}_{n\times m}))$
              \STATE Compute $\alpha_{1}\coloneqq $ Algorithm2$(q_{t},\alpha_{t},\xi,L,\eta,\mu_{\max},\mu_{\min})$
              \FOR{t = 1,2,\dots,T}
              \STATE Compute $J_q(\alpha_{t})$
              \STATE Compute $\mathrm{rev}(q_{t})$ from $J_q(\alpha_{t})^\top\nabla \mathrm{rev}(q_{t})\coloneqq\nabla \mathrm{rev}_{\mathrm{shift}}(\alpha_{t})$
              \STATE Update $q_{t+1}\ \coloneqq\ \mathrm{proj}_{\cQ}(q_{t}+\gamma \nabla \mathrm{rev}(q_{t}))$
              \STATE  Update $\alpha_{t+1}\coloneqq $ Algorithm2$(q_{t},\alpha_{t},\xi,L,\eta,\mu_{\max},\mu_{\min})$
              \ENDFOR
              \STATE \bf return $\alpha$
            \end{algorithmic}
          \end{algorithm}
          Algorithm~\ref{Algorithm1} performs a projected gradient descent to find the optimal $q^\star\in\cQ$~\eqref{problem:convex_problem}.
          It starts with an initial coverage $q_1\in\cQ$, and the corresponding shift $\alpha_1=q^{-1}(q_1)$.
          At step $k$, it calculates the gradient $\nabla{\rm rev}(q_{k})$, by solving the linear-system in Eq.~(\hyperref[eq:linearSystem]{16}).
          To solve this linear-system, we need to calculate $J_{q}(\alpha_k)^\top$ and $\nabla {\rm rev}_{\rm shift}(\alpha_k)$.
          This can be done in $O(n^2m)$ steps if we have $\alpha_k=q^{-1}(q_k)$ (see Remark~\ref{rem:sparse_jacobian_2}).
          Therefore, the algorithm requires a ``good" approximation of $\alpha$ at each step, it maintains this by ``updating" the previous approximation $\alpha_{k-1}$
          using Algorithm~\ref{Algorithm2} to approximately solve the optimal-shift problem~\eqref{problem:nonconvex_unconstrained}.

          After calculating $\nabla{\rm rev}(q_{k})$, it takes a gradient step and projects the current iterate on $\cQ$ in $O((nm)^{\omega})$ time (Section~\ref{sec:proof_overview_projection_on_q}), where $\omega$ is the fast matrix multiplication coefficient.
          It takes roughly $O(\nfrac{1}{\epsilon^2})$ steps to obtain an $\epsilon$-accurate solution, and then returns its current shift $\alpha\approx\alpha^\star.$
          We can bound the error introduced by the approximation of $\alpha_k$ at each step by ensuring that Algorithm~\ref{Algorithm2} has sufficient accuracy.
          In particular, if it is $O(\epsilon^2)$ accurate we can prove that Algorithm~\ref{Algorithm1} converges in $\widetilde{O}(\nfrac{1}{\epsilon^2})$ steps.

          Next, we give the details of the projecting on $\cQ$ and calculating the gradient $\nabla {\rm rev}$.

          \subsection{Calculating and Bounding $\nabla {\rm rev(\cdot)}$}\label{sec:proof_overview_calculating_the_gradient}
          We fix the shift of one advertiser $i\in[n]$ for each type $j\in[m]$.
          Let $J_{q}(\alpha)$ be the Jacobian of the vectorized coverage, $\mathrm{vec}(q(\alpha))\in \cR^{(n-1)m}$, with respect to the vectorized shift, $\mathrm{vec}(\alpha)\in \cR^{(n-1)m}$.
          Then, $J_{q}(\alpha)$ is a $(n-1)m\times (n-1)m$ matrix
          \begin{align*}
            \hspace{5mm}J_q(\alpha)=\begin{bmatrix}
            \frac{\partial q_{11}(\alpha)}{\alpha_{11}} & \dots  & \frac{\partial q_{11}(\alpha)}{\alpha_{(n-1)1}} &\dots & \frac{\partial q_{11}(\alpha)}{\alpha_{(n-1)m}} \\
            \frac{\partial q_{21}(\alpha)}{\alpha_{11}} & \dots  & \frac{\partial q_{21}(\alpha)}{\alpha_{(n-1)1}} &\dots & \frac{\partial q_{21}(\alpha)}{\alpha_{(n-1)m}} \\
            \vdots & \vdots & \ddots & \vdots & \vdots \\
            \frac{\partial q_{(n-1)1}(\alpha)}{\alpha_{11}} & \dots  & \frac{\partial q_{(n-1)1}(\alpha)}{\alpha_{(n-1)1}} &\dots & \frac{\partial q_{(n-1)1}(\alpha)}{\alpha_{(n-1)m}}
          \end{bmatrix}.
          \end{align*}
          To obtain $\nabla \mathrm{rev}(q)$, we use the fact that $J_{q}(\alpha)$ is always invertible (Lemma~\ref{lem:Jacobian_is_invertible}).
          Given $\alpha = q^{-1}(\delta)$ for some $\delta\in[0,1]^{n\times m}$, we can calculate $\nabla {\rm rev}(\delta)$ by solving
          \begin{align*}
            \hspace{-3mm}\forall\ i\in[n],\ j \in [m],\ \frac{\partial \mathrm{rev}_{\mathrm{shift}}(\alpha)}{\partial \alpha_{ij}}&= \sum_{k\in[n]}\frac{\partial \mathrm{rev}(\alpha)}{\partial q_{kj}}\frac{\partial q_{kj}}{\alpha_{ij}}.
          \end{align*}
          Or equivalently by solving the linear-system in Eq. (\hyperref[eq:linearSystem]{16}).
          \begin{remark}
            \label{rem:fix_shift_for_inverting_jacobian}
            $J_q(\alpha)$ is invertible iff we fix the shift $\alpha_{ij}$ of one advertiser $i\in[n]$ for each type $j\in[m]$.
            Intuitively, if we increase the $\alpha_{ij}$ for all $i\in[n]$ and $j\in[m]$ by the same amount, then $q$ remains invariant.
            This implies that each row of $J_q(\alpha)$ has 0 sum, or that $J_q(\alpha)$ is not invertible.
          \end{remark}
          \begin{lemma}
            \textbf{(Jacobian is invertible).}
            \label{lem:Jacobian_is_invertible}
                For all $\alpha\in \cR^{(n-1)\times m}$, if
                all advertisers have non-zero coverage for all types $j\in [m]$,
                then $J_q(\alpha)\in \cR^{(n-1)\cdot m\times (n-1)\cdot m}$ is invertible.
          \end{lemma}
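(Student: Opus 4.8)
The plan is to exploit two structural facts about the coverage map: the user types decouple, and within each type the relevant Jacobian block carries the sign pattern and zero row sums of a (directed, weighted) graph Laplacian, whose grounded submatrix is nonsingular.

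First I would observe from~(\hyperref[def:coverage2]{11}) that $q_{ij}(\alpha)$ depends on $\alpha$ only through the within-type differences $\{\alpha_{ij}-\alpha_{kj}\}_{k}$, so $\partial q_{ij}/\partial \alpha_{kj'}=0$ whenever $j'\neq j$. Hence $J_q(\alpha)$ is block diagonal with $m$ blocks $B_1,\dots,B_m$, each of size $(n-1)\times(n-1)$, and it suffices to prove that every $B_j$ is invertible. Next I would differentiate~(\hyperref[def:coverage2]{11}) to read off the entries of the full $n\times n$ within-type matrix $\widehat{M}_j$ (before dropping advertiser $1$): the diagonal $\partial q_{ij}/\partial\alpha_{ij}$ is a strictly positive integral, while each off-diagonal $\partial q_{ij}/\partial \alpha_{kj}$ is $\le 0$, being the negative of an integral of $f_{ij}$ against $f_{kj}$ and a product of other CDFs. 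Moreover, the translation invariance noted in Remark~\ref{rem:fix_shift_for_inverting_jacobian} (shifting all of $\alpha_{\cdot j}$ by a common constant leaves $q$ fixed) gives $\sum_{k}\partial q_{ij}/\partial\alpha_{kj}=0$, so $\widehat{M}_j$ has zero row sums and kernel vector $\mathbf{1}_n$, exactly the Laplacian sign pattern.

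The block $B_j$ is obtained from $\widehat{M}_j$ by deleting the row and column of advertiser $1$. Using the zero row sums, each row sum of $B_j$ equals $-\,\partial q_{ij}/\partial\alpha_{1j}\ge 0$, so $B_j$ is weakly diagonally dominant with the M-matrix sign pattern, and is strictly dominant in every row whose advertiser competes with advertiser $1$. I would then invoke the grounded-Laplacian (chained diagonal dominance) criterion: such a matrix is nonsingular provided that from every index there is a directed path, in the competition graph $G_j$ whose edges are the nonzero off-diagonal entries, to an index where dominance is strict. Strong connectivity of $G_j$ supplies this, since any vertex has a directed path to advertiser $1$ and the last edge before $1$ lands on a strictly dominant row; invertibility of $J_q(\alpha)$ then follows from invertibility of the blocks.

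The main obstacle is the connectivity input. I would argue that the hypothesis that \emph{every advertiser has nonzero coverage on type $j$} forces $G_j$ to be strongly connected: if the advertisers split into two groups with no competing pair across the split, then one group's shifted virtual values would dominate the other's throughout the relevant region, driving the dominated group's coverage to $0$ and contradicting the hypothesis. Making this precise — relating a missing edge (a vanishing boundary integral in the derivative of~(\hyperref[def:coverage2]{11})) to vanishing coverage — is the delicate step; once $G_j$ is known to be strongly connected, the remaining diagonal-dominance bookkeeping is routine.
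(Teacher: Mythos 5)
Your skeleton coincides with the paper's own proof: the same block-diagonal decomposition across types (the sparsity identity \eqref{eq:Jacobian_is_sparse}), the same translation-invariance argument giving zero row sums of the full $n\times n$ within-type matrix (Eq.~\eqref{eq:coverage_as_probability_1}), and the same sign pattern \eqref{eq:negative_gradient}, reducing everything to diagonal dominance of each $(n-1)\times(n-1)$ block. Where you genuinely diverge is the final criterion. The paper claims \emph{strict} dominance in \emph{every} row: it asserts that because the deleted advertiser $n$ has non-zero coverage, $\partial q_{ij}/\partial\alpha_{nj}\neq 0$ for \emph{all} $i\in[n-1]$, and then invokes Levy--Desplanques for strictly diagonally dominant matrices. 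You settle for weak dominance plus the weakly-chained / grounded-Laplacian criterion, needing only a directed path in the competition graph $G_j$ from every row to some strictly dominant one, which you extract from strong connectivity. Your route is the more robust one: the paper's argument implicitly requires $G_j$ to contain a star centered at the deleted advertiser (an edge from $n$ to every other $i$), which is a strictly stronger structural demand than your strong connectivity. The price is the extra combinatorial machinery, which under the paper's standing regularity assumptions buys nothing, since there all cross-integrals are positive and every row is strictly dominant anyway.

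The step you flag as delicate --- deducing connectivity of $G_j$ from non-vanishing coverages --- is a genuine gap, and it is precisely the step the paper itself asserts without proof. Worse, neither implication holds from the lemma's literal hypothesis. Take $n=2$, shifts aligned to zero, $f_{1j}$ supported on $[0,1]\cup[4,5]$ and $f_{2j}$ supported on $[2,3]$: advertiser $1$ wins whenever its value lands in $[4,5]$ and advertiser $2$ wins whenever advertiser $1$'s value lands in $[0,1]$, so both coverages are strictly positive; yet $f_{1j}(y)\,f_{2j}(y+\alpha_{1j}-\alpha_{2j})\equiv 0$, the single derivative integral vanishes, the $1\times1$ block is $[\,0\,]$, and $J_q(\alpha)$ is singular. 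So ``non-zero coverage'' alone implies neither your connectivity claim nor the paper's per-row claim; a support-overlap hypothesis (e.g., densities bounded below on a common support, in the spirit of assumption~(\hyperref[asmp:distributed_distribution]{18}), which is how the lemma is used downstream) is genuinely needed. Once added, your delicate step becomes trivial --- every cross-integral $\int f_{ij}f_{kj}\prod_{\ell\neq i,k}F_{\ell j}\,dy$ is positive, $G_j$ is complete, every row is strictly dominant --- and your argument collapses back to the paper's shorter one. In summary: correct architecture and a more general closing lemma, but the one step you deferred is where the lemma's entire content lives, and as stated it cannot be closed without strengthening the hypothesis; the paper's proof shares exactly this defect.
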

          \begin{proof}
            The coverage remains invariant if the bids of all advertisers are uniformly shifted for any given user type $j$. Therefore, for all $j\in [m]$ we have
            \begin{align}
              \sum_{t\in [n]} \frac{\partial q_{ij}}{\partial\alpha_{tj}} = 0.\label{eq:coverage_as_probability_1}
            \end{align}
            %`
            Since, increasing the shift $\alpha_{ij}$, does not increase the coverage $q_{kj}$ for any $k\neq i$, we have that
            \begin{align}
              \frac{\partial q_{kj}}{\partial\alpha_{ij}} \leq 0\label{eq:negative_gradient} {\rm \ and \ }\frac{\partial q_{ij}}{\partial\alpha_{ij}} \geq 0.
            \end{align}
            Now, from Equation~\eqref{eq:coverage_as_probability_1} we have
            \begin{align}
              \label{eq:coverageAsProbability}
              &\hspace{-4mm}\forall \ i \in [n],j \in [m], \ \frac{\partial q_{ij}}{\partial\alpha_{ij}} = \sum_{t\in [n]\backslash \{i\}} \bigg|\frac{\partial q_{ij}}{\partial\alpha_{tj}}\bigg|.
            \end{align}
            Further since the $n$-th advertiser has non-zero coverage, i.e., there is non-zero probability that advertiser $n$ bids higher than all other advertisers, changing $\alpha_{nj}$ must affect all other advertisers.
            In other words, for all $i\in[n-1]$ $\frac{\partial q_{ij}}{\partial\alpha_{nj}}\neq 0$.
            Using this we have,
            \begin{align}
              \forall \ i \in [n],j \in [m],  \ \frac{\partial q_{ij}}{\partial\alpha_{ij}} > \sum_{t\in [n-1]\backslash \{i\}} \bigg|\frac{\partial q_{ij}}{\partial\alpha_{tj}}\bigg|.
            \end{align}
            By observing that $q_{ij}$, on user type $j$, is independent of the $\alpha_{st}$, of any user type $t$ such that $t\neq j$, i.e.,
            \begin{align}
              \label{eq:Jacobian_is_sparse}
              \forall \ i,s \in [n],\ j,t \in [m],\ s.t.\ j\neq t,  \ \frac{\partial q_{ij}}{\partial\alpha_{st}} = 0,
            \end{align}
            and using Equation~\eqref{eq:coverageAsProbability}, we get that the Jacobian, $J_q(\alpha)$ is strictly diagonally dominant.
            Now, by the properties of strictly dominant matrices it is invertible.
            %`
          \end{proof}
          \begin{remark}\label{rem:sparse_jacobian_2}
            For all $i,s\in[n]$ such that $i\neq s$, $q_{ij}$ is independent of $\alpha_{st}$ \eqref{eq:Jacobian_is_sparse}.
            Therefore, that Jacobian $J_q(\alpha)$ is sparse.
            and the linear-system in Eq.~(\hyperref[eq:linearSystem]{16}) can be solved in $O(n^\omega m)$ steps, where $\omega$ is the fast matrix multiplication coefficient.
          \end{remark}

          \subsection{Projection on the Constraint Polytope ($\cQ$)}\label{sec:proof_overview_projection_on_q}
          Given any point $q\in [0,1]^{n\times m}$, by determining the constraints it violates, we can express the projection on the constraint polytope $\cQ$, as a quadratic program with equality constraints.
          Using this we can construct a projection oracle $\mathrm{proj}_{\cQ}$, which given a point $q\in [0,1]^{n\times m}$ projects it onto $\cQ$ in $O((nm)^\omega)$ arithmetic operations, where $\omega$ is the fast matrix multiplication coefficient.

          \section{Empirical Study}
          \begin{figure*}[t!]
            \hspace{-4mm}
            \subfigure[
            \footnotesize \label{fig:biasMeasure}
            \textit{Implicit Fairness of Keyword Pairs.}
            The x-axis depicts the fairness of the algorithm, measured by $(\ell,u)$-fairness constraints.
            We report number of auctions which satisfy each fairness level.
            We observe that 3282 auctions do not satisfy $\ell=0.3$ fairness constraint.
            ]
            {
            \vspace{-20mm}
            \includegraphics[trim=0.24cm 0.0cm 0.0cm 1.25cm,clip,height=4.8cm]{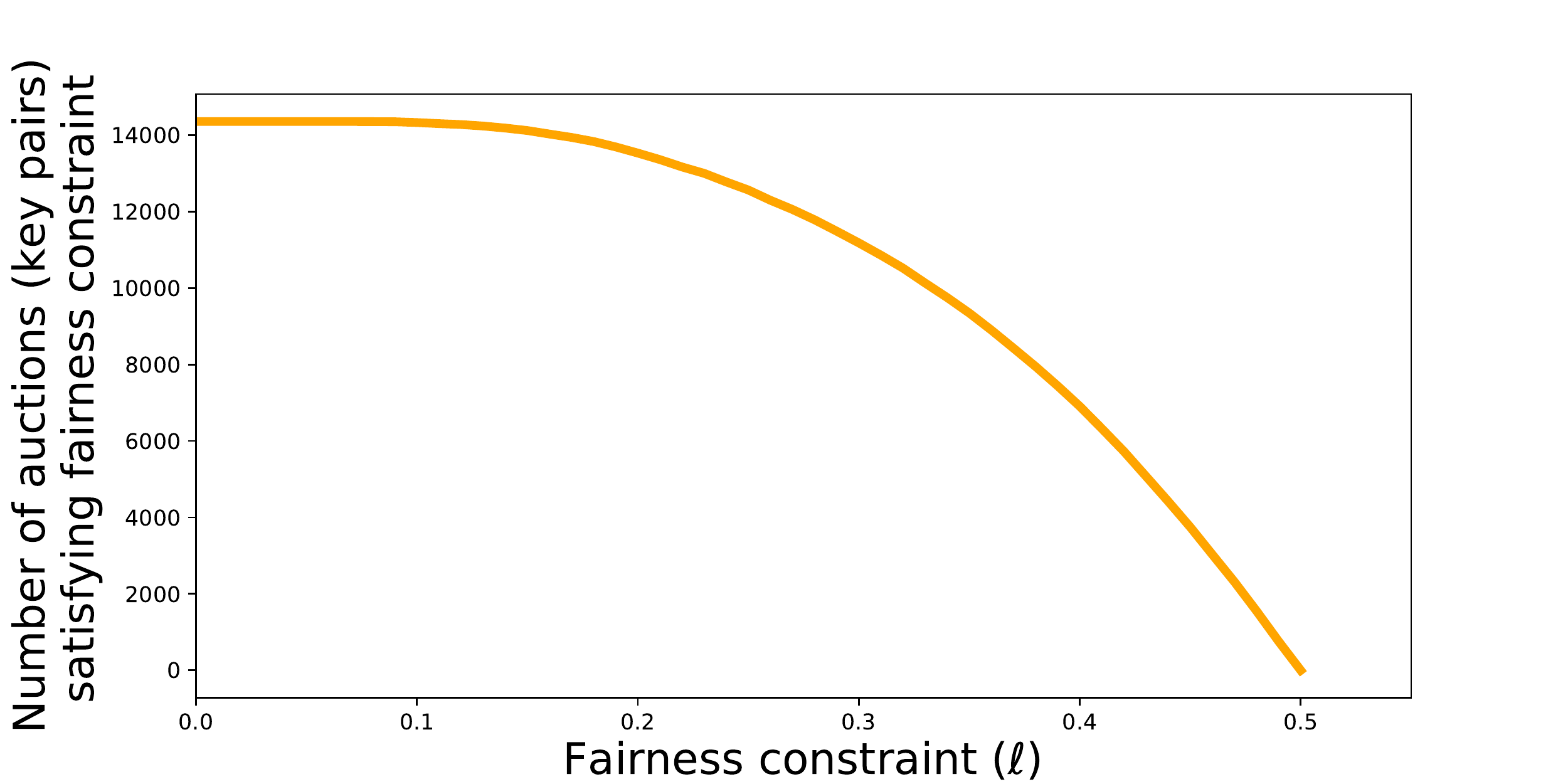}
            \hspace{-13mm}
            }
            \hspace{2mm}
            \subfigure[
            \label{fig:blockMatrix}
            \footnotesize
            \textit{Correlation Among Keywords.}
            The axes depict keywords, reordered to emphasize their correlation.
            A pair of keywords is colored white if it shares at least 2 advertisers.
            Each block can be interpreted as category of keyword (e.g., Science or Travel).
            ]
            {
            \hspace{2mm}
            \includegraphics[trim=1.5cm 0.3cm 2.2cm 0.9cm,clip,height=5.5cm]{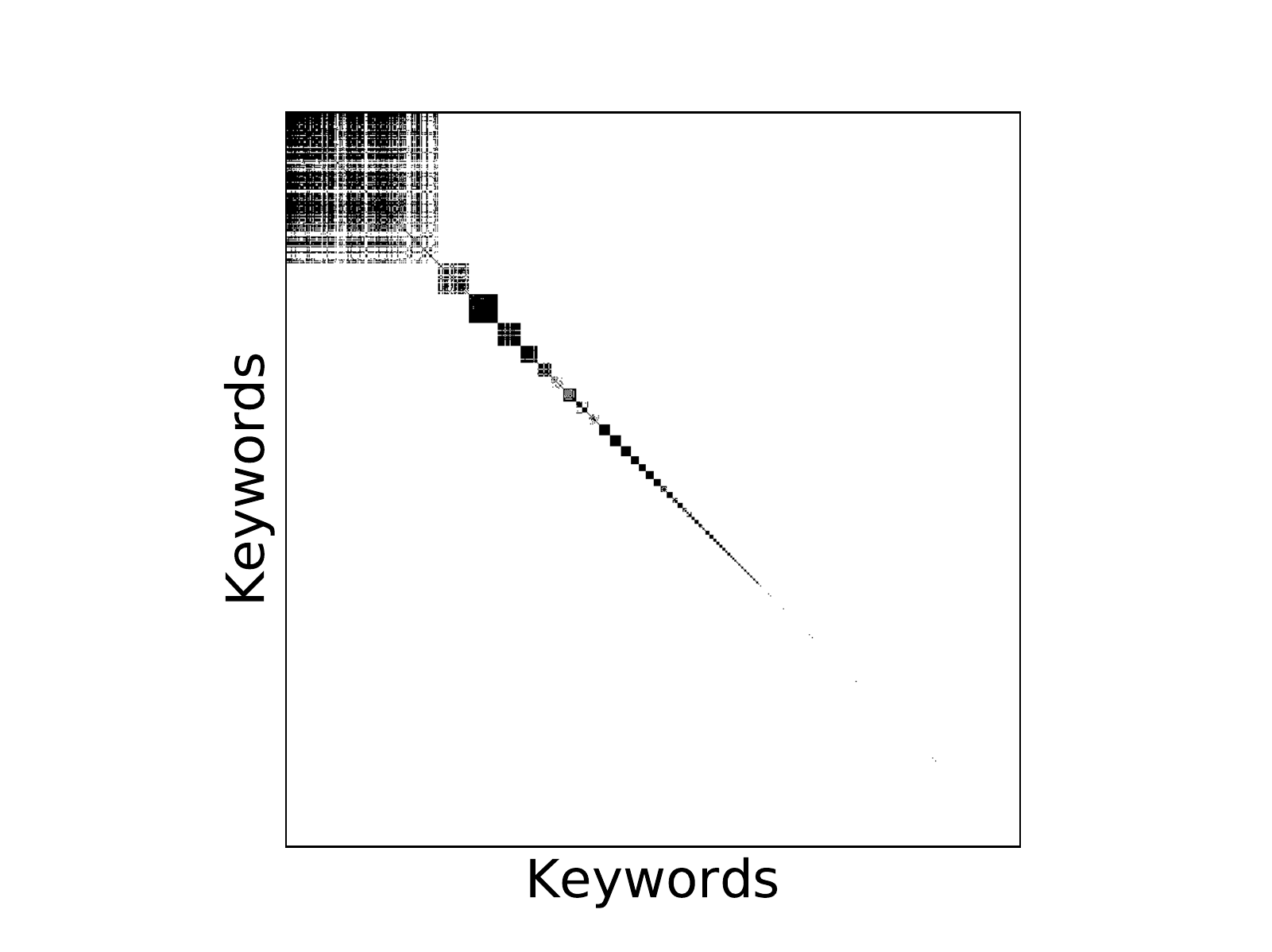}
            \hspace{2mm}
            }
            \caption{
            }
          \end{figure*}
          We evaluate our approach empirically on the {Yahoo!} A1 dataset~\cite{yah}.
          We vary the strength of the fairness constraint for all advertisers, find an optimal fair mechanism $\mathcal{F}$ using Algorithm~\ref{Algorithm1} and compare it against the optimal unconstrained (and hence potentially unfair) mechanism  $\mathcal{M}$, which is given by Myerson~\cite{myer}.
          We first consider the impact of the fairness constraints on the revenue of the platform.  Let  $\mathrm{rev}_\mathcal{N}$ denote the revenue of mechanism  $\mathcal{N}$. We report the revenue ratio
          $\kappa_{\mathcal{M},\mathcal{F}} \coloneqq \nfrac{\mathrm{rev}_\mathcal{F}}{\mathrm{rev}_\mathcal{M}}$.
          Note that the revenue of $\mathcal{F}$ can be at most that of $\mathcal{M}$, as it solves a constrained version of the same problem; thus $\kappa_{\mathcal{M},\mathcal{F}} \in [0,1]$.

          We then consider the impact of the fairness constraints on the advertisers.
          Towards this, we consider the distribution of winners among advertisers in an auction given by $\mathcal{M}$ and an auction given by $\mathcal{F}$. We  report the total variation distance $d_{TV}(\mathcal{M},\mathcal{F}) \coloneqq
          \nfrac{1}{2}\sum_{i=1}^{n}|\sum_{j=1}^{m}q_{ij}(\mathcal{M})-q_{ij}(\mathcal{F})| \in [0,1]$ between the two distributions, as a measure of how much the winning distribution changes due to the fairness constraints.

          Lastly, we consider the fairness of the resultant mechanism $\mathcal{F}$.
          To this end, we measure selection lift (${\rm slift}$) achieved by $\mathcal{F}$, ${\rm slift}(\mathcal{F}) \coloneqq \min_{i\in[n],j\in[m]}(\nfrac{q_{ij}}{1-q_{ij}})\in [0,1]$.
          Where ${\rm slift}(\mathcal{F})=1$, represents perfect fairness among the two user types.
          \subsection{Dataset}
          We use the Yahoo! A1 dataset~\cite{yah}, which contains bids placed by advertisers on the top 1000 keywords on \textit{Yahoo! Online Auctions} between June 15, 2002 and June 14, 2003.
          The dataset has 10475 advertisers, and each advertiser places bids on a subset of keywords;
          there are approximately $2\cdot10^{7}$ bids in the dataset.
          \begin{figure*}[t!]
            \centering
            \hspace{-1mm}\subfigure[
              \footnotesize \label{fig:slift}
              \textit{Fairness.}
              We report the fairness ${\rm slift}(\mathcal{F})$ achieved by our fair $(\mathcal{F})$ mechanism for varying level of fairness.
            ]
            {\includegraphics[trim=0cm 0cm 0cm 0cm,clip,height=5.9cm]{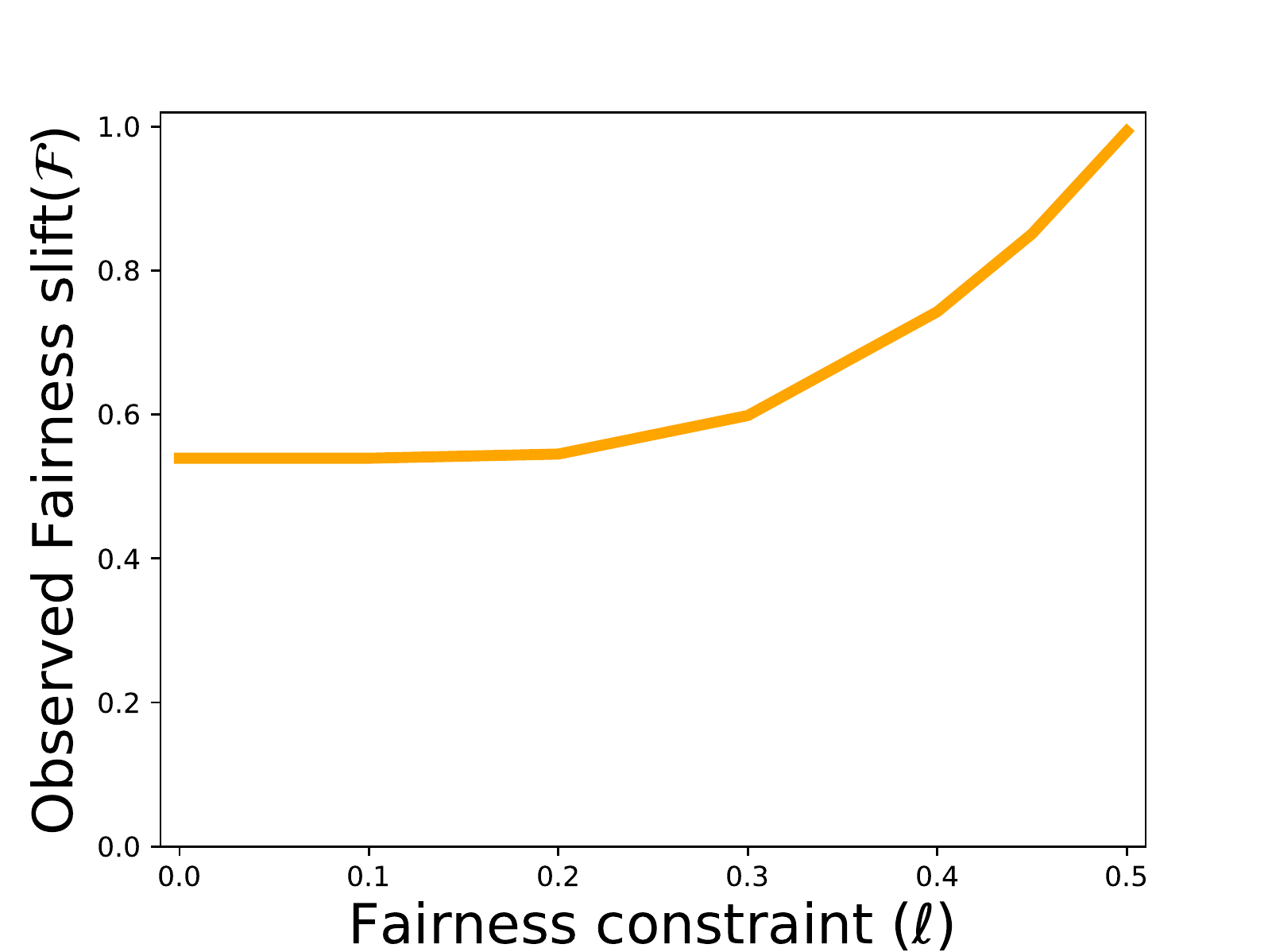}}
            \hspace{+1mm}
            \subfigure[
              \footnotesize\label{fig:rev_ratio}
              \textit{Fairness and Revenue.}
              We report the revenue ratio $\mathrm{rev}_{\mathcal{M},\mathcal{F}}$  between the fair $(\mathcal{F})$ and the unconstrained ($\mathcal{M}$) mechanisms.
            ]
            {
              \includegraphics[trim=0.0cm 0cm 0.1cm 1cm,clip,height=5.5cm]{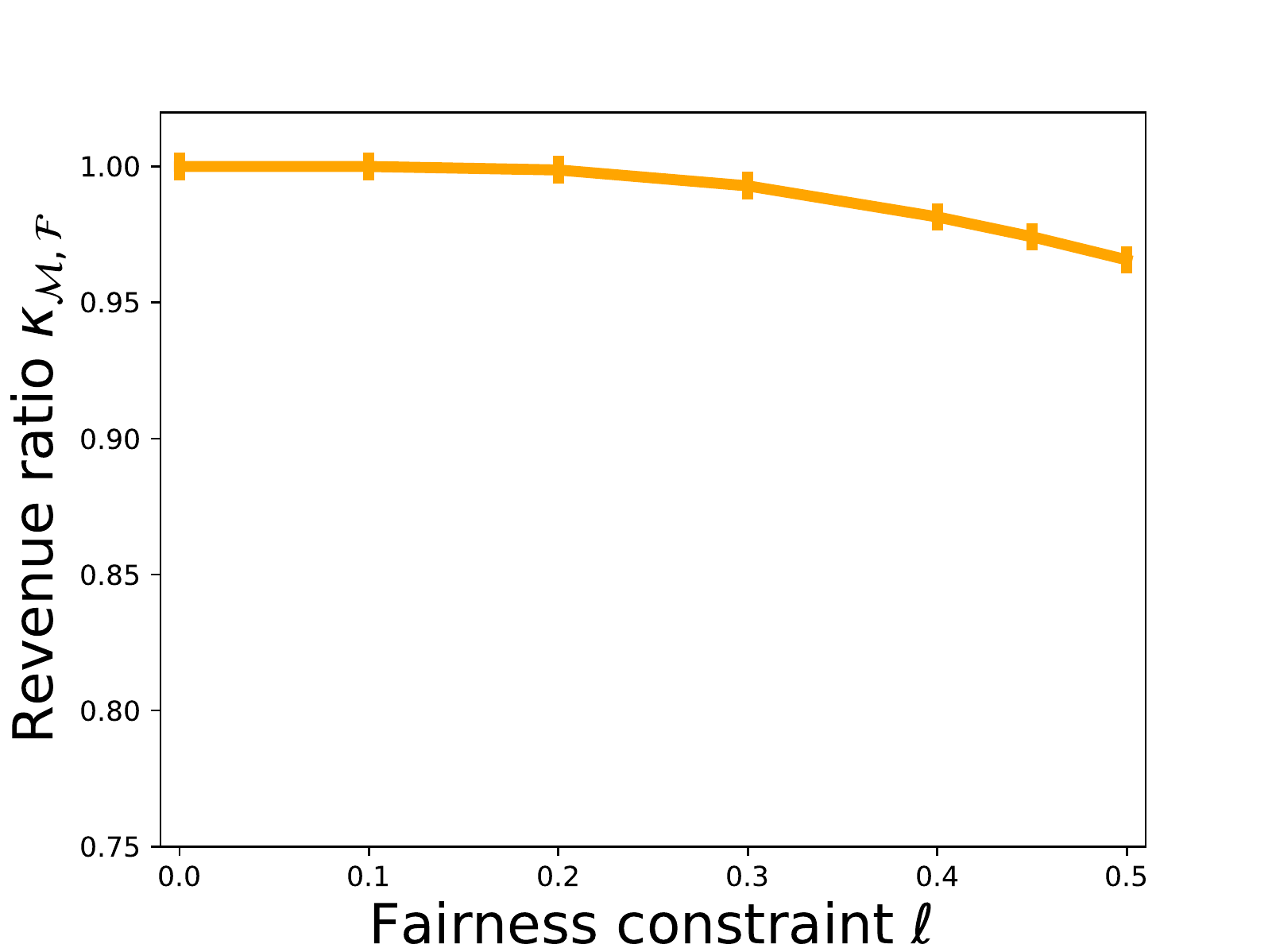}\hspace{-3mm}
            }
            \caption{
              \label{fig:Exp1Result1}
              The x-axis represents fairness constraint $\ell$ (lower bound).
              Error bars represent the standard error of the mean.
            }
          \end{figure*}

          For each keyword $k$, let $A_k$ be the set of advertisers that bid on it.
          We infer the distribution of valuation of an advertiser for a keyword by the bids they place on the keyword.
          In order to retain sufficiently rich valuation profiles for each advertiser, we remove advertisers who place less than 1000 bids on $k$ or whose valuations have variance lower than $3\cdot10^{-3}$ from $A_{k}$, and then those who win the auction less than $5\%$ of the time.
          This retains more than $1.5\cdot10^{7}$ bids.

          The actual keywords in the dataset are anonymized; hence, in order to determine whether two keywords  $k_1$ and $k_2$ are related, we consider whether they share more that one advertiser, i.e., $A_{k_1}\cap A_{k_2} > 1$.
          This allows us to identify keywords that are related (see Figure~\ref{fig:blockMatrix}), and hence for which spillover effects may be present as described in \cite{lambrecht_tucker}.
          Drawing that analogy, one can think of each keyword in the pair as a different type of user for which the same advertisers are competing, and the goal would be for the advertiser to win an equal proportion of each user.

          There are $14,380$ such pairs.
          However, we observe that spillover does not affect all keyword pairs (see Figure~\ref{fig:biasMeasure}).
          To test the effect of imposing fairness constrains in a challenging setting, we consider only the auctions which are not already fair; in particular there are $3282$ keyword pairs which are less than $\ell=0.3$ fair.

          \subsection{{Experimental Setup}}
          As we only consider pairs of keywords in this experiment, a lower bound constraint $\ell_{11} = \delta$ is equivalent to an upper bound constraint $u_{12}=1-\delta$.
          Hence, it suffices to consider lower bound constraints. We set $\ell_{i1}=\ell_{i2}=\ell \ \forall \ i \in [2]$, and vary $\ell$ uniformly from $0$ to $0.5$ , i.e., from the completely unconstrained case (which is equivalent to Myerson's action) to completely constrained case (which requires each advertiser to win each keywords in the pair with exactly the same probability).
          We report $\kappa_{\mathcal{N},\mathcal{M}}$, $d_{TV}(\mathcal{N},\mathcal{M})$, and ${\rm slift}(\mathcal{F})$ averaged over all auctions after $10^4$ iterations in Figure~\ref{fig:Exp1Result1}
           and Figure~\ref{fig:Exp1Result2};
          error bars represent the standard error of the mean over $10^4$ iterations and 3282 auctions respectively.
          \begin{remark}
            Computationally, we could consider more types ($m$). The bottleneck is empirical; whether the dataset contains enough keywords with $m$ overlapping advertisers for the experiment to be meaningful. For $m < 7$ we get over 1000 such keywords sets, and observe results similar to $m = 2$ case, losing less than $5\%$ of the revenue with a TV-distance smaller than 0.05 even for the setting with $\ell=0.5$.
          \end{remark}
          \begin{figure*}[t!]
            \centering
            {
              \includegraphics[trim=-0.2cm 0cm -0.2cm 1cm,clip,height=5.5cm]{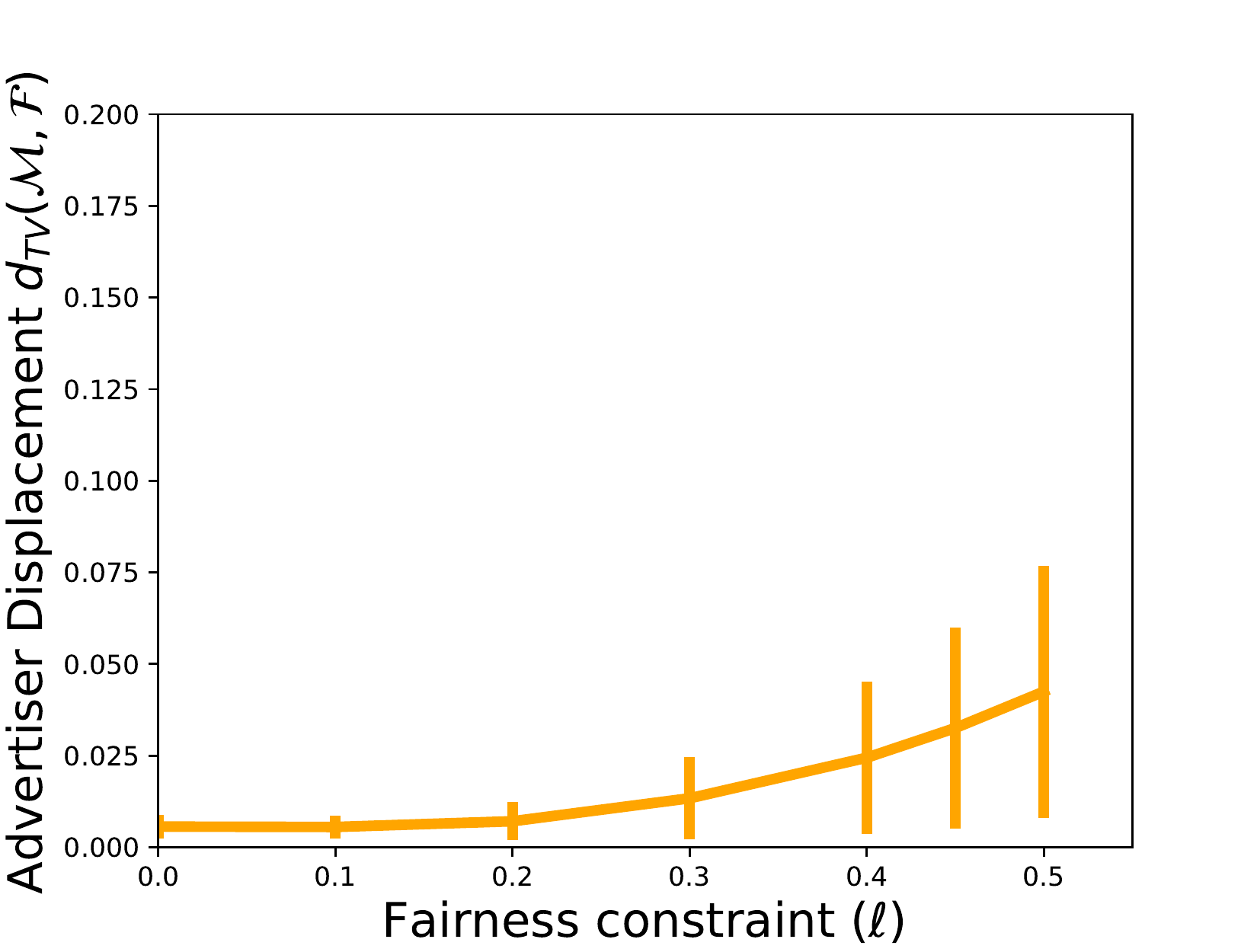}
            }
            \caption{
              \label{fig:Exp1Result2}
              \textit{Effect of Fairness on Advertisers.}
              The x-axis represents fairness constraint $\ell$ (lower bound).
              We report the total variation distance, $d_{TV}(\mathcal{M},\mathcal{F})$, between the distribution of winners in ads allocated by the fair $(\mathcal{F})$ and the unconstrained $(\mathcal{M})$ mechanism.
              Error bars represent the standard error of the mean.
            }
          \end{figure*}

          \subsection{Empirical Results}
          \noindent{\em Fairness.}
          Since the auctions are unbalanced to begin with, we expect the selection lift to increase with the fairness constraint.
          We observe a growing trend in the selection lift, eventually achieving perfect fairness for $\ell=0.5$.\\

          \noindent{\em Revenue Ratio.}
          We do not expect to outperform the optimal unconstrained mechanism.
          However, we observe that even in the perfectly balanced setting with $\ell=0.5$ our mechanisms lose less than  5\% of the revenue.\\

          \noindent{\em Advertiser Displacement.}
          Since the auctions are unbalanced to begin with, we expect TV-distance to grow with the fairness constraint.
          We observe this growing trend in the TV-distance on lowering the risk-difference.
          Even for zero risk-difference ($\ell=0.5$) our mechanisms obtain a TV-distance smaller than $0.05$.
          We present a discussion of this result in Section~\ref{sec:small_tv}.
          %

          %%%%%%%%%%%%%%%%%%%%

          \section{Proofs}

          \subsection{Proof of Theorem~\ref{thm:1}}
          \label{sec:proof1}
          \begin{proof}
            Let us introduce three Lagrangian multipliers, a vector $\alpha_j\in\cR_{\geq 0}^n$,
            a vector $\beta_j\in\cR_{\geq 0}^{n} \ \mathrm{and} $
            a continuous function $\gamma_j(\cdot )\colon {\rm supp}(\phi_j)\to\cR_{\geq 0}\ \forall \ j \in [m]$,
            for the lower bound, upper bound, and single item constraints respectively.
            Then calculating the Lagrangian function we have
            \begin{align*}
              &\hspace{-10mm}\text{\footnotesize{$L \coloneqq
              \hspace{-3mm}\hspace{2.5mm}
                \sum_{j\in[m]}\hspace{1mm} \Pr_{\cU}[j]\sum_{i\in[n]}\hspace{1mm}\int_{{\rm supp}(\phi_j)}\hspace{-9mm}\phi_{ij}x_{ij}(\phi_j) df_j(\phi_j)
              \hspace{7mm}+
                \sum_{\substack{i\in[n]\\j\in[m]}}\alpha_{ij} \bigg( \int_{{\rm supp}(\phi_j)}\hspace{-9mm} x_{ij}(\phi_{ij}) df_j(\phi_j)
               -
                  \ell_{ij}\hspace{-2mm} \sum_{t\in[m]}\int_{{\rm supp}(\phi_t)}\hspace{-9mm} x_{it}(\phi_t) df_t(\phi_t)
              \bigg)$}}\\
              &\hspace{-10mm}\hspace{-3mm}\quad\quad\quad\
              \text{\footnotesize{$
              \hspace{-6mm}+\sum_{j\in[m]}
                \hspace{1mm}\int_{{\rm supp}(\phi_j)}\hspace{-9mm}\gamma_j(\phi_j)\big(1-\sum_{i\in[n]}x_{ij}(\phi_{ij})\big)df_j(\phi_j)
              \hspace{1mm}- \sum_{\substack{i\in[n]\\j\in[m]}}\beta_{ij}
                \bigg(
                  \int_{{\rm supp}(\phi_j)}\hspace{-9mm} x_{ij}(\phi_{ij})df_j(\phi_j)
                -
                  u_{ij}\hspace{-2mm} \sum_{t\in[m]}\hspace{-1mm}\int_{{\rm supp}(\phi_t)}\hspace{-9mm} x_{it}(\phi_{it})df_t(\phi_t)
                 \bigg)$}}.
            \end{align*}
            The second integral is well defined by from the continuity of $\gamma_j(\cdot)$ and monotonic nature of $x_j(\cdot)$.
            In order for the supremum of the Lagrangian over $x_{ij}(\cdot)\geq 0$ to be bounded, the coefficient of $x_{ij}(\cdot)$ must be non-positive. Therefore we require that for all $g \subseteq {\rm supp}(\phi_j),\ i \in [n],$ and $j\in [m]$
            \begin{align*}
              \int\limits_{g}\hspace{-1mm}
              \alpha_{ij}-\beta_{ij}+\Pr_{\cU}[j]\phi_{ij}
              -\hspace{-1.5mm}\sum_{t\in[m]}(\alpha_{it}\ell_{it}-\beta_{it} u_{it})
               - \gamma_j(\phi_j) df_j(\phi_j)\leq 0.
            \end{align*}
            Since $x_{ij}(\cdot)$ and $\gamma_j(\cdot)$ are continuous, we can equivalently require for all $\phi_j,\ i \in [n],$ and $j \in [m]$
            \begin{align*}
              &\hspace{-10mm}\alpha_{ij}-\beta_{ij}+\Pr_{\cU}[j]\phi_{ij}
              -\sum_{t\in[m]}(\alpha_{it}\ell_{it}-\beta_{it} u_{it})
               - \gamma_j(\phi_j) \leq 0.
            \end{align*}
            If this holds, we can express the supremum of $L$ as
            \begin{align*}
              &\hspace{-7mm}\sup_{x_{ij}(\cdot)\geq0} L =
              \sum_{j\in[m]}\int_{{\rm supp}(\phi_j)}\gamma_j(\phi_j)df_j(\phi_j).
            \end{align*}
           Now we can express the \textit{dual optimization problem} as follows:

           \vspace{2mm}
            \noindent{\bf (Dual of the infinite-dimensional fair advertising problem\label{eq:dual}).} For all $j\in[m]$,
            find a optimal $\alpha_j\in\cR_{\geq 0}^{n},\ \beta_j\in\cR_{\geq 0}^{n}$ and $\gamma_j(\cdot )\colon {\rm supp}(\phi_j) \to\cR_{\geq 0}$ for
                \begin{align*}
                  &\hspace{3mm}\min_{\substack{\alpha_{j}\geq 0\\ \beta_{j}\geq 0\\\hspace{-3mm} \gamma_{j}(\cdot) \geq 0}}   \quad \sum_{j\in[m]}\int_{{\rm supp}(\phi_j)}\gamma_j(\phi_j) df_j(\phi_j)\numberthis\\
                  &\hspace{3mm}\hspace{3mm}\mathrm{s.t.}\quad \ \alpha_{ij}-\beta_{ij}+\Pr_{\cU}[j]\phi_{ij}
                  -\hspace{-1mm}\sum_{t\in[m]}(\alpha_{it}\ell_{it}-\beta_{it} u_{it})
                  \leq  \gamma_j(\phi_j)\ \ \forall\ i \in [n], j \in [m], \phi_{j}.\numberthis
                \end{align*}
            Since the \hyperref[eq:primal]{primal} is linear in $x_{ij}(\cdot)$, and the constraints are feasible strong duality holds.
            Therefore, the dual optimal is primal optimal.

            For any feasible constraints we have for all $i\in [n]\ $ $\sum_{j\in [m]}\ell_{ij} \leq 1$ and $\sum_{j\in [m]}u_{ij} \geq 1$.
            Therefore the coefficient of $\alpha_{ij}$, $1-\sum_{j\in [m]}\ell_{ij} \geq 0$, and that of $\beta_{ij}$, $\sum_{j\in [m]}u_{ij}-1\geq 0$. Since $\alpha$ and $\beta$ are non-negative, a optimal solution to the dual is finite.
            Let $\alpha^\star,\beta^\star$ %,\gamma_j^\star(\cdot)$
            be a optimal solutions to the \hyperref[eq:dual]{dual}, and $x_{ij}^\star(\cdot)$ be a optimal solution to the \hyperref[eq:primal]{primal}.
            Fixing $\alpha$ and $\beta$ to their optimal values $\alpha^\star$ and $\beta^\star$ in the \hyperref[eq:dual]{dual}, let us define new virtual valuations $\phi_{ij}^\prime$, for all $i\in [n]$ and $j\in [m]$
            \begin{align*}
              \hspace{10mm}\phi_{ij}^\prime \coloneqq \phi_{ij} + \frac{1}{\Pr_{\cU}[j]}\big( \alpha_{ij}^\star -\beta_{ij}^\star - \sum_{t\in[m]}(\alpha_{it}^\star \ell_{it} - \beta_{it}^\star u_{it}) \big).
            \end{align*}
            Then the leftover problem has only one Lagrangian multiplier, $\gamma_j(\cdot)$. Let $\gamma_{j}^\prime(\cdot)$ be the affine transformation of $\gamma_j$ defined on virtual valuations, i.e., $\gamma_j^\prime(\phi_j^\prime)\coloneqq\gamma_j(\phi_j)$, then the problem can be expressed as follows.
              \noindent {\bf (Dual with shifted virtual valuations).} For all $j\in[m]$, find the optimal $\gamma_j^\prime(\cdot )\colon {\rm supp}(\phi_j^\prime) \to\cR_{\geq 0}$ for
                \begin{align*}
                  &\hspace{-30mm}\min_{ \gamma_{j}(\cdot) \geq 0} \quad  \sum_{j\in[m]}\int_{{\rm supp}(\phi_j)}\gamma_j(\phi_j^\prime) df_j(\phi_j^\prime)\numberthis\\
                  &\hspace{-29mm}\hspace{3mm}\mathrm{s.t.} \hspace{2mm}\quad \Pr_{\cU}[j]\phi_{ij}^\prime
                  \leq  \gamma_j(\phi_j^\prime)
                  \quad \forall\ i \in [n], j \in [m], \phi^\prime.\numberthis
                \end{align*}
            \noindent This is the dual of the following unconstrained revenue maximizing problem.
            Myerson's mechanism is the revenue maximizing solution to the unconstrained optimization problem.
            Further, by linearity and feasibility of constraints strong duality holds.
            Therefore the $\alpha^\prime$-shifted mechanism, for $\alpha^\prime=\nfrac{1}{\Pr_{\cU}[j]}\cdot\big( \alpha_{ij}^\star
            -\beta_{ij}^\star+\sum_{t\in[m]}(\alpha_{it}^\star \ell_{it} - \beta_{it}^\star u_{it}) \big)$ is a optimal fair mechanism.

            \vspace{2mm}
            \noindent {\bf (Unconstrained primal for the infinite-dimensional fair advertising problem).} For all $j\in[m]$, find the optimal allocation rule $x_j(\cdot)\colon\cR^n\to[0,1]^n$ for
              \begin{align*}
                &\hspace{-29mm}\max_{x_{ij}(\cdot)\geq 0} \mathrm{rev}_{\mathcal{M}}(x_1,x_2,\dots,x_m)
                \\
                &\hspace{-27mm}\hspace{3mm}\mathrm{s.t.} \hspace{1mm}
                    \sum_{i\in[n]}x_{ij}(\phi_j) \leq 1  \quad \forall \ j \in [m], \phi_j \in{\rm supp}(\phi_j).
              \end{align*}

            \noindent Further, Myerson's mechanism is truthful if the distribution of valuations are regular and independent.
            Since $\alpha$-shifted mechanism applies a constant shift to all valuation, it follows under the same assumptions that any $\alpha$-shifted mechanism is also truthful, and therefore has a unique payment rule defined by its allocation rule.
            \end{proof}

          \subsection{Proof of Theorem~\ref{thm:2}}
            \label{sec:proof_of_thm:2}
            \paragraph{Supporting Lemmas.}
            Towards the proof of Theorem~\ref{thm:2} we require the following two Lemmas.
            The first lemma shows that ${\rm rev}(\cdot)$ is Lipschitz continuous.
            Its proof is presented in Section~\ref{app:proof_of_revenue_is_Lipschitz}.
            \begin{lemma}
              \textbf{(Revenue is Lipschitz).}
              \label{lem:revenue_is_Lipschitz}
              For all coverages $q_1,q_2$ $\in\cQ$,
              if assumptions~(\hyperref[asmp:eta_coverage]{17}), (\hyperref[asmp:distributed_distribution]{18}) and (\hyperref[asmp:bounded_bid]{20}) are satisfied,
              then
              \begin{align}
                \hspace{-3mm}|\mathrm{rev}(q_1)-\mathrm{rev}(q_2)|\leq \bigg(\frac{\mu_{\max}\rho}{\mu_{\min}\eta}\bigg) n^2\|q_1-q_2\|_F\footnotemark.
              \end{align}
              \footnotetext{We use $\|\cdot\|_F$ to denote the Frobenius norm.}
            \end{lemma}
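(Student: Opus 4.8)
The plan is to reduce the Lipschitz bound to a uniform bound on $\|\nabla\mathrm{rev}\|_F$ over the polytope $\cQ$, and then to control this gradient through the gradient oracle~(\hyperref[eq:linearSystem]{16}). Since $\cQ$ is convex and $\mathrm{rev}(\cdot)$ is concave (hence differentiable on the relevant interior), the fundamental theorem of calculus along the segment $[q_1,q_2]\subseteq\cQ$ gives $|\mathrm{rev}(q_1)-\mathrm{rev}(q_2)|\le \big(\sup_{q\in\cQ}\|\nabla\mathrm{rev}(q)\|_F\big)\,\|q_1-q_2\|_F$, so it suffices to show $\|\nabla\mathrm{rev}(q)\|_F\le \big(\nfrac{\mu_{\max}\rho}{\mu_{\min}\eta}\big)n^2$ for every $q\in\cQ$. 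Writing $\alpha=q^{-1}(q)$ and using the oracle, $\nabla\mathrm{rev}(q)=(J_q(\alpha)^\top)^{-1}\nabla\mathrm{rev}_{\mathrm{shift}}(\alpha)$, so that $\|\nabla\mathrm{rev}(q)\|_F\le \|(J_q(\alpha)^\top)^{-1}\|\cdot\|\nabla\mathrm{rev}_{\mathrm{shift}}(\alpha)\|_F$; I would bound the two factors separately.

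For the numerator factor I would differentiate the closed form~(\hyperref[eq:revenue_as_function_of_alpha]{9}) of $\mathrm{rev}_{\mathrm{shift}}$. Each partial $\partial\mathrm{rev}_{\mathrm{shift}}/\partial\alpha_{ij}$ is a sum of at most $2(n-1)$ integrals, each of the shape $\Pr_\cU[j]\int y\, f_{ij}(y)\,f_{kj}(\cdot)\prod_{l} F_{lj}(\cdot)\,dy$. Bounding every CDF factor by $1$ (they lie in $[0,1]$) and the single extra density factor by $\mu_{\max}$ via~(\hyperref[asmp:distributed_distribution]{18}), and then invoking the bounded-bid assumption~(\hyperref[asmp:bounded_bid]{20}) to control the first-moment integral of $y f_{ij}(y)$ by $\rho$, yields $|\partial\mathrm{rev}_{\mathrm{shift}}/\partial\alpha_{ij}|=O(n\,\mu_{\max}\rho)$ for every coordinate.

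For the denominator factor I would reuse the structure established in the proof of Lemma~\ref{lem:Jacobian_is_invertible}: $J_q(\alpha)$ is block diagonal across types (as $q_{ij}$ depends only on the shifts of type $j$), and within each type block it is strictly diagonally dominant, with the dominance gap of row $i$ being exactly $|\partial q_{ij}/\partial\alpha_{nj}|$, the marginal influence of the dropped advertiser $n$. A Varah-type bound for strictly diagonally dominant matrices then gives $\|(J_q(\alpha)^\top)^{-1}\|_\infty\le \big(\min_{i,j}|\partial q_{ij}/\partial\alpha_{nj}|\big)^{-1}$, and I would lower bound this gap by $\Omega(\mu_{\min}\eta/\mathrm{poly}(n))$: because every advertiser has coverage exceeding $\eta$ by~(\hyperref[asmp:eta_coverage]{17}), the reference advertiser is genuinely in contention on every type, and the density lower bound $\mu_{\min}$ from~(\hyperref[asmp:distributed_distribution]{18}) forces the boundary integral $\Pr_\cU[j]\int f_{ij}(y)f_{nj}(\cdot)\prod_{l\neq i,n}F_{lj}(\cdot)\,dy$ to be bounded away from $0$. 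Combining the two factors, converting between the operator and Frobenius norms, and collecting the accumulated powers of $n$ into $n^2$ then produces the advertised constant $\nfrac{\mu_{\max}\rho}{\mu_{\min}\eta}$.

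The main obstacle is this last lower bound: turning the qualitative statement ``the dropped advertiser affects every other advertiser'' — which in Lemma~\ref{lem:Jacobian_is_invertible} only had to be nonzero — into a quantitative bound of the right order, uniformly over $\cQ$. The delicate point is that the integrand $f_{ij}(y)\,f_{nj}(y+\alpha_{ij}-\alpha_{nj})\prod_{l\neq i,n}F_{lj}(\cdot)$ must carry nontrivial mass on the overlap of the supports; the $\eta$-coverage and $\mu_{\min}$ assumptions are exactly what guarantee this overlap is large enough, and carefully extracting the factor $\mu_{\min}\eta$ (rather than, say, $\mu_{\min}\eta^2$ or an extra $1/n$) is where the bookkeeping must be done precisely.
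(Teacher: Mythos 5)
Your proposal is correct in substance, but it replaces the second half of the paper's argument with a genuinely different one. The paper factors the map $q \mapsto \mathrm{rev}(q)$ through shift space and proves two Lipschitz bounds: $\|\nabla \mathrm{rev}_{\mathrm{shift}}(\alpha)\|_F \leq n^{3/2}\mu_{\max}\rho$ (Lemma~\ref{lem:rev_is_Lipschitz_in_shifts}, essentially identical to your numerator bound), and the inverse-Lipschitz estimate $\|\alpha-\beta\|_F < \frac{\sqrt{n}}{\eta\mu_{\min}}\|q(\alpha)-q(\beta)\|_2$ (Lemma~\ref{lem:alpha_Lipschitz_in_coverage}). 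The latter is proved not through the Jacobian inverse but by a directional-derivative argument: for $u=\alpha_j-\beta_j$ and $i\in\argmax_k|u_k|$, the derivative of $q_{ij}$ along the segment is at least $\eta\mu_{\min}/\sqrt{n}$ in magnitude (Lemma~\ref{lem:unique_shift_for_shift}), and the fundamental theorem of calculus along that segment gives expansivity of the forward map $q(\cdot)$. Both routes rest on the same quantitative kernel, which is exactly the paper's Lemma~\ref{lem:lower_bound_gradient_of_coverage}: $|\partial q_{ij}/\partial \alpha_{sj}| \geq \eta\mu_{\min}$, obtained by bounding the extra density factor below by $\mu_{\min}$ and then reinstating the dropped CDF factor using $F_{sj}\leq 1$, so that the remaining integral collapses to the coverage itself, which is at least $\eta$ — so the ``delicate extraction'' you worried about is a two-line trick, and the feared $\eta^2$ or extra $1/n$ never materializes. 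As for what each approach buys: your path integrates $\nabla\mathrm{rev}$ over the straight segment $[q_1,q_2]\subseteq\cQ$, so $\eta$-coverage is needed only on $\cQ$ itself, whereas the paper's segment lives in $\alpha$-space and its $q$-image may leave $\cQ$, forcing the explicit along-the-segment hypothesis $q_{ij}(\beta+t(\alpha-\beta))>\eta$ in Lemma~\ref{lem:alpha_Lipschitz_in_coverage}; conversely, the paper's expansivity statement avoids matrix-inverse norms entirely and is the same mechanism reused for uniqueness in Theorem~\ref{thm:2New}. Done carefully, your route even gives the slightly better constant $n^{3/2}$ in place of $n^2$.

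One caveat in your denominator step: Varah's bound controls $\|A^{-1}\|_\infty$ for \emph{row} strictly diagonally dominant $A$, and Lemma~\ref{lem:Jacobian_is_invertible} only establishes row dominance of $J_q$, so it does not directly yield your claimed bound on $\|(J_q(\alpha)^\top)^{-1}\|_\infty$. Two fixes are available. Either establish column dominance as well — it holds because the $\alpha$-shifted mechanism always allocates, so $\sum_{i\in[n]}q_{ij}=\Pr_{\cU}[j]$ is constant in $\alpha$, giving zero column sums over all $n$ advertisers with nonpositive off-diagonal entries, and the deleted row's entry satisfies $|\partial q_{nj}/\partial\alpha_{kj}|\geq \eta\mu_{\min}$ again by Lemma~\ref{lem:lower_bound_gradient_of_coverage}; or keep only row dominance and pay a within-block norm conversion $\|J^{-1}\|_2\leq \sqrt{n-1}\,\|J^{-1}\|_\infty$, where the block-diagonality across types that you noted is essential — without it the conversion would cost $\sqrt{(n-1)m}$ and introduce a spurious $m$-dependence absent from the target constant. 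Either repair lands within $n^2\mu_{\max}\rho/(\eta\mu_{\min})$. Finally, your argument (like the paper's) implicitly assumes the density lower bound applies at the shifted argument, i.e., sufficiently overlapping supports; this gap is inherited from the paper, not introduced by you.
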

            \noindent The next lemma is an algorithm to solve the {\em optimal shift problem}.
            Its proof is presented in Section~\ref{app:proof_of_thm:3}
            \begin{lemma}\label{thm:3}
              \textbf{(An algorithm to solve the optimal shift problem).}
              There is an algorithm (Algorithm~\ref{Algorithm2}) which outputs $\alpha\in\cR^{n\times m}$ such that
              if assumptions~(\hyperref[asmp:eta_coverage]{17}), (\hyperref[asmp:distributed_distribution]{18}) and (\hyperref[asmp:lipschitz_distribution]{19}) are satisfied,
              then $\alpha$ is an $\epsilon$-optimal solution for the optimal shift problem, i.e., $\cL(\alpha)<\epsilon$, in
              \begin{align*}
                \log\bigg({\frac{m\cL(\alpha_1)}{\epsilon}}\bigg)\frac{n^3(L+n^2\mu_{\max}^2)}{(\eta\mu_{\min})^2} \ \mathrm{steps.}
              \end{align*}
              Where the arithmetic operations in each step are bounded by calculating $\nabla \cL$ once.
            \end{lemma}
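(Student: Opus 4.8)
The plan is to recognize Algorithm~\ref{Algorithm2} as gradient descent on the non-convex objective $\cL(\alpha)=\|\delta-q(\alpha)\|_F^2$, and to read off from the target running time (with its $\log(1/\epsilon)$ dependence) that what is needed is \emph{linear} convergence. Since $\cL$ is non-convex, plain convexity arguments are unavailable, so instead I would establish a gradient-dominance (Polyak--\L{}ojasiewicz) inequality together with a global smoothness bound and then invoke the standard descent analysis. Writing $r(\alpha):=\mathrm{vec}(q(\alpha)-\delta)$ and recalling that $J_q(\alpha)$ is the Jacobian of $\mathrm{vec}(q(\alpha))$, the gradient is $\nabla\cL(\alpha)=2\,J_q(\alpha)^\top r(\alpha)$. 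Two quantities then control everything: a uniform lower bound $\sigma$ on the smallest singular value of $J_q(\alpha)$ (yielding the PL constant), and a global Lipschitz constant $\beta$ of $\nabla\cL$ (the smoothness). By Theorem~\ref{thm:2New} the unique $\alpha$ with $q(\alpha)=\delta$ is the sole minimizer, with $\cL=0$ there, so the global optimum value is $0$.

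For the gradient-dominance step, note $\|\nabla\cL(\alpha)\|^2=4\|J_q(\alpha)^\top r(\alpha)\|^2\ge 4\,\sigma_{\min}(J_q(\alpha))^2\|r(\alpha)\|^2=4\,\sigma_{\min}(J_q(\alpha))^2\,\cL(\alpha)$, so any uniform bound $\sigma_{\min}(J_q(\alpha))\ge\sigma>0$ gives the PL inequality $\|\nabla\cL(\alpha)\|^2\ge 4\sigma^2\cL(\alpha)$. To obtain $\sigma$ I would use the structure from the proof of Lemma~\ref{lem:Jacobian_is_invertible}: by~\eqref{eq:Jacobian_is_sparse} the Jacobian is block diagonal across types, $J_q(\alpha)=\mathrm{diag}(B_1,\dots,B_m)$ with each block $B_j$ of size $(n-1)\times(n-1)$, so $\sigma_{\min}(J_q)=\min_j\sigma_{\min}(B_j)$ and no $m$-dependence enters. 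Each $B_j$ is strictly diagonally dominant, its strict excess over~\eqref{eq:coverageAsProbability} being the magnitude of the omitted column of the fixed advertiser; I would lower bound that excess using $\eta$-coverage~(\hyperref[asmp:eta_coverage]{17}) and the density lower bound $\mu_{\min}$ from~(\hyperref[asmp:distributed_distribution]{18}), since the omitted entry is an integral of a product of densities and cumulative distribution functions bounded below by order $\eta\mu_{\min}$. Converting row diagonal dominance of an $(n-1)\times(n-1)$ block into an $\ell_2$ singular-value bound costs only a $\mathrm{poly}(n)$ factor, giving $\sigma\gtrsim\eta\mu_{\min}/\sqrt{n}$ and a PL constant of order $(\eta\mu_{\min})^2/n$.

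For the smoothness step I would bound $\|\nabla\cL(\alpha)-\nabla\cL(\alpha')\|$ through $\nabla\cL=2J_q^\top r$, splitting it as $2(J_q(\alpha)-J_q(\alpha'))^\top r(\alpha)+2J_q(\alpha')^\top(r(\alpha)-r(\alpha'))$. The operator norm $\|J_q\|\lesssim n\mu_{\max}$ (each entry is an integral dominated by $\mu_{\max}$ by~(\hyperref[asmp:distributed_distribution]{18})), which also makes $q$ an $n\mu_{\max}$-Lipschitz map, so the second term is $\lesssim n^2\mu_{\max}^2\|\alpha-\alpha'\|$. The first term needs the Lipschitz constant of $\alpha\mapsto J_q(\alpha)$, i.e.\ a bound on the second derivatives of $q$; differentiating a coverage integral twice produces either two density factors (bounded by $\mu_{\max}^2$) or one derivative of a density (bounded by $L$ via~(\hyperref[asmp:lipschitz_distribution]{19})), so each such entry is $O(L+n\mu_{\max}^2)$. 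Crucially the block structure again removes any $\sqrt m$: since $B_j$ depends only on the type-$j$ shifts and $r(\alpha)$ splits into blocks $r_j\in\cR^{n-1}$ with $\|r_j\|\le\sqrt{n}$, the first term telescopes blockwise to $\lesssim\sqrt n\,\mathrm{Lip}(B_j)\,\|\alpha-\alpha'\|$ with $\mathrm{Lip}(B_j)\lesssim n(L+n\mu_{\max}^2)$. Combining, $\beta\lesssim n^{O(1)}(L+n^2\mu_{\max}^2)$, matching the numerator of the claimed bound.

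Finally, with step size $\tfrac1\beta$, $\beta$-smoothness gives $\cL(\alpha_{t+1})\le\cL(\alpha_t)-\tfrac1{2\beta}\|\nabla\cL(\alpha_t)\|^2\le\bigl(1-\tfrac{2\sigma^2}{\beta}\bigr)\cL(\alpha_t)$, hence $\cL(\alpha_T)\le\exp(-2\sigma^2T/\beta)\,\cL(\alpha_1)$, so $\cL(\alpha_T)<\epsilon$ after $T=O\!\bigl(\tfrac{\beta}{\sigma^2}\log\tfrac{\cL(\alpha_1)}{\epsilon}\bigr)$ steps; substituting $\sigma^2\gtrsim(\eta\mu_{\min})^2/n$ and $\beta\lesssim n^{O(1)}(L+n^2\mu_{\max}^2)$, with the $n$-powers combining to $n^3$, reproduces $\log\!\bigl(\tfrac{m\cL(\alpha_1)}{\epsilon}\bigr)\tfrac{n^3(L+n^2\mu_{\max}^2)}{(\eta\mu_{\min})^2}$, and each step costs exactly one evaluation of $\nabla\cL$. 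I expect the smoothness estimate to be the main obstacle: it requires explicitly differentiating the coverage integrals twice and carefully tracking how the number of advertisers, the density bounds $\mu_{\min},\mu_{\max}$, and the Lipschitz constant $L$ enter (and exploiting the block structure to avoid spurious $m$-factors), whereas the PL constant follows fairly directly from quantifying the diagonal-dominance argument already present in Lemma~\ref{lem:Jacobian_is_invertible}.
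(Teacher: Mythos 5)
Your proposal is correct, and its skeleton coincides with the paper's: decompose across types via the block structure in Eq.~\eqref{eq:Jacobian_is_sparse}, establish $O(n(L+n^2\mu_{\max}^2))$-smoothness of $\nabla\cL_j$ by differentiating the coverage integrals twice and bounding each resulting term by $L$ or $\mu_{\max}^2$ (your two-term split of $\nabla\cL=2J_q^\top r$ into a Lipschitz-$J_q$ part and a bounded-$J_q$ part is just the first-order form of the paper's Hessian decomposition $\nabla^2\cL_j = 2\sum_i \nabla q_{ij}\nabla q_{ij}^\top - \sum_i(\delta_{ij}-q_{ij})\nabla^2 q_{ij}$, handled in Lemmas~\ref{lem:bounded_element} and~\ref{lem:smallLips}), and then combine a gradient-dominance inequality with the standard descent lemma to obtain linear convergence at the stated step count. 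Where you genuinely diverge is the gradient-dominance step. The paper does \emph{not} bound $\sigma_{\min}(J_q)$: it proves Lemma~\ref{lem:lower_bounding_gradient_of_loss} through the bespoke Lemma~\ref{lem:lower_bounding_linear_combination_2}, which lower-bounds $\big\|\sum_i x_i\nabla q_{ij}\big\|_2$ for any coefficients with $\|x\|_1>1$ by a sign-splitting and pigeonhole argument on the rows of each block, and it explicitly remarks that mere linear independence of the $\nabla q_{ij}$ would not suffice. Your route supplies exactly the quantitative strengthening that makes the linear-algebraic shortcut legitimate: by Eq.~\eqref{eq:coverageAsProbability} the row-dominance excess of each block $B_j$ is precisely the omitted fixed-advertiser entry $\big|\partial q_{ij}/\partial\alpha_{nj}\big|\ge\eta\mu_{\min}$ (Lemma~\ref{lem:lower_bound_gradient_of_coverage}), so a Varah-type bound gives $\|B_j^{-1}\|_\infty\le(\eta\mu_{\min})^{-1}$ and hence $\sigma_{\min}(B_j)\ge\eta\mu_{\min}/\sqrt{n-1}$ uniformly, yielding the PL inequality $\|\nabla\cL\|_2^2\ge 4\sigma_{\min}^2\,\cL$ with constant of order $(\eta\mu_{\min})^2/n$ --- in fact a factor of $n$ stronger than the paper's $\lambda^2=4(\eta\mu_{\min})^2/(n-1)^2$, so your analysis reproduces (indeed slightly improves) the claimed rate; the $m$ inside the paper's logarithm arises only because it enforces per-type accuracy $\nfrac{\epsilon}{m}$, which your global analysis does not need. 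Two cosmetic points: $\|r_j\|_2\le 2$ rather than $\sqrt{n}$, since both $\delta_j$ and $q_j$ have coordinate sums at most $1$ (the paper's Eq.~\eqref{eq:LemmaA10Bound3}); and, exactly as in the paper's lemmas, your uniform singular-value bound requires the $\eta$-coverage hypothesis to hold at every iterate along the trajectory --- a caveat shared by both arguments, not a new gap.
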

            \begin{proof}[Proof of Theorem~\ref{thm:2}]
              Starting from $q_0\in \cQ$, Algorithm~\ref{Algorithm1} performs a projected gradient descent on $\cQ$.
              Since $\cQ$ is convex, the projection is contractive. In particular, for the optimal $q^\star\in\cQ$
              \begin{align*}
                \forall \ q, \ \|\mathrm{proj}_{\cQ}(q)-q^\star\|_2\leq \|q-q^\star\|_2 \numberthis\label{eq:contracting_projection}.
              \end{align*}
              It queries the shift $\alpha_k\approx q^{-1}(q_{k})$ from Algorithm~\ref{Algorithm2} at each step.
              This introduces some error $\xi>0$ at each step, which we fix later in the proof.

              Let $z_{k+1}=q_{k}+\gamma\nabla \mathrm{rev}(q_k)$ be the coverage at the $k+1$-th gradient-step, and $q_{k+1}=q(\alpha_{k+1})$ be the coverage obtained by  querying
              $\alpha_k\approx q^{-1}(\mathrm{proj}_{\cQ}(q_{k+1}))$.
              Then, we have the following bound on the error
                \begin{align}
                  &\hspace{42mm}\|\mathrm{proj}_{\cQ}(z_{k+1}\hspace{-0.5mm})-q_{k+1}\|_2^2 \leq \xi. \label{eq:error_from_algorithm_2}
                  \labelthis{Error from Algorithm~\ref{Algorithm2}}
                \end{align}
              We know that $\mathrm{rev}(\cdot)$ is a concave function of $q$.
              Using the first-order condition of concavity at $q^\star$ and $q_k$ we have
              \begin{align*}
                &\hspace{-5mm}\hspace{15mm}\|z_{k+1}-q^\star\|_2^2 = ||q_{k}+\gamma\nabla \mathrm{rev}(q_k)-q^\star\|_2^2 \\
                &\hspace{-5mm}\hspace{15mm}\hspace{21mm}\stackrel{}{\leq}
                \|q_{k}-q^\star\|_2^2 + 2\gamma(\mathrm{rev}(q_k -\mathrm{rev}(q^\star) + \gamma^2 \|\nabla \mathrm{rev}(q_k)\|_2^2.\numberthis \label{eq:contraction}
              \end{align*}
              Using the triangle inequality with Eq.~\eqref{eq:contracting_projection} and (\hyperref[eq:error_from_algorithm_2]{32}) we get
              \begin{align*}
                \|q_{k+1}-q^\star\|_2^2 &= \|q_{k+1}-\mathrm{proj}_{\cQ}(z_{k+1})+\mathrm{proj}_{\cQ}(z_{k+1})-q^\star\|_2^2\\
                &\leq \|z_{k+1}-q^\star\|_2^2+\xi\label{eq:erroriteration}\numberthis\\
                &\hspace{-0.8mm}\stackrel{\eqref{eq:contraction}}{\leq} \|q_{k}\text{--}q^\star\|_2^2 + 2\gamma(\mathrm{rev}(q_k\hspace{-0.4mm})\text{--} \mathrm{rev}(q^\star\hspace{-0.2mm})\hspace{-0.3mm}) + \gamma^2 \|\nabla \mathrm{rev}(q_k)\hspace{-0.2mm}\|_2^2 + \xi\hspace{-6mm}\numberthis\label{eq:recurrence}
              \end{align*}
              Expanding the above recurrence we get
              \begin{align*}
                &\hspace{-3mm}\|q_{k+1}-q^\star\|_2^2\stackrel{\eqref{eq:recurrence}}{\leq} k\xi + \|q_{1}-q^\star\|_2^2+\sum\nolimits_{i=1}^{k}\gamma^2\|\nabla \mathrm{rev}(q_i)\|_2^2 \\
                &\quad\quad\quad\quad\quad\quad\hspace{3.1mm} +2\sum\nolimits_{i=1}^{k}\gamma(\mathrm{rev}(q_i) - \mathrm{rev}(q^\star)).
                \numberthis\label{eq:expanded_recurrence}
              \end{align*}
              Substituting $\|q_{k+1}-q^\star\|_2^2 \geq 0$, and $\|q_{1}-q^\star\|_2^2 \leq  1$ we get
              \begin{align*}
                k\xi + 1 + 2\sum_{i\in[k]}\gamma(\mathrm{rev}(q_i) - \mathrm{rev}(q^\star)) + \sum_{i\in[k]}\hspace{-1.5mm}\gamma^2 \|\nabla \mathrm{rev}(q_i)\|_2^2\geq 0.
              \end{align*}
              Replacing $\mathrm{rev}(q_i)$ by its maximum,
              choosing $\xi\coloneqq G^2\gamma^2$,
              and using $\|\nabla \mathrm{rev}(q_i)\|_2 \leq G$
              and $k\coloneqq \big(\nfrac{\sqrt 2 G}{\epsilon}\big)^2$ we get
              \begin{align*}
                \mathrm{rev}(q^\star)- \max_{i\in [k]}\big(\mathrm{rev}(x_i)\big) &\leq \frac{1 + k\xi + G^2\sum_{i\in[k]}\gamma^2}{2\sum_{i\in[k]}\gamma}
                \leq \epsilon
              \end{align*}
              At each step we perform a small update to $q_k$ and query $\alpha_k$, therefore, Algorithm~\ref{Algorithm2} is always warm-started, i.e., $\|z_{k+1}-q_{k}\|_2^2< G\gamma$.
              Now, from Lemma~\ref{thm:3} the total steps required to update $\alpha$ are
              \begin{align*}
                &\sum\nolimits_{i=1}^{k}\log\big({\nfrac{m G\gamma}{\xi}}\big)\cdot{n^3(L+n^2\mu_{\max}^2)}{(\eta\mu_{\min})^{-2}}\\
                &= (\nfrac{\sqrt 2 G}{\epsilon})^2
                \log\big({\nfrac{2m G}{\epsilon}}\big)\cdot{n^3(L+n^2\mu_{\max}^2)}{(\eta\mu_{\min})^{-2}}
              \end{align*}
              The sum of the total gradient steps by Algorithm~\ref{Algorithm1}, and the total gradient steps by all calls of Algorithm~\ref{Algorithm2} is
              \begin{align*}
                O\big(\nfrac{G^2}{\epsilon^2}
                \log\big({\nfrac{2m G}{\epsilon}}\big)\cdot{n^3(L+n^2\mu_{\max}^2)}{(\eta\mu_{\min})^{-2}}\big).
              \end{align*}
              Using $G=(\nfrac{\mu_{\max}\rho}{\mu_{\min}\eta})\cdot n^2$ (from Lemma~\ref{lem:revenue_is_Lipschitz}) we have that Algorithm~\ref{Algorithm1} gets an $\epsilon$-approximation of optimal revenue in
              \begin{align*}
                \widetilde{O}\bigg(\frac{n^7\log m}{\epsilon^2} \frac{(\mu_{\max}\rho)^2}{(\mu_{\min}\eta)^4}
                (L+n^2\mu_{\max}^2)\bigg) \ \mathrm{steps.}
              \end{align*}
              Where $\widetilde{O}$ hides $\log$ factors in $n, \rho,\eta,\mu_{\max}, \nfrac{1}{\epsilon}$ and  $\nfrac{1}{\mu_{\min}}$.
            \end{proof}
          % %
          \subsection{Proof of
          Lemma~\ref{lem:revenue_is_Lipschitz}}
            \label{app:proof_of_revenue_is_Lipschitz}
            We use Lemma~\ref{lem:rev_is_Lipschitz_in_shifts} and Lemma~\ref{lem:alpha_Lipschitz_in_coverage} in the proof of Lemma~\ref{lem:revenue_is_Lipschitz}.
            The two lemmas split the Lipschitz continuity of ${\rm rev}(\cdot)$ into the Lipschitz continuity of ${\rm rev}_{\rm shift}(\cdot)$ and $\alpha_{ij}=q^{-1}_{ij}(\cdot)$ respectively.
            Their proofs are follow in Section~\ref{app:proof_of_rev_is_Lipschitz_in_shifts} and Section~\ref{app:proof_of_alpha_Lipschitz_in_coverage} respectively.
            \begin{lemma}\label{lem:rev_is_Lipschitz_in_shifts}
              \textbf{(Revenue is Lipschitz continuous in shifts).}
              For all $\alpha \in \cR^{(n-1)\times m}$,
              if $\mathrm{pdf}$, $f_{ij}(\phi)$ of the virtual valuations is bounded above by $\mu_{\max}$, and $\phi_{ij}$ is bounded above by $\rho$ $ \ \forall \ i\in[n],\ j\in[m]$,
              then $\mathrm{rev}_{\mathrm{shift}}(\alpha)$ is $(\mu_{\max}\rho n^{\frac{3}{2}})$-Lipschitz continuous.
            \end{lemma}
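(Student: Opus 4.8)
The plan is to establish Lipschitz continuity by producing a uniform bound on the gradient $\nabla \mathrm{rev}_{\mathrm{shift}}(\alpha)$ in Frobenius norm and then integrating along the segment joining any two points. Concretely, once I show $\|\nabla \mathrm{rev}_{\mathrm{shift}}(\alpha)\|_F \le \mu_{\max}\rho\, n^{3/2}$ for every $\alpha$, the fundamental theorem of calculus together with Cauchy--Schwarz gives $|\mathrm{rev}_{\mathrm{shift}}(\alpha)-\mathrm{rev}_{\mathrm{shift}}(\beta)| \le \mu_{\max}\rho\, n^{3/2}\|\alpha-\beta\|_F$, which is the claim. A preliminary step is to justify differentiating under the integral sign in the definition~(\hyperref[eq:revenue_as_function_of_alpha]{9}) of $\mathrm{rev}_{\mathrm{shift}}$; this follows from dominated convergence, using that each $f_{ij}$ is bounded by $\mu_{\max}$, each $F_{kj}\le 1$, and the first moments are finite.

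Next I would compute the partial derivatives. Writing $\mathrm{rev}_{\mathrm{shift}}(\alpha)=\sum_{i,j}R_{ij}(\alpha)$ with $R_{ij}(\alpha)=\Pr\nolimits_{\cU}[j]\int y\,f_{ij}(y)\prod_{k\neq i}F_{kj}(y+\alpha_{ij}-\alpha_{kj})\,dy$, note that $R_{ij}$ depends only on the type-$j$ shifts, so $\partial R_{ij}/\partial\alpha_{st}=0$ unless $t=j$; the Jacobian is block-diagonal across types. For a fixed type there are two cases. When $s=i$, the derivative acts on every factor $F_{kj}$ and yields $\Pr\nolimits_{\cU}[j]\int y\,f_{ij}(y)\sum_{k\neq i}f_{kj}(\cdots)\prod_{k'\neq i,k}F_{k'j}(\cdots)\,dy$, a sum of $n-1$ nonnegative density-weighted products. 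When $s\neq i$, the derivative acts only on the single factor $F_{sj}$ and yields one term $-\Pr\nolimits_{\cU}[j]\int y\,f_{ij}(y)f_{sj}(\cdots)\prod_{k\neq i,s}F_{kj}(\cdots)\,dy$. Summing over $i$, each coordinate $\partial\mathrm{rev}_{\mathrm{shift}}/\partial\alpha_{sj}$ is thus a sum of $O(n)$ integrals.

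I would then bound each such integral by replacing every density factor $f(\cdots)$ by $\mu_{\max}$ and every CDF factor $F(\cdots)$ by $1$, leaving $\Pr\nolimits_{\cU}[j]\,\mu_{\max}\int |y|\,f_{ij}(y)\,dy$, which the bounded-bid assumption~(\hyperref[asmp:bounded_bid]{20}) controls by $\Pr\nolimits_{\cU}[j]\,\mu_{\max}\rho$. Since each coordinate collects $O(n)$ such terms, $|\partial\mathrm{rev}_{\mathrm{shift}}/\partial\alpha_{sj}|\le \Pr\nolimits_{\cU}[j]\,n\,\mu_{\max}\rho$ up to a constant. Assembling the Frobenius norm, $\|\nabla\mathrm{rev}_{\mathrm{shift}}\|_F^2=\sum_{s,j}(\partial\mathrm{rev}_{\mathrm{shift}}/\partial\alpha_{sj})^2\le \big(\sum_s 1\big)\,n^2\mu_{\max}^2\rho^2\sum_j\Pr\nolimits_{\cU}[j]^2$; here the crucial simplification is $\sum_j\Pr\nolimits_{\cU}[j]^2\le\sum_j\Pr\nolimits_{\cU}[j]=1$, which removes any dependence on $m$, while $\sum_s 1\le n$ supplies the last factor of $n$. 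This gives $\|\nabla\mathrm{rev}_{\mathrm{shift}}\|_F\le \mu_{\max}\rho\, n^{3/2}$.

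The main obstacle is a moment subtlety: the naive term-by-term bound produces the first absolute moment $\int|y|f_{ij}(y)\,dy=\eE[|\phi_{ij}|]$, whereas assumption~(\hyperref[asmp:bounded_bid]{20}) as stated bounds $|\eE[\phi_{ij}]|$. One must therefore either read $\rho$ as a bound on the absolute first moment or exploit additional structure to reconcile the two; note that the obvious fix---pairing the $s=i$ and $s\neq i$ terms, which makes the $y$-factor cancel and replaces $\eE[|\phi_{ij}|]$ by a factor $(\alpha_{kj}-\alpha_{sj})$---reintroduces $\alpha$-dependence and so cannot yield the uniform constant claimed here. Getting the clean constant $n^{3/2}$ (rather than $2n^{3/2}$) likewise requires careful bookkeeping of the $2(n-1)$ terms. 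The remaining routine point is the dominated-convergence justification for differentiating under the integral, which the boundedness and integrability assumptions make straightforward.
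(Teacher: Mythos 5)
Your proposal takes essentially the same route as the paper's proof: decompose $\mathrm{rev}_{\mathrm{shift}}$ per type, compute the two-case partial derivatives (the $s=i$ derivative hitting every CDF factor, the $s\neq i$ derivative hitting one), bound each integral by $\mu_{\max}\rho$ via $f\le\mu_{\max}$, $F\le 1$, and assumption~(\hyperref[asmp:bounded_bid]{20}), and assemble the Frobenius norm using $\sum_j\Pr\nolimits_{\cU}[j]\le 1$ to get $\|\nabla \mathrm{rev}_{\mathrm{shift}}(\alpha)\|_F\le n^{3/2}\mu_{\max}\rho$, which yields the Lipschitz bound. The moment subtlety you flag is genuine and is present in the paper's own proof as well: its bounding step passes from $\big|\int y\,f_{ij}(y)f_{kj}(\cdots)\prod_{\ell} F_{\ell j}(\cdots)\,dy\big|$ to $\mu_{\max}\big|\int y\,f_{ij}(y)\,dy\big|$, which is only valid if $\rho$ is read as a bound on $\eE[|\phi_{ij}|]$ (or the support is sign-definite), exactly as you observe, and the same reading is what lets the paper bound $|t_1(k)-t_2(k)|$ by $\mu_{\max}\rho$ rather than $2\mu_{\max}\rho$.
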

            \begin{lemma}
              \label{lem:alpha_Lipschitz_in_coverage}
              \textbf{(Shifts is Lipschitz continuous in coverage).}
              For all $\alpha,\beta\in \cR^{(n-1)\times m}$,
              such that $q_{ij}(\beta+t(\alpha-\beta)) > \eta$,
              if the probability density function, $f_{ij}(\cdot)$, of virtual valuations is bounded by $\mu_{\min}$ and $\mu_{\max}$
              $\ \forall \ t\in [0,1],\ i\in[n],\ j\in[m]$,
              then
              \begin{align*}
                \hspace{8mm}\hspace{1mm}\|\alpha-\beta\|_F < \frac{\sqrt n}{\eta\mu_{\min}}\|q(\alpha)-q(\beta)\|_2.
              \end{align*}
            \end{lemma}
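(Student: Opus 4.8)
The plan is to show that the inverse map $q^{-1}$ is Lipschitz by controlling how a change in the shifts forces a change in the coverage, and the key structural input is the strict diagonal dominance of the per-type Jacobian established in Lemma~\ref{lem:Jacobian_is_invertible}. First I would exploit the sparsity relation \eqref{eq:Jacobian_is_sparse}: since $q_{\cdot j}$ depends only on the type-$j$ shifts, both sides decompose across types, so it suffices to prove a per-type estimate $\|q_{\cdot j}(\alpha)-q_{\cdot j}(\beta)\|_2 \ge \eta\mu_{\min}\,\Delta_j$, where $\Delta_j \defeq \max_{i}|\alpha_{ij}-\beta_{ij}|$. Summing these and using the crude bound $\|\alpha_{\cdot j}-\beta_{\cdot j}\|_2 \le \sqrt{n-1}\,\Delta_j$ yields $\|\alpha-\beta\|_F \le \tfrac{\sqrt{n-1}}{\eta\mu_{\min}}\|q(\alpha)-q(\beta)\|_2 < \tfrac{\sqrt n}{\eta\mu_{\min}}\|q(\alpha)-q(\beta)\|_2$, which is the claimed inequality.

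For the per-type estimate I would fix $j$, write $\gamma(t)=\beta+t(\alpha-\beta)$, and apply the fundamental theorem of calculus along the segment to get $q_{\cdot j}(\alpha)-q_{\cdot j}(\beta)=\bar M_j\,(\alpha_{\cdot j}-\beta_{\cdot j})$ with $\bar M_j=\int_0^1 J_{q,j}(\gamma(t))\,dt$; this averaged matrix inherits the sign pattern \eqref{eq:negative_gradient} and the row-sum identity \eqref{eq:coverageAsProbability}, hence is strictly diagonally dominant. Letting $i'$ be the row attaining $\Delta_j$ and choosing the sign (swapping $\alpha,\beta$ if needed) so that $\alpha_{i'j}-\beta_{i'j}=\Delta_j\ge0$, I would bound $\|q_{\cdot j}(\alpha)-q_{\cdot j}(\beta)\|_2$ below by the single coordinate $|(q_{\cdot j}(\alpha)-q_{\cdot j}(\beta))_{i'}|$. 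Because the off-diagonal entries of $\bar M_j$ are nonpositive and each $|\alpha_{sj}-\beta_{sj}|\le\Delta_j$, the most adverse case is when every competitor moves by $+\Delta_j$; combined with \eqref{eq:coverageAsProbability} this collapses the $i'$-th coordinate to $\Delta_j$ times the dominance gap of row $i'$, which equals the (averaged) cross term with respect to the one advertiser whose shift is held fixed.

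The hard part is the final lower bound on this dominance gap, namely that the cross term of advertiser $i'$ with the fixed advertiser $c$ stays at least $\eta\mu_{\min}$ along the segment. I would argue this by isolating $c$: since $c$'s shift is unchanged while $i'$ gains $\Delta_j$, the relative gap $\alpha_{i'j}-\alpha_{cj}$ grows by exactly $\Delta_j$, so monotonicity of the $F_{kj}$ lets me lower-bound the coverage increment by $\Pr_\cU[j]\int f_{i'j}(y)\prod_{k\neq i',c}F_{kj}(\cdot)\,[F_{cj}(y+a_c+\Delta_j)-F_{cj}(y+a_c)]\,dy$; rewriting the bracket as $\int_0^{\Delta_j} f_{cj}\,ds$ and using $f_{cj}\ge\mu_{\min}$ on its support together with the $\eta$-coverage assumption (which keeps advertiser $i'$ in contention with probability exceeding $\eta$) produces the factor $\eta\mu_{\min}\Delta_j$. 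The delicate point, and the step I expect to require the most care, is that $f\ge\mu_{\min}$ holds only on the support of each density, so I must invoke the hypothesis $q_{ij}(\gamma(t))>\eta$ along the \emph{entire} segment to guarantee that the swept window of $c$'s bids overlaps the region where $i'$ actually wins with probability $>\eta$; this competitiveness, forced by the $\eta$-coverage assumption, is exactly what rules out the degenerate non-overlapping-support configurations in which the gap—and hence the claimed bound—could otherwise collapse.
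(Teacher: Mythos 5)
Your proposal is correct and is essentially the paper's own argument in matrix form: your averaged-Jacobian bound on the coordinate $i'$ attaining $\Delta_j$ (nonpositive off-diagonal entries, the row-sum identity \eqref{eq:coverage_as_probability_1}, and the cross term with the fixed advertiser bounded below by $\eta\mu_{\min}$ via the $\eta$-coverage and $\mu_{\min}$ hypotheses) is precisely the content of the paper's Lemma~\ref{lem:unique_shift_for_shift} combined with Lemma~\ref{lem:lower_bound_gradient_of_coverage}, followed by the same fundamental-theorem-of-calculus step along the segment and the same per-type summation yielding the Frobenius-norm bound. The support-overlap subtlety you flag at the end is present but unaddressed in the paper as well, since its Lemma~\ref{lem:lower_bound_gradient_of_coverage} likewise applies $f_{ij}\geq\mu_{\min}$ pointwise under Assumption~(\hyperref[asmp:distributed_distribution]{18}), so your treatment is if anything more careful on that point.
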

            \begin{proof}[\unskip\nopunct]
              \textit{Proof of Lemma~\ref{lem:revenue_is_Lipschitz}.}
              Let $\alpha$, $\beta\in \cR^{(n-1)\times m}$ be the shifts achieving $q_1$ and $q_2$ respectively. Then by Lemma~\ref{lem:rev_is_Lipschitz_in_shifts} and Lemma~\ref{lem:alpha_Lipschitz_in_coverage} we have
              \begin{align}
                \label{eq:revenue_Lipschitz_eq1}
                \hspace{8mm}\hspace{-21mm}|\mathrm{rev}(q(\alpha))-\mathrm{rev}(q(\beta))|&\stackrel{{\rm Lemma}~\ref{lem:rev_is_Lipschitz_in_shifts}}{\leq} \mu_{\max}\rho n^{\frac{3}{2}} \|\alpha-\beta\|_{F}\\
                \hspace{8mm}\hspace{-21mm}\label{eq:revenue_Lipschitz_eq2}
                \|\alpha-\beta\|_F &\stackrel{{\rm Lemma}~\ref{lem:alpha_Lipschitz_in_coverage}}{<} \frac{\sqrt n}{\eta\mu_{\min}}\|q(\alpha)-q(\beta)\|_2.
              \end{align}
              By combining Equation~\eqref{eq:revenue_Lipschitz_eq1} and Equation~\eqref{eq:revenue_Lipschitz_eq2} we get the required result
              \begin{align}
                |\mathrm{rev}(q_1)-\mathrm{rev}(q_2)|\stackrel{\eqref{eq:revenue_Lipschitz_eq1},\eqref{eq:revenue_Lipschitz_eq2}}{<}
                \frac{\mu_{\max}\rho}{\mu_{\min}\eta} n^2\|q_1-q_2\|_2.\label{eq:lem:revenue_is_Lipschitz_eq1}
              \end{align}
            \end{proof}

            \subsubsection{Proof of Lemma~\ref{lem:rev_is_Lipschitz_in_shifts}}
              \label{app:proof_of_rev_is_Lipschitz_in_shifts}
              \begin{proof}
                We first consider the revenue for one user type $j$, $\mathrm{rev}_{\mathrm{shift},\ j}(\alpha)$, and then combine the result across all user type to show that $\mathrm{rev}_{\mathrm{shift}}(\alpha)$ is Lipschitz continuous.
                Formally, we define $\mathrm{rev}_{\mathrm{shift},\ j}(\alpha)$ as
                \begin{align*}
                  \mathrm{rev}_{\mathrm{shift},\ j}(\alpha)&\coloneqq \sum_{i\in[n]}
                  \Pr_{\cU}[j]\hspace{-3mm}\int\limits_{\mathrm{supp}(f_{ij})}\hspace{-2mm}y f_{ij}(y)\hspace{-3mm}
                  \prod_{k\in [n]\backslash \{i\}}\hspace{-4mm}F_{kj}(y+\alpha_{ij}-\alpha_{kj})dy.
                  \labelthis{Revenue from type $j$}%\label{eq:revenue_as_function_of_alpha}
                \end{align*}
                Then the total revenue $\mathrm{rev}_{\mathrm{shift}}(\alpha)$ is just a sum of $\mathrm{rev}_{\mathrm{shift},\ j}(\alpha)$ for all user types
                \begin{align*}
                  \mathrm{rev}_{\mathrm{shift}}(\alpha)&=\sum_{j=1}^{m}\mathrm{rev}_{\mathrm{shift},\ j}(\alpha).
                \end{align*}
                We can express $\nabla \mathrm{rev}_{\mathrm{shift},\ j}(\alpha)$ as shown in Figure~\ref{fig:missing_equations_from_rev_is_Lipschitz}.
                \begin{figure*}[t!]
                  \begin{center}
                    \fbox{
                    \minipage{\linewidth}
                    \footnotesize{
                    \begin{flalign*}
                      &{\rm For\ all\ }j,k\in [m],\  i\in [n-1], \st j\neq k &\text{\white{.}}\\
                      &\quad\frac{\partial \mathrm{rev}_{\mathrm{shift},\ j}(\alpha)}{\partial \alpha _{ij}} =
                      \Pr_{\cU}[j] \sum _{k\neq i}\hspace{1mm}\int\limits_{\mathrm{supp}(f_{ij})}\hspace{-5mm}y f_{ij}(y) f_{kj}(y +\alpha _{ij} -\alpha _{kj})\prod _{\ell \neq i,k}\hspace{-1mm}F_{\ell j}(y +\alpha _{ij} -\alpha _{\ell j}) dy \numberthis\label{eq:Lem67eq1}\\
                      &\quad\hspace{17mm}\quad -\Pr_{\cU}[j]\sum _{k\neq i}\hspace{1mm}\int\limits_{\mathrm{supp}
                      (f_{kj})}\hspace{-5mm}y f_{kj}(y ) f_{ij}(y +\alpha _{kj} -\alpha _{ij})\prod _{\ell \neq i,k} F_{\ell j}(y +\alpha _{kj} -\alpha _{\ell j})dy\\
                      &\quad\frac{\partial \mathrm{rev}_{\mathrm{shift},\ j}(\alpha)}{\partial \alpha _{ik}}
                      =0\  \hspace{10mm}\numberthis\label{eq:zero_gradient_revenue}
                    \end{flalign*}
                    }
                    \endminipage
                    }
                  \end{center}
                  \caption{
                  {\em Gradient of $\mathrm{rev}_{\mathrm{shift},j}(\cdot)$.}
                  Equations from the proof of  Lemma~\ref{lem:rev_is_Lipschitz_in_shifts}.
                  }
                  \label{fig:missing_equations_from_rev_is_Lipschitz}
                \end{figure*}
                We can observe that every term in the gradient (Equation~\eqref{eq:Lem67eq1},\ Equation~\eqref{eq:zero_gradient_revenue}) is a linear function of $f_{ij}(\cdot)$ and $F_{ij}(\cdot)$ for some $i\in [n]$ and $j\in [m]$.
                Since, each term in the gradient (Equation~\eqref{eq:Lem67eq1}) involves at most $2n$ terms of the form of Equation~\eqref{eq:Lem67eq2} for some $i,k,\ell\in [n]$ and $j\in [m]$,
                \begin{align*}
                  \int\limits_{\mathrm{supp}(f_{ij})}\hspace{-4mm} y f_{ij}(y)f_{kj}(y+\alpha_{ij}-\alpha_{kj})
                  \hspace{-1mm}\prod _{\ell \neq i,k}F_{\ell j}(y+\alpha_{ij}-\alpha_{\ell j})dy.\numberthis\label{eq:Lem67eq2}
                \end{align*}
                Bounding this term, for all $i,k,\ell\in [n]$ and $j\in [m]$ by $\mu_{\max}\rho$ would give us a bound on $\nabla \mathrm{rev}_{\mathrm{shift}}(\alpha)$.
                To this end, consider
                \begin{align*}
                  \bigg|\hspace{-4mm}\int\limits_{\mathrm{supp}(f_{ij})}\hspace{-4mm}y f_{ij}(y)f_{kj}(y+\alpha_{ij}-\alpha_{kj})
                  \hspace{-1mm}\prod _{\ell \neq i,k}\hspace{-1mm}F_{\ell j}(y+\alpha_{ij}-\alpha_{\ell j})dy\bigg|
                  &\stackrel{(\hyperref[asmp:distributed_distribution]{18})}{\leq}
                  \mu_{\max}\bigg|\hspace{-4mm}\int\limits_{\mathrm{supp}(f_{ij})}\hspace{-4mm} y f_{ij}(y)dy\bigg| & ({\rm Using \ }F_{ij}(\cdot)\leq 1)\\
                  &\stackrel{(\hyperref[asmp:bounded_bid]{20})}{\leq} \mu_{\max}\rho.\numberthis\label{eq:Lem67eq4}
                \end{align*}
                For all $k\in[n]$, let terms $t_1(k)$ and $t_2(k)$ be defined as follows
                \begin{align*}
                  &t_1(k)\coloneqq\hspace{-4mm}\hspace{-1mm}\int\limits_{\mathrm{supp}(f_{ij})}
                  \hspace{-5mm}y f_{ij}(y)f_{kj}(y+\alpha_{ij}-\alpha_{kj})\prod _{\ell \neq i,k}F_{\ell j}(y+\alpha_{ij}-\alpha_{\ell j})dy\numberthis\label{eq:term1}\\
                  &t_2(k)\coloneqq\hspace{-6mm}\int\limits_{\mathrm{supp}(f_{kj})}\hspace{-5mm}y f_{kj}(y)f_{ij}(y+\alpha_{kj}-\alpha_{ij})
                  \prod _{\ell \neq i,k}F_{\ell j}(y+\alpha_{kj}-\alpha_{\ell j})dy.\numberthis\label{eq:term2}
                \end{align*}
                Then rewriting the gradient, from Figure~\ref{fig:missing_equations_from_rev_is_Lipschitz}, we have
                \begin{align*}
                  \bigg|\frac{\partial \mathrm{rev}_{\mathrm{shift},\ j}(\alpha)}{\partial \alpha _{ij}}\bigg| &= \hspace{1.5mm}\Pr_{\cU}[j]\sum_{k\in[n-1]\backslash \{i\} }\big(t_1(k)-t_2(k)\big)\\
                  &\hspace{-1mm}\stackrel{\eqref{eq:Lem67eq4}}{\leq} \Pr_{\cU}[j]\sum_{k\in [n-1]\backslash\{i\}}\mu_{\max}\rho\\
                  &\hspace{0.0mm}\stackrel{}{\leq} (n-2)\Pr_{\cU}[j]\ \rho\mu_{\max}\numberthis\label{eq:Lem67eq5}.
                \end{align*}
                Now calculating the Frobenius norm of $\mathrm{rev}_{\mathrm{shift,\ j}}(\alpha)$ we get
                \begin{align*}
                  &\hspace{-5.5mm}\|\nabla \mathrm{rev}_{\mathrm{shift},\ j}(\alpha)\|_F^2
                  =
                  \hspace{1mm}\sum_{\substack{i\in [n-1]\\ k\in [m]}}\bigg|\frac{\partial \mathrm{rev}_{\mathrm{shift},\ j}(\alpha)}{\partial \alpha _{ik}}\bigg|^2\\
                  &\hspace{23mm}\stackrel{\eqref{eq:zero_gradient_revenue}}{=}
                  \sum_{i\in [n-1]}\bigg|\frac{\partial \mathrm{rev}_{\mathrm{shift},\ j}(\alpha)}{\partial \alpha_{ij}}\bigg|^2\numberthis\label{eq:term3}\\
                  &\hspace{23mm}\stackrel{\eqref{eq:Lem67eq5}}{\leq} \Pr_{\cU}[j](n-1)((n-2)\rho\mu_{\max})^2.\numberthis\label{eq:gradrevjbound}
                \end{align*}
                Now, we proceed to bound $\nabla \mathrm{rev}_{\mathrm{shift}}(\alpha)$
                \begin{align*}
                  & \hspace{5mm} \|\nabla \mathrm{rev}_{\mathrm{shift}}(\alpha)\|_F^2
                  \stackrel{}{=}\hspace{1mm}
                    \sum_{\substack{i\in [n-1]\\ j\in [m]}}\bigg|\sum_{k\in [m]}\frac{\partial \mathrm{rev}_{\mathrm{shift},\ k}(\alpha)}{\partial \alpha _{ij}}\bigg|^2\\
                  &\hspace{30mm}\stackrel{\eqref{eq:zero_gradient_revenue}}{=}\sum_{\substack{i\in [n-1]\\ j\in[m]}}\bigg|\frac{\partial \mathrm{rev}_{\mathrm{shift},\ j}(\alpha)}{\partial \alpha_{ij}}\bigg|^2\\
                  &\hspace{30mm}\stackrel{\eqref{eq:gradrevjbound}}{=}\sum_{\substack{j\in[m]}}\|\mathrm{rev}_{\mathrm{shift},\ j}(\alpha)\|_F^2\\
                  &\hspace{31mm}\stackrel{}{\leq} (n-1)((n-2)\rho\mu_{\max})^2\sum_{j\in
                  [m]}\Pr_{\cU}[j]\\
                  &\hspace{31mm}\stackrel{}{\leq} (n-1)((n-2)\rho\mu_{\max})^2.\numberthis\label{eq:gradient_of_revenue_j_is_bounded}
                \end{align*}
                Therefore, it follows that $\|\nabla \mathrm{rev}_{\mathrm{shift}}(\alpha)\|_F \stackrel{}{\leq} n^{\frac{3}{2}}\rho\mu_{\max}$.
              \end{proof}

            \subsubsection{Proof of Lemma~\ref{lem:alpha_Lipschitz_in_coverage}}
              \label{app:proof_of_alpha_Lipschitz_in_coverage}

              \paragraph{Technical Lemmas.}
              We use the following lemmas in the proof of Lemma~\ref{lem:alpha_Lipschitz_in_coverage}.
              The first lemma is a lower bound on the derivative $q_{ij}(\alpha)$, and follows from assumptions~(\hyperref[asmp:eta_coverage]{17}) and (\hyperref[asmp:distributed_distribution]{18}).
              \begin{lemma}
                \label{lem:lower_bound_gradient_of_coverage}
                {\bf (Lower bound of derivative of $q_{ij}(\alpha)$).}
                If for all $i\in[n],\ j \in [m]$ the probability density function, $f_{ij}(\cdot)$, of virtual valuations is bounded below by $\mu_{\min}$,
                and every advertiser has at least $\eta$ coverage on every type $j\in [m]$,
                then the absolute value of each gradient $\big|\frac{\partial q_{ij}(\alpha)}{\partial \alpha_{sj}}\big|$ is lower bounded by $\eta\mu_{\min}$, i.e.,
                \begin{align*}
                  \bigg|\frac{\partial q_{ij}(\alpha)}{\partial \alpha_{sj}}\bigg| \geq \eta\mu_{\min} \ \forall \ i,s \in [n],\ j \in [m], \alpha \in \cR^{n\times m}.
                \end{align*}
              \end{lemma}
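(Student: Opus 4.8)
The plan is to differentiate the closed form of the coverage given in~(\hyperref[def:coverage2]{10}),
\[
  q_{ij}(\alpha) = \Pr\nolimits_{\cU}[j]\int_{\mathrm{supp}(f_{ij})} f_{ij}(y)\prod_{k\in[n]\backslash\{i\}} F_{kj}(y+\alpha_{ij}-\alpha_{kj})\,dy,
\]
and to bound the off-diagonal entries ($s\neq i$) and the diagonal entry ($s=i$) separately. Note that only the case $s=i$ or $s\in[n]\backslash\{i\}$ with the \emph{same} type $j$ can give a nonzero derivative, by the sparsity already recorded in~\eqref{eq:Jacobian_is_sparse}.

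For the off-diagonal case, only the single factor $F_{sj}(y+\alpha_{ij}-\alpha_{sj})$ depends on $\alpha_{sj}$, so differentiating under the integral sign (justified by assumptions~(\hyperref[asmp:distributed_distribution]{18}) and~(\hyperref[asmp:lipschitz_distribution]{19})) gives
\[
  \Bigl|\tfrac{\partial q_{ij}(\alpha)}{\partial \alpha_{sj}}\Bigr| = \Pr\nolimits_{\cU}[j]\int_{\mathrm{supp}(f_{ij})} f_{ij}(y)\,f_{sj}(y+\alpha_{ij}-\alpha_{sj})\hspace{-3mm}\prod_{k\in[n]\backslash\{i,s\}}\hspace{-3mm} F_{kj}(y+\alpha_{ij}-\alpha_{kj})\,dy.
\]
The key idea is to re-synthesize the coverage $q_{ij}(\alpha)$ out of this integral. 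First I would lower bound $f_{sj}(\cdot)\geq \mu_{\min}$ using assumption~(\hyperref[asmp:distributed_distribution]{18}); then, since reinserting the dropped factor $F_{sj}\leq 1$ can only shrink a product of CDFs, I would use $\prod_{k\neq i,s}F_{kj}\geq \prod_{k\neq i}F_{kj}$ to restore the full product. These two steps collapse the integral to exactly $\mu_{\min}\,q_{ij}(\alpha)$, and assumption~(\hyperref[asmp:eta_coverage]{17}) then yields $\bigl|\partial q_{ij}/\partial\alpha_{sj}\bigr|\geq \mu_{\min}\,q_{ij}(\alpha)\geq \eta\mu_{\min}$.

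For the diagonal case I would reuse the identity~\eqref{eq:coverageAsProbability} already derived in the proof of Lemma~\ref{lem:Jacobian_is_invertible}, namely $\tfrac{\partial q_{ij}}{\partial\alpha_{ij}}=\sum_{t\in[n]\backslash\{i\}}\bigl|\tfrac{\partial q_{ij}}{\partial\alpha_{tj}}\bigr|$. Each summand is an off-diagonal term already bounded below by $\eta\mu_{\min}$, so (with $n\geq 2$) the diagonal entry is at least $(n-1)\eta\mu_{\min}\geq \eta\mu_{\min}$, completing the bound for all $i,s\in[n]$.

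I expect the one genuinely nontrivial step to be the off-diagonal bound: the derivative naturally produces a \emph{density} $f_{sj}$ exactly where the coverage integrand carries a \emph{CDF} $F_{sj}$, and the whole argument hinges on observing that $f_{sj}\geq\mu_{\min}$ combined with the trivial estimate $F_{sj}\leq1$ lets one trade the density back for the CDF and thereby recover the coverage $q_{ij}(\alpha)$. Everything else is routine differentiation under the integral sign and a direct appeal to the two stated assumptions.
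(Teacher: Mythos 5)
Your proposal is correct and follows essentially the same route as the paper's proof: differentiate under the integral, lower bound the density factor $f_{sj}$ by $\mu_{\min}$, reinsert the dropped CDF factor via $F_{sj}\leq 1$ to recover the coverage integral, and invoke the $\eta$-coverage assumption. Your explicit handling of the diagonal case $s=i$ via Eq.~\eqref{eq:coverageAsProbability} is a small completeness point the paper leaves implicit, but it does not change the argument.
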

              \begin{proof}
              Each advertiser has at least $\eta$ coverage on every type, i.e., we have for all $i\in [n],\ j \in [m]$
              \begin{align*}
                &q_{ij}(\alpha) = \int\limits_{\mathrm{supp}(f_{ij})}\hspace{-5mm} f_{ij}(y)\prod_{k\in [n]\backslash \{i\}}  F_{kj}(y+\alpha_{ij}-\alpha_{kj})dy \stackrel{ }{\geq} \eta.
                \numberthis
                \label{eq:lower_bounding_coverage_2}
              \end{align*}
              Now considering $\frac{\partial q_{ij}(\alpha)}{\partial \alpha_{sj}}$ we get
              \begin{align*}
                &\hspace{26.5mm}\bigg|\frac{\partial q_{ij}(\alpha)}{\partial \alpha_{sj}}\bigg|=\bigg|\hspace{-3mm}\int\limits_{\mathrm{supp}(f_{ij})}\hspace{-5mm} f_{ij}(y)f_{sj}(y+\alpha_{sj}-\alpha_{ij})\prod_{k\neq
                i,s}  F_{kj}(y+\alpha_{ij}-\alpha_{kj})dy\hspace{1mm}\bigg|\\
                &\hspace{26.5mm}\hspace{15.5mm}\stackrel{}{\geq}\mu_{\min}\bigg|\hspace{-3mm}\int\limits_{\mathrm{supp}(f_{ij})}\hspace{-5mm} f_{ij}(y)\prod_{k\neq
                i,s}  F_{kj}(y+\alpha_{ij}-\alpha_{kj})dy\hspace{1mm}\bigg|\hspace{-26.5mm} \hspace{36.5mm} ({\rm Using \ } f_{ij}(\phi_{ij}) \geq \mu_{\min} )\\
                &\hspace{26.5mm}\hspace{15.5mm}\stackrel{}{\geq} \mu_{\min}\bigg|\hspace{-3mm}\int\limits_{\mathrm{supp}(f_{ij})}\hspace{-5mm} f_{ij}(y)\prod_{k\neq
                i}  F_{kj}(y+\alpha_{ij}-\alpha_{kj})dy\hspace{1mm}\bigg| \hspace{7mm}\hspace{-26.5mm}\hspace{36.5mm} ({\rm Using \ } F_{ij}(\phi_{ij}) \leq 1 ) \\
                &\hspace{26.5mm}\hspace{14.5mm}\stackrel{\eqref{eq:lower_bounding_coverage_2}}{\geq} \eta\mu_{\min}.\numberthis
              \end{align*}
            \end{proof}
              \noindent In the next lemma we extend the lower bound to the directional derivative of $q_{ij}(\alpha)$.

              \begin{lemma}\label{lem:unique_shift_for_shift}
                {\bf (Lower bound of directional derivative of $q_{ij}(\alpha)$).}
                Given a shift $\alpha_j \in \cR^{n-1}$, $t_{\max} > 0$, and  a direction vector $u \in \cR^{n-1}$ , s.t. %
                $\|u\|_2=1$,
                if the probability density function, $f_{ij}(\cdot)$, of virtual valuations is bounded below by $\mu_{\min}$ and bounded above by $\mu_{\max}$ $ \ \forall \ i\in[n],\ j \in [m]$,
                and $q_{ij}(tu+\alpha_j)>\eta$ for all$ \ t \in [0,t_{\max}]$,
                then for $i \in \argmax_{k\in [n-1]} |u_k|$ and for all $\ t \in [0,t_{\max}]$
                \begin{align}
                  \mathrm{sign}(u_{i})\frac{\partial q_{ij}(tu+\alpha_j))}{\partial t} > \frac{\eta\mu_{\min}}{\sqrt n}.
                \end{align}
              \end{lemma}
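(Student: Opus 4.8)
The plan is to compute the directional derivative explicitly and then exploit the sign structure of the partial derivatives of $q_{ij}$ together with the lower bound of Lemma~\ref{lem:lower_bound_gradient_of_coverage}. Recall that one advertiser (say advertiser $n$) has its shift fixed, so the free coordinates are indexed by $[n-1]$ and effectively $u_n=0$. Writing $c_k\coloneqq -\partial q_{ij}/\partial\alpha_{kj}$ for $k\neq i$, the facts established in the proof of Lemma~\ref{lem:Jacobian_is_invertible} give $c_k\geq 0$ and $\partial q_{ij}/\partial\alpha_{ij}=\sum_{k\in[n]\setminus\{i\}}c_k$. First I would apply the chain rule and fold in the fixed coordinate using $u_n=0$:
\[
\frac{\partial q_{ij}(tu+\alpha_j)}{\partial t}=\sum_{s\in[n-1]}\frac{\partial q_{ij}}{\partial\alpha_{sj}}\,u_s = u_i\sum_{k\in[n]\setminus\{i\}}c_k-\sum_{k\in[n]\setminus\{i\}}c_k u_k=\sum_{k\in[n]\setminus\{i\}}c_k\,(u_i-u_k).
\]

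Next I would multiply by $\mathrm{sign}(u_i)$ and argue termwise. Since $i\in\argmax_{k}|u_k|$, for every $k$ we have $\mathrm{sign}(u_i)(u_i-u_k)=|u_i|-\mathrm{sign}(u_i)u_k\geq |u_i|-|u_k|\geq 0$, so each summand $c_k\,\mathrm{sign}(u_i)(u_i-u_k)$ is non-negative. Hence it suffices to retain a single term to produce a strictly positive lower bound, and the natural choice is the fixed advertiser $k=n$: because $u_n=0$, its contribution is exactly $|u_i|\,c_n$.

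To bound this isolated term I would use two ingredients. First, since $u$ is a unit vector in $\cR^{n-1}$ whose largest-magnitude coordinate is $u_i$, we have $|u_i|\geq 1/\sqrt{n-1}>1/\sqrt n$. Second, the hypothesis $q_{ij}(tu+\alpha_j)>\eta$ guarantees the $\eta$-coverage assumption~(\hyperref[asmp:eta_coverage]{17}) holds at every point of the segment, so Lemma~\ref{lem:lower_bound_gradient_of_coverage} applies at $tu+\alpha_j$ and yields $c_n=|\partial q_{ij}/\partial\alpha_{nj}|\geq \eta\mu_{\min}$. Combining these with the non-negativity of all other terms gives
\[
\mathrm{sign}(u_i)\frac{\partial q_{ij}(tu+\alpha_j)}{\partial t}\ \geq\ |u_i|\,c_n\ \geq\ \frac{\eta\mu_{\min}}{\sqrt{n-1}}\ >\ \frac{\eta\mu_{\min}}{\sqrt n},
\]
which is the claimed bound; note the strict inequality comes for free from the $\sqrt{n-1}<\sqrt n$ gap.

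The main obstacle is the cross terms $c_k(u_i-u_k)$ for $k\neq i,n$, whose signs are a priori unclear since the competitors' shifts may move in either direction along $u$. The resolution is precisely the choice of $i$ as the maximum-magnitude coordinate, which forces $\mathrm{sign}(u_i)(u_i-u_k)\geq 0$ for all $k$ simultaneously; this lets those terms only help and charges the entire strict lower bound to the single fixed-advertiser term, where $u_n=0$ removes any cancellation.
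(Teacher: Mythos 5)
Your proof is correct and follows essentially the same route as the paper's: both isolate the fixed advertiser's contribution $|u_i|\,\bigl|\partial q_{ij}/\partial\alpha_{nj}\bigr|$, use the choice $i\in\argmax_k|u_k|$ to make the cross terms $c_k\,\mathrm{sign}(u_i)(u_i-u_k)$ nonnegative, invoke Lemma~\ref{lem:lower_bound_gradient_of_coverage} along the segment, and finish with $|u_i|\geq \nfrac{1}{\sqrt{n-1}} > \nfrac{1}{\sqrt{n}}$. The only difference is presentational: you derive the decomposition abstractly via the chain rule and the row-sum identity from Lemma~\ref{lem:Jacobian_is_invertible}, whereas the paper writes out the same derivative as explicit integrals (Figure~\ref{fig:missing_equations_from_unique_shift_for_shift}).
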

              \begin{proof}
                Consider $i \in \argmax_{k\in [n-1]} |u_k|$. Advertiser $i$'s bids are being increased faster than or equal to any other advertiser's.
                Recalling that the shift of advertiser $n$, $\alpha_{ij}=0$ for all user types $j\in[m]$, using Equation~(\hyperref[def:coverage2]{10}) we can express $ q_{ij}(tu+\alpha)$ and its gradient as shown in Figure~\ref{fig:missing_equations_from_unique_shift_for_shift}.
                \begin{figure*}
                  \centering
                  \fbox{
                  \minipage{0.95\linewidth}
                  \vspace{5mm}
                  \footnotesize
                  For all $i\in [n-1]$ and $j\in [m]$
                  \begin{align*}
                    &\hspace{3mm}q_{ij}(tu+\alpha) = \hspace{-1.5mm}\int\limits_{\mathrm{supp}(f_{ij})}\hspace{-5mm} f_{ij}(y)F_{nj}(y+tu_i+\alpha_{ij})\prod_{k\neq i,n}F_{kj}(y+t(u_i-u_k)+\alpha_{ij}-\alpha_{kj})dy\numberthis\\
                    &\frac{\partial q_{ij}(tu+\alpha)}{\partial t}
                    = u_i\hspace{-5mm}\int\limits_{\mathrm{supp}(f_{ij})}\hspace{-5mm} f_{ij}(y)f_{nj}(y+tu_i+\alpha_{ij})\hspace{0.8mm}\prod_{k\neq i,n}F_{kj}(y+t(u_i-u_k)+\alpha_{ij}-\alpha_{kj})dy\numberthis \label{eq:Lem611Eq1}\\
                    &\hspace{13mm}+(u_i\hspace{-0.5mm}-\hspace{-0.5mm}u_k)\hspace{-5mm}\int\limits_{\mathrm{supp}(f_{ij})}\hspace{-5mm} f_{ij}(y)F_{nj}(y\hspace{-0.5mm}+\hspace{-0.5mm}tu_i\hspace{-0.5mm}+\hspace{-0.5mm}%
                    \alpha_{ij})\hspace{-2mm}\sum_{k\neq i,n}\hspace{-1.5mm} f_{kj}(y\hspace{-0.5mm}+\hspace{-0.5mm}t(u_i\hspace{-0.5mm}-\hspace{-0.5mm}u_k)\hspace{-0.5mm}+\hspace{-0.5mm}\alpha_{ij}\hspace{-0.5mm}-\hspace{-0.5mm}\alpha_{kj})
                    \hspace{-2mm}\prod_{\ell\neq i,k,n}\hspace{-2mm}F_{\ell j}(y\hspace{-0.5mm}+\hspace{-0.5mm}t(u_i-u_\ell )\hspace{-0.5mm}+
                    \hspace{-0.5mm}\alpha_{ij}\hspace{-0.5mm}-\hspace{-0.5mm}\alpha_{\ell j})dy
                  \end{align*}
                  \endminipage
                  }
                  \caption{
                  {\em Directional Derivative of $q_{ij}(\cdot)$.}
                  Equations from the proof of Lemma~\ref{lem:unique_shift_for_shift}.}
                  \label{fig:missing_equations_from_unique_shift_for_shift}
                  \end{figure*}\\
                  Since $i \in \argmax_{k\in [n-1]} |u_k|$, we have
                  \begin{align*}
                    |u_i| &\geq |u_k|\\
                    \mathrm{sign}(u_i)u_i &\geq \max(u_k,-u_k)\\
                    (u_i - u_k)\mathrm{sign}(u_{i}) &> 0.\label{eq:Lem611Eq2}\numberthis
                  \end{align*}
                  Since $\|u\|_2=\sum_{i\in [n-1]}|u_i|^2= 1$, we can lower bound $|u_i|^2$, the maximum coordinate of $u\in\cR^{n-1}$ by magnitude by $\frac{1}{n-1}$, i.e., $ |u_i|\geq \nfrac{1}{\sqrt{n-1}}$.
                  Multiplying Equation~\eqref{eq:Lem611Eq1} with $\mathrm{sign}(u_{i})$ and using Equation~\eqref{eq:Lem611Eq2} and the fact that the integrals involved are positive to lower bound the equation we get
                  \begin{flalign*}
                    &\hspace{38.0mm}\mathrm{sign}(u_i) \frac{\partial q_{ij}(tu+\alpha_j)}{\partial t}
                    \stackrel{\eqref{eq:Lem611Eq2}}{\geq} \mathrm{sign}(u_i)u_i\frac{\partial q_{ij}}{\partial \alpha_{nj}}\bigg|_{\alpha_j+tu}&\\%\numberthis\label{eq:Lem611Eq4}\\
                    &\hspace{74.5mm}\hspace{-5mm}\stackrel{\rm Lemma~\ref{lem:lower_bound_gradient_of_coverage}}{\geq} |u_i| \eta\mu_{\min}&\\
                    &\hspace{75mm}\stackrel{}{\geq} \frac{\eta\mu_{\min}}{\sqrt{n-1}}  \hspace{36.5mm}({\rm Using \ }|u_i| > \nfrac{1}{\sqrt{n-1}})\\
                    &\hspace{75mm}\stackrel{}{>} \frac{\eta\mu_{\min}}{\sqrt{n}}.&
                  \end{flalign*}
                  \vspace{-5mm}
                \end{proof}
              \begin{proof}[\unskip\nopunct]
                  \textit{Proof of Lemma~\ref{lem:alpha_Lipschitz_in_coverage}.}
                  Consider a type $j\in [m]$ and the corresponding shifts $\alpha_j,\beta_j\in \cR^n$, where $\alpha_j,\beta_j$ are the $j$-th columns of $\alpha$ and $\beta$ respectively.
                  Let $u\coloneqq\alpha_j-\beta_j$, then from Lemma~\ref{lem:unique_shift_for_shift} we have $\exists \ i \in [n-1]$ such that
                  \begin{align}
                    \label{eq:closealp}
                    \forall \ t \in [0,1], \ \bigg|\frac{\partial q_{ij}(tu+\beta_j)}{\partial t}\bigg|> \eta\mu_{\min}.
                  \end{align}
                  Consider this $i$, then from the fundamental theorem of calculus we have
                  \begin{align*}
                    &\hspace{5mm}\|q_{j}(\alpha_j)-q_{j}(\beta_j)\|_2^2
                    = \sum_{k\in [n]}\bigg|\int_{0}^{1}\frac{\partial q_{kj}(tu+\beta_j)}{\partial t}dt\bigg|^2\\
                    &\hspace{5mm}\hspace{30.5mm}\geq \bigg|\int_{0}^{1}\frac{\partial q_{ij}(tu+\beta_j)}{\partial t}dt\bigg|^2\\
                    &\hspace{5mm}\hspace{30.5mm}\hspace{-1mm}\stackrel{\eqref{eq:closealp}}{>} \frac{(\eta\mu_{\min})^2}{n}\bigg|\int_{0}^{1}(tu+\beta_j) dt\bigg|\\
                    &\hspace{5mm}\hspace{30.5mm}\stackrel{}{>} \frac{(\eta\mu_{\min})^2}{n} \|\alpha_j-\beta_j\|_2^2.\numberthis\label{eq:coverage_of_j_is_Lipschitz}
                  \end{align*}
                  Using Equation~\eqref{eq:coverage_of_j_is_Lipschitz} for every type $j\in[m]$ we get that $\|q(\alpha)-q(\beta)\|_F > \nfrac{(\eta\mu_{\min})^2}{n}\cdot\|\alpha-\beta\|_F$.
                \end{proof}

          \subsection{Proof of Lemma~\ref{thm:3}}
            \label{sec:algorithm2}
            \begin{algorithm}[t]
              \caption{ Algorithm2($\delta,\alpha_1,\xi, L,\eta,\mu_{\max},\mu_{\min}$)}
              \label{Algorithm2}
              \textbf{Input:}
                A target coverage $\delta\in[0,1]^{n\times m}$,
                an approximate shift $\alpha_1\in\cR^{n\times m}$,
                a constant $\xi>0$ that controls the accuracy,
                %
                % $\forall i\in [n], j\in[m], \ F_{ij}(\cdot),f_{ij}(\cdot)$ the CDF and PDF of $\phi_{ij}$,
                %
                Lipschitz constant $L>0$ of $f_{ij}(\cdot)$,
                the minimum coverage $\eta>0$,
                and the lower and upper bounds, $\mu_{\min}$ and $\mu_{\max}$, of $f_{ij}(\cdot)$.\\
              \textbf{Output:} An approximation $\alpha\in\cR^{n\times m}$ of shifts for $\delta$.

              \begin{algorithmic}[1]
                \STATE Initialize $\gamma\coloneqq (4nL+2n^3\mu_{\max }^2)^{-1}$
                \STATE Initialize \\ \quad$T\coloneqq \log\big({\nfrac{mn^3\cL(\alpha_1)}{\epsilon}}\big)\frac{(L+n^2\mu_{\max}^2)}{(\eta\mu_{\min})^2}$

                \FOR{t = 1,2,\dots,T}
                \STATE Compute $\nabla L(\alpha_{t})\coloneqq\nabla [\mathcal{L}_1(\alpha_{t}),\dots,\mathcal{L}_m(\alpha_{t})]$
                \STATE  Update $\alpha_{t+1}\coloneqq\ \alpha_{t}-\gamma\nabla L(\alpha_{t})$
                \ENDFOR
                \STATE \bf return $\alpha$
              \end{algorithmic}
            \end{algorithm}
               \noindent The following remark gives insight, allowing us to use a gradient-based algorithm to solve the {\em optimal shift problem}.
               \begin{remark}
                 \label{rem:properties_of_non_convex_loss}
                 We observe that $\nabla \cL=2\sum_{i,j}(q_{ij}(\alpha)-\delta_{ij})\nabla q_{ij}(\alpha)$ is a linear combination of the rows, $\{\nabla q_{ij}\}$, of $J_{q}(\alpha)$.
                 Since $\{\nabla q_{ij}(\alpha)\}$ are linearly independent, $\nabla \cL\neq0$ unless we are at the global minimum where $\delta=q(\alpha)$.
                 This guarantees that $\cL(\cdot)$ does not have any saddle-points or local-maxima, and that any local minimum is a global minimum.
               \end{remark}
             \noindent Now, to get an efficient complexity with a gradient-based algorithm we want to avoid small gradients ``far" from the optimal.
             Lemma~\ref{lem:lower_bounding_gradient_of_loss} shows that if $\cL(\alpha)$ greater than $\epsilon$, then the Frobenius norm $\|\cL(\alpha)\|_F$ of $\cL(\alpha)$ is greater than $\sqrt \epsilon$.
             The proof of Lemma~\ref{lem:lower_bounding_gradient_of_loss} is provided in Section~\ref{app:proof_of_lower_bounding_gradient_of_loss}.
            The proof of Lemma~\ref{lem:lower_bounding_gradient_of_loss} is provided in Section~\ref{app:proof_of_lower_bounding_gradient_of_loss}.
            \begin{lemma}
              \textbf{(Lower bounding $\nabla \cL_j(\cdot)$).}\label{lem:lower_bounding_gradient_of_loss}
              Given $\alpha_j\in \cR^{n-1}$,
              such that $\cL_j(\alpha_j) > \epsilon$ and $q_{ij}(\alpha_j) > \eta$,
              if the probability density function, $f_{ij}(\cdot)$, of virtual valuations is bounded below by $\mu_{\min}$ $ \ \forall \ i\in[n],\ j \in [m]$,
              then $\|\nabla \cL_j(\alpha_j)\|_2 > \frac{2}{n-1}\sqrt \epsilon \eta\mu_{\min}.$
            \end{lemma}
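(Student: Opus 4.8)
The plan is to view the per‑type loss as a squared residual norm and to reduce the bound to a spectral lower bound on the per‑type Jacobian. Fix a type $j\in[m]$ and write $r_i \coloneqq q_{ij}(\alpha_j)-\delta_{ij}$ for $i\in[n-1]$, so that $\cL_j(\alpha_j)=\sum_{i}r_i^2=\|r\|_2^2$ and the hypothesis $\cL_j(\alpha_j)>\epsilon$ becomes $\|r\|_2>\sqrt{\epsilon}$. By Remark~\ref{rem:properties_of_non_convex_loss}, restricted to type $j$, the gradient is $\nabla\cL_j(\alpha_j)=2\sum_i r_i\,\nabla q_{ij}(\alpha_j)=2J^\top r$, where $J\in\cR^{(n-1)\times(n-1)}$ is the per‑type block of the Jacobian with entries $J_{is}=\partial q_{ij}/\partial\alpha_{sj}$ (invertible by Lemma~\ref{lem:Jacobian_is_invertible}). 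It therefore suffices to lower bound $\|J^\top r\|_2$.

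The key idea is to test the gradient against the residual direction rather than against a single coordinate. Since $r/\|r\|_2$ is a unit vector,
\[
\|\nabla\cL_j(\alpha_j)\|_2 \;\ge\; \Big\langle \nabla\cL_j(\alpha_j),\, \tfrac{r}{\|r\|_2}\Big\rangle \;=\; \frac{2}{\|r\|_2}\, r^\top J^\top r \;=\; \frac{2}{\|r\|_2}\, r^\top S\, r,
\]
where $S\coloneqq \tfrac12(J+J^\top)$ is the symmetric part (the antisymmetric part annihilates the quadratic form). It then remains to show $r^\top S r \ge \eta\mu_{\min}\|r\|_2^2$, i.e. $\lambda_{\min}(S)\ge \eta\mu_{\min}$, after which $\|\nabla\cL_j(\alpha_j)\|_2 \ge 2\eta\mu_{\min}\|r\|_2 > 2\eta\mu_{\min}\sqrt{\epsilon}$, which is already stronger than the stated bound since $n-1\ge 1$.

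To bound $\lambda_{\min}(S)$ I would combine Lemma~\ref{lem:lower_bound_gradient_of_coverage} with the diagonal dominance extracted in the proof of Lemma~\ref{lem:Jacobian_is_invertible}. The diagonal entries satisfy $S_{ii}=\partial q_{ij}/\partial\alpha_{ij}\ge\eta\mu_{\min}$, and the off‑diagonal entries are nonpositive by \eqref{eq:negative_gradient}, so $|S_{is}|=\tfrac12(|J_{is}|+|J_{si}|)$. By \eqref{eq:coverageAsProbability} the $i$‑th off‑diagonal row sum of $J$ equals $J_{ii}$ minus the dropped term $|\partial q_{ij}/\partial\alpha_{nj}|$; the $i$‑th off‑diagonal column sum equals $J_{ii}$ minus $|\partial q_{nj}/\partial\alpha_{ij}|$, using the analogous column identity $\sum_{i\in[n]}\partial q_{ij}/\partial\alpha_{sj}=0$ that follows from $\sum_{i\in[n]}q_{ij}=\Pr_{\cU}[j]$ (the $\alpha$-shifted mechanism~(\hyperref[def:alpha_shifted_mechanism]{8}) always allocates the slot, so total type‑$j$ coverage is constant). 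Both dropped terms are at least $\eta\mu_{\min}$ by Lemma~\ref{lem:lower_bound_gradient_of_coverage}, so averaging the row and column estimates gives $S_{ii}-\sum_{s\neq i}|S_{is}|\ge\eta\mu_{\min}$ for every $i$, and Gershgorin then yields $\lambda_{\min}(S)\ge\eta\mu_{\min}$.

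I expect the main obstacle to be the cross terms in $\nabla\cL_j=2\sum_i r_i\nabla q_{ij}$: the residual signs $\mathrm{sign}(r_i)$ are uncontrolled, so any attempt to bound a single partial derivative $\partial\cL_j/\partial\alpha_{sj}$ from below fails because of cancellation. Passing to the quadratic form $r^\top S r$ is exactly what turns this cancellation into a sign‑definite quantity, and the quantitative dominance gap $\eta\mu_{\min}$ — supplied by the dropped advertiser‑$n$ entries via Lemma~\ref{lem:lower_bound_gradient_of_coverage} — is what keeps the Gershgorin estimate nonvacuous. A cruder route that tracks only the largest residual coordinate and applies the directional‑derivative bound of Lemma~\ref{lem:unique_shift_for_shift} loses a factor $\sqrt{n-1}$ from $|r_{i^\star}|\ge\|r\|_2/\sqrt{n-1}$ and another $\sqrt{n-1}$ from the $\nfrac{1}{\sqrt{n}}$ in that lemma, recovering precisely the stated $\tfrac{2}{n-1}\sqrt{\epsilon}\,\eta\mu_{\min}$.
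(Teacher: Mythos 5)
Your proposal is correct, but it takes a genuinely different route from the paper's, and it actually proves a stronger bound. The paper normalizes the residual in $\ell_1$: from $\cL_j(\alpha_j)>\epsilon$ it extracts $\sum_{i}|\delta_{ij}-q_{ij}(\alpha_j)|>\sqrt{\epsilon}$, sets $x_i=\frac{1}{\sqrt{\epsilon}}(\delta_{ij}-q_{ij}(\alpha_j))$, and invokes a dedicated lemma (Lemma~\ref{lem:lower_bounding_linear_combination_2}) lower-bounding $\|\sum_i x_i\nabla q_{ij}(\alpha_j)\|_2$ for any $\|x\|_1>1$; that lemma splits $x$ by sign, pushes the weighted row identity \eqref{eq:coverage_as_probability_1} together with Lemma~\ref{lem:lower_bound_gradient_of_coverage} through the Jacobian, and pigeonholes to find a \emph{single coordinate} of the linear combination of size at least $\frac{\eta\mu_{\min}}{2(n-1)}$, with the sign structure \eqref{eq:negative_gradient} ensuring the negative-weight terms only help — this pigeonhole is exactly where the $\frac{1}{n-1}$ in the stated bound originates. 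You instead test $\nabla\cL_j=2J^\top r$ against the residual direction and reduce the claim to $\lambda_{\min}\big(\tfrac12(J+J^\top)\big)\geq\eta\mu_{\min}$ via Gershgorin. Your dominance argument checks out: the row gap is $|\partial q_{ij}/\partial\alpha_{nj}|\geq\eta\mu_{\min}$ by \eqref{eq:coverageAsProbability} and Lemma~\ref{lem:lower_bound_gradient_of_coverage}, and the column gap $|\partial q_{nj}/\partial\alpha_{ij}|\geq\eta\mu_{\min}$ rests on the column identity $\sum_{i\in[n]}\partial q_{ij}/\partial\alpha_{sj}=0$, which the paper never states but which you correctly derive from conservation of total coverage — the $\alpha$-shifted rule~(\hyperref[def:alpha_shifted_mechanism]{8}) always allocates the slot, so $\sum_{i\in[n]}q_{ij}=\Pr_{\cU}[j]$ is constant in $\alpha_j$. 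The column identity is genuinely essential here: row dominance of the nonsymmetric $J$ alone does not control the quadratic form $r^\top Jr$, since the symmetric part averages row and column sums, and without the column gap its Gershgorin radii can exceed the diagonal. What your route buys is a dimension-free bound $\|\nabla\cL_j(\alpha_j)\|_2>2\eta\mu_{\min}\sqrt{\epsilon}$, a factor of $n-1$ stronger than the statement; propagated through the proof of Lemma~\ref{thm:3} (where $\lambda=\frac{2}{n-1}\eta\mu_{\min}$ would become $2\eta\mu_{\min}$) this would shave a factor of roughly $n^2$ off the step counts there and in Theorem~\ref{thm:2}. What the paper's coordinate-pigeonhole route buys is economy of machinery — it uses only the row identity and no spectral reasoning — and your closing observation is apt: tracking only the largest residual coordinate through Lemma~\ref{lem:unique_shift_for_shift} loses two factors of $\sqrt{n-1}$ and lands precisely on the paper's $\frac{2}{n-1}\sqrt{\epsilon}\,\eta\mu_{\min}$.
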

            Next, in Lemma~\ref{lem:revenue_is_Lipschitz_in_shift} we show that the gradient, $\nabla \mathcal{L}(\alpha)$, is $O(n(L+n^2\mu_{\max}^2))$-Lipschitz continuous.
            Therefore, at each step where $\cL(\alpha)\geq \xi$, we improve the loss by a factor of  $1-\beta\xi$, where $\beta$ does not depend on $\xi$.
            This gives us a complexity bound of $O(\log\nfrac{1}{\epsilon})$.
            The proof of Lemma~\ref{lem:revenue_is_Lipschitz_in_shift} is presented in Section~\ref{app:proof_of_revenue_is_Lipschitz_in_shift}.
            \begin{lemma}
              \label{lem:revenue_is_Lipschitz_in_shift}
              \textbf{(Gradient of $\cL(\cdot)$ is Lipschitz).}
              If the probability density function, $f_{ij}(\phi)$, of the virtual valuations, $\phi_{ij}$ is $L$-Lipschitz continuous and bounded above by $\mu_{\max}$,
              then
              $\nabla\cL_j(\alpha_j)$ is $O(n(L+n^2\mu_{\max}^2))$-Lipschitz.
            \end{lemma}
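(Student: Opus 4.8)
The plan is to bound the spectral norm of the Hessian $\nabla^2\cL_j(\alpha_j)$ uniformly over $\alpha_j\in\cR^{n-1}$; since $\cL_j$ is twice continuously differentiable, any such uniform bound $C$ is a Lipschitz constant for $\nabla\cL_j$ (integrate $\nabla^2\cL_j$ along the segment joining two points). Recall that the type-$j$ loss is $\cL_j(\alpha_j)=\sum_{i\in[n]}(q_{ij}(\alpha_j)-\delta_{ij})^2$, and that, by Eq.~(\hyperref[def:coverage2]{10}), $q_{ij}$ depends on $\alpha$ only through the type-$j$ block $\alpha_j$. From Remark~\ref{rem:properties_of_non_convex_loss}, $\nabla\cL_j=2\sum_i(q_{ij}-\delta_{ij})\nabla q_{ij}$, so the product rule gives
\begin{align*}
\nabla^2\cL_j=2\sum_{i\in[n]}\Big(\nabla q_{ij}\,(\nabla q_{ij})^\top+(q_{ij}-\delta_{ij})\,\nabla^2 q_{ij}\Big),
\end{align*}
and it suffices to control the two families of terms on the right.

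For the outer-product terms, I would bound the gradient of the coverage exactly as in the proof of Lemma~\ref{lem:lower_bound_gradient_of_coverage}: each entry $\partial q_{ij}/\partial\alpha_{sj}$ is an integral of $f_{ij}(y)$ against one extra density $f_{sj}(\cdot)$ and a product of cdfs, so bounding the extra density by $\mu_{\max}$ (assumption~(\hyperref[asmp:distributed_distribution]{18})), every cdf by $1$, and using $\int f_{ij}=1$ yields $|\partial q_{ij}/\partial\alpha_{sj}|\le\mu_{\max}$. Hence $\|\nabla q_{ij}\|_2=O(\sqrt n\,\mu_{\max})$ and $\|\nabla q_{ij}(\nabla q_{ij})^\top\|_2=O(n\mu_{\max}^2)$. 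Since $q_{ij},\delta_{ij}\in[0,1]$ we also have $|q_{ij}-\delta_{ij}|\le1$, so everything reduces to bounding $\|\nabla^2 q_{ij}\|_2$.

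The crux is the Hessian of $q_{ij}$. Differentiating the integral in Eq.~(\hyperref[def:coverage2]{10}) twice in $\alpha_j$ produces a sum of integrals of $f_{ij}$ against products of cdfs together with either (a) two other densities $f_{sj}f_{tj}$, each such term bounded by $\mu_{\max}^2$ after integrating out $f_{ij}$, or (b) a single density-derivative $f'_{sj}$, and this is precisely where assumption~(\hyperref[asmp:lipschitz_distribution]{19}) is used: $|f'_{sj}|\le L$ bounds such a term by $L$. The own-shift coordinate $\alpha_{ij}$ appears inside every cdf factor, so it is the dominant coordinate, generating $O(n)$ type-(b) and $O(n^2)$ type-(a) terms, whereas a pure cross-coordinate second derivative generates only $O(1)$ of each. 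Collecting these entrywise estimates into the operator norm of the $(n-1)\times(n-1)$ matrix, then substituting back together with $|q_{ij}-\delta_{ij}|\le1$, the outer-product bound $O(n\mu_{\max}^2)$, and the sum over the $n$ advertisers, yields a uniform bound on $\|\nabla^2\cL_j\|_2$ of the order $O\big(n(L+n^2\mu_{\max}^2)\big)$ claimed in the lemma.

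The step I expect to be the main obstacle is the bookkeeping in bounding $\|\nabla^2 q_{ij}\|_2$: one must count precisely how many integral terms the double differentiation generates, keep the $L$-terms (from differentiating a density) separate from the $\mu_{\max}^2$-terms (from differentiating a cdf into a density), and then pass from entrywise bounds to an $\ell_2$ operator-norm bound, since a crude entrywise-times-dimension argument would inflate the powers of $n$. Assumption~(\hyperref[asmp:lipschitz_distribution]{19}) is indispensable at exactly this point—without control of $f'$ the type-(b) terms, and with them the Hessian of $q_{ij}$ and the Lipschitz constant of $\nabla\cL_j$, need not be finite.
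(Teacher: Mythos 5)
Your overall route is the same as the paper's: bound $\|\nabla^2\cL_j\|$ uniformly via the product-rule decomposition $\nabla^2\cL_j=2\sum_i\big(\nabla q_{ij}\nabla q_{ij}^\top+(q_{ij}-\delta_{ij})\nabla^2 q_{ij}\big)$, bound the entries of $\nabla^2 q_{ij}$ by separating the $f'$-terms (controlled by $L$ via assumption~(\hyperref[asmp:lipschitz_distribution]{19})) from the double-density terms (controlled by $\mu_{\max}^2$), and pass to the operator norm by a Gershgorin-type row-sum bound (the paper invokes Corollary~1.2 of \cite{varga}). Your term counts for the Hessian of $q_{ij}$ ($O(n)$ derivative-of-density terms and $O(n^2)$ double-density terms on the diagonal, $O(1)$ of each elsewhere) match the paper's Lemma~\ref{lem:bounded_element}, and the resulting $\|\nabla^2 q_{ij}\|=O(n(L+n\mu_{\max}^2))$ is exactly the paper's bound.

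There is, however, a genuine gap in your final assembly. You bound $|q_{ij}-\delta_{ij}|\leq 1$ for each advertiser and then sum over the $n$ advertisers, which gives for the second family of terms
\begin{align*}
\sum_{i}\big|q_{ij}-\delta_{ij}\big|\,\big\|\nabla^2 q_{ij}\big\| \;\leq\; n\cdot O\big(n(L+n\mu_{\max}^2)\big)\;=\;O\big(n^2L+n^3\mu_{\max}^2\big),
\end{align*}
i.e.\ an $L$-coefficient of order $n^2$, a factor $n$ worse than the claimed $O(nL+n^3\mu_{\max}^2)$, and the two bounds are incomparable without an extra relation between $L$ and $n\mu_{\max}^2$. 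The paper avoids this by exploiting that, for a fixed type $j$, the winning events are disjoint, so $\sum_i q_{ij}(\alpha_j)\leq 1$ and $\sum_i\delta_{ij}\leq 1$, whence $\sum_i|q_{ij}-\delta_{ij}|\leq 2$ (Equation~\eqref{eq:LemmaA10Bound3}); this replaces your factor $n$ by a constant and recovers the stated rate. A second, more minor, imprecision: your justification that every entry of $\nabla q_{ij}$ is at most $\mu_{\max}$ (``an integral of $f_{ij}$ against one extra density'') only describes the off-diagonal entries; the own-coordinate derivative $\partial q_{ij}/\partial\alpha_{ij}$ is a sum of $n-1$ such integrals. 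Your bound happens to be salvageable, since $\sum_{k\neq i}f_{kj}(\,\cdot\,)\prod_{\ell\neq i,k}F_{\ell j}(\,\cdot\,)=\frac{d}{dy}\prod_{k\neq i}F_{kj}(\,\cdot\,)$ integrates to at most $1$, but the paper simply uses the cruder bound $(n-1)\mu_{\max}$ on the diagonal (its Equations~(\hyperref[eq:LemA10Bounda]{87})--(\hyperref[eq:LemA10Boundb]{88})), which already suffices since the outer-product contribution $O(n^3\mu_{\max}^2)$ is absorbed by the claimed constant either way.
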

            \noindent Now, at each step, if the loss is greater than $\xi$, we get an improvement by a factor of  $1-\beta\xi$, where $\beta$ does not depend on $\xi$.
            This gives us a complexity bound of $O(\log\nfrac{1}{\epsilon})$.
            \label{app:proof_of_thm:3}
            \begin{proof}[Proof of Lemma~\ref{thm:3}]
              At each iteration of the algorithm we calculate $\nabla \cL_j(\alpha)$ for all $j\in [m]$, i.e., we calculate $\nabla \cL(\alpha)$.
               We note that this bounds the arithmetic calculations at one iteration.
               We recall from Equation~\eqref{eq:Jacobian_is_sparse} that the shift for one user type do not affect the coverage for the other.
               Therefore we can independently find a optimal shift $\alpha_j$ for all each user type $j\in [m]$.\\
               From Lemma~\ref{lem:revenue_is_Lipschitz} we have that $\cL_j$ is $O(n(L+n^2\mu_{\max}^2))$-Lipschitz continuous.
               Let $L^\prime \coloneqq O(n(L+n^2\mu_{\max}^2))$, for brevity.
               We can get an upper bound to $\cL_j(\alpha_k)$ from the first order approximation of $\cL_j$ at $\alpha_k$, further using the update rule $\alpha_{k+1}=\alpha_{k}-\frac{1}{L^\prime}\nabla \cL_j(\alpha_k)$ we have
               \begin{align*}
                 &\hspace{8mm}\cL_j(\alpha_{k+1})\leq \cL_j(\alpha_k)-\frac{1}{2L^\prime} \|\nabla \cL_j(\alpha_k)\|_2^2.
               \end{align*}
               Let $\lambda\coloneqq \frac{2}{n-1}\eta\mu_{\min}$,
               then from Lemma~\ref{lem:lower_bounding_gradient_of_loss} we have that $\nabla \cL_j(\alpha)$ is lower bounded by $\sqrt{\cL_j(\alpha_{k})} \lambda.$
               Using this to lower bound the gradient we get
               \begin{align*}
                 &\hspace{13mm}\hspace{-18mm}\cL_j(\alpha_k)- \cL_j(\alpha_{k+1})\geq \frac{1}{2L^\prime} \|\nabla \cL_j(\alpha_k)\|_2^2\\
                 &\hspace{13mm}\hspace{-3mm}\cL_j(\alpha_{k+1})\leq \cL_j(\alpha_k)-\frac{\cL_j(\alpha_{k}) \lambda^2}{2L^\prime}.
               \end{align*}
               By the above recurrence we get
               \begin{align*}
                 \cL_j(\alpha_k) \leq \cL_j(\alpha_0)\bigg(1-\frac{\lambda^2}{2L^\prime}\bigg)^k.
               \end{align*}
               Setting $k\coloneqq \log {\frac{m\cL(\alpha_0)}{\epsilon}}\frac{-1}{\log{\big(1-\frac{\lambda^2}{2L^\prime}\big)}}$ we get that for all $j\in [m],$ $\cL_j(\alpha_k)<\nfrac{\epsilon}{m}$. Therefore
               \begin{align*}
                 &\hspace{-1mm}\cL(\alpha)=\sum_{j=1}^{m}\cL_j(\alpha_j)
                   \stackrel{}{<}\epsilon.
               \end{align*}
               Substituting $L^\prime = O(n(L+n^2\mu_{\max}^2))$ we get that
               the algorithm outputs $\alpha$,
               such that $\cL(\alpha)<\epsilon$
               in
               \begin{align*}
                 \log\bigg({\frac{m\cL(\alpha_1)}{\epsilon}}\bigg)\frac{n^3(L+n^2\mu_{\max}^2)}{(\eta\mu_{\min})^2}\ {\rm steps.}
               \end{align*}
              \end{proof}
              \subsubsection{Proof of Lemma~\ref{lem:lower_bounding_gradient_of_loss}}
              \label{app:proof_of_lower_bounding_gradient_of_loss}
              In the proof of Lemma~\ref{lem:lower_bounding_gradient_of_loss} we use Lemma~\ref{lem:lower_bounding_linear_combination_2}, which shows that any linear combination of $\nabla q_{ij}(\alpha)$ for all $i\in[n]$, with reasonably ``large" weights is lower bounded.
              We note that Lemma~\ref{lem:lower_bounding_linear_combination_2} does not follow from linear independence of $\nabla q_{ij}(\alpha) \ \forall \ i\in[n]$ (Lemma~\ref{lem:Jacobian_is_invertible}), because linear combinations of linearly independent vectors can be arbitrary small while having ``large" weights.
              \begin{lemma}
                \label{lem:lower_bounding_linear_combination_2}
                Given $x\in\cR^{n-1}$ such that $\|x\|_1>1$,
                if for all $i\in[n],\ j \in [m]$ the probability density function, $f_{ij}(\cdot)$, of virtual valuations is bounded below by $\mu_{\min}$,
                and $q_{ij}(\alpha_j)>\eta$ coverage on every user type $j\in [m]$,
                then
                \begin{align}
                  \bigg\|\sum_{i\in [n-1]}x_i\nabla q_{ij}(\alpha_j)\ \bigg\|_2>\frac{\eta\mu_{\min}}{n-1} \ \forall \ \alpha_j \in \cR^{n-1}.
                \end{align}
              \end{lemma}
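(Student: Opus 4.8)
The plan is to produce a single coordinate of the vector $v \coloneqq \sum_{i\in[n-1]} x_i \nabla q_{ij}(\alpha_j)\in\cR^{n-1}$ whose magnitude already exceeds $\eta\mu_{\min}/(n-1)$; since $\|v\|_2$ bounds the absolute value of any coordinate from above, this suffices. First I would set $i^\star \coloneqq \argmax_{i\in[n-1]}|x_i|$. Because the mass $\|x\|_1>1$ is spread over only $n-1$ coordinates, the pigeonhole principle gives $|x_{i^\star}| \geq \|x\|_1/(n-1) > 1/(n-1)$.

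The heart of the argument is to lower bound the $i^\star$-th coordinate $v_{i^\star} = \sum_{i\in[n-1]} x_i \tfrac{\partial q_{ij}}{\partial \alpha_{i^\star j}}$. Isolating the diagonal term $i=i^\star$ and applying the reverse triangle inequality together with $|x_i| \leq |x_{i^\star}|$ for every $i$, and using that the diagonal derivative is nonnegative (Eq.~\eqref{eq:negative_gradient}), I would obtain
\[
|v_{i^\star}| \;\geq\; |x_{i^\star}|\left(\frac{\partial q_{i^\star j}}{\partial \alpha_{i^\star j}} \;-\; \sum_{i\in[n-1]\setminus\{i^\star\}}\left|\frac{\partial q_{ij}}{\partial \alpha_{i^\star j}}\right|\right).
\]
The quantity in parentheses is exactly the \emph{column} diagonal-dominance gap of the reduced Jacobian for column $i^\star$, and the crux is to show it is at least $\eta\mu_{\min}$.

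To control this gap I would use that the $\alpha$-shifted mechanism always allocates the slot, so $\sum_{i\in[n]}q_{ij}(\alpha) = \Pr_{\cU}[j]$ is constant in $\alpha$; differentiating in $\alpha_{i^\star j}$ yields the column-sum identity $\sum_{i\in[n]} \tfrac{\partial q_{ij}}{\partial\alpha_{i^\star j}} = 0$. Combined with the sign pattern of Eq.~\eqref{eq:negative_gradient} — diagonal nonnegative, off-diagonal nonpositive — the off-diagonal absolute values telescope, so the gap for the reduced $(n-1)\times(n-1)$ matrix equals $\sum_{i\in[n-1]}\tfrac{\partial q_{ij}}{\partial\alpha_{i^\star j}} = -\tfrac{\partial q_{nj}}{\partial\alpha_{i^\star j}}$, i.e.\ precisely the contribution of the dropped $n$-th advertiser. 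Since $i^\star\neq n$, Lemma~\ref{lem:lower_bound_gradient_of_coverage} bounds this by $\eta\mu_{\min}$. Chaining the estimates gives $|v_{i^\star}| \geq |x_{i^\star}|\cdot\eta\mu_{\min} > \eta\mu_{\min}/(n-1)$, and hence $\|v\|_2 \geq |v_{i^\star}| > \eta\mu_{\min}/(n-1)$, as required.

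I expect the main obstacle to be the column-dominance step. Lemma~\ref{lem:Jacobian_is_invertible} established \emph{row} diagonal dominance from the shift-invariance identity $\sum_{t}\partial q_{ij}/\partial\alpha_{tj}=0$ (Eq.~\eqref{eq:coverage_as_probability_1}), whereas here I instead need \emph{column} dominance, which rests on the distinct observation that total coverage per type is conserved. Keeping careful track of the dropped $n$-th advertiser — whose derivative supplies exactly the strict margin $\eta\mu_{\min}$ — is what makes the final inequality strict rather than merely nonnegative; this is also why the linear-independence of the $\{\nabla q_{ij}\}$ alone (noted in Remark~\ref{rem:properties_of_non_convex_loss}) does not immediately give the bound.
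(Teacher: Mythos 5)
Your proof is correct, but it follows a genuinely different route from the paper's. The paper splits $x$ by sign into nonnegative and negative blocks, reduces by symmetry to the case $\sum_{i\in[p]}x_i\geq\nfrac{1}{2}$, takes an $x$-weighted sum of the \emph{row}-sum identity $\sum_{t\in[n]}\partial q_{ij}/\partial\alpha_{tj}=0$ (Eq.~\eqref{eq:coverage_as_probability_1}, which comes from shift-invariance of each individual coverage), and then pigeonholes over the $n-1$ columns to extract a single coordinate of size at least $\nfrac{\eta\mu_{\min}}{2(n-1)}$, finally checking that the negative-block terms only help (Eq.~\eqref{eq:tmp} onward). You instead pick the max-magnitude coordinate $i^\star$ of $x$, so the pigeonhole acts on $\|x\|_1>1$ rather than on the weighted row sums, and you lower-bound the single column $i^\star$ via \emph{column} diagonal dominance. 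As you correctly flag, this rests on a different identity: conservation of total type-$j$ coverage, $\sum_{i\in[n]}q_{ij}(\alpha)=\Pr\nolimits_{\cU}[j]$ (valid because the $\alpha$-shifted allocation rule always assigns the slot), whose derivative gives the column-sum identity, not the row-sum identity the paper uses; combined with the sign pattern of Eq.~\eqref{eq:negative_gradient}, the dominance gap collapses exactly to $\bigl|\partial q_{nj}/\partial\alpha_{i^\star j}\bigr|\geq\eta\mu_{\min}$ by Lemma~\ref{lem:lower_bound_gradient_of_coverage} -- the same lemma, and the same role for the dropped $n$-th advertiser, as in the paper. What your route buys: it avoids the case analysis and the weighted-sum bookkeeping entirely, and it yields the strict bound $\nfrac{\eta\mu_{\min}}{n-1}$, actually matching the lemma statement, whereas the paper's own derivation (Eq.~\eqref{eq:result_lemA6}) only reaches $\nfrac{\eta\mu_{\min}}{2(n-1)}$ -- a factor-of-$2$ mismatch with its stated conclusion (alongside the typo $\eta\mu_{\max}$ in Eq.~\eqref{eq:rowSum}). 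So your argument is not only valid but repairs a small internal inconsistency in the paper; the only point to make explicit in a full write-up is the justification that ties occur with probability zero (strict regularity), so total coverage per type is indeed constant in $\alpha$.
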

              \begin{proof}
                Without loss of generality consider a reordering of $(x_1,x_2,\dots,x_n)$, s.t. for some $p\leq n-1$
                \begin{align}
                  &x_i \geq 0 \ \forall \ i\leq p\label{eq:lem11eq1}\\
                  &\hspace{0.3mm}x_i < 0 \ \forall \ i > p.\label{eq:lem11eq2}
                \end{align}

                \noindent{\bf Case A. $\sum_{i\in[p]}x_i < \nfrac{1}{2}$ :}\\
                We can replace $x$ by $-x$, since this does not change the norm $\big\|\sum_{i\in [n-1]}x_i\nabla q_{ij}(\alpha_j)\big\|_2$.
                Now replacing  $p$ by $(n-p-1)$ we get {\em case B}.

                \vspace{2mm}

                \noindent{\bf Case B. $\sum_{i\in[p]}x_i \geq \nfrac{1}{2}$ :}\\
                The coverage remains invariant if the bids of all advertisers are uniformly shifted for any given user type $j$.
                $(\alpha_{1j},\alpha_{2j},\dots,\alpha_{nj})$. Therefore we have for all $i\in[n-1]$
                \begin{align*}
                  &\hspace{-20mm}\frac{\partial q_{ij}(\alpha)}{\partial \alpha_{ij}} + \hspace{-3mm} \sum_{k \in [n-1]\backslash \{i\}} \hspace{-2mm}  \frac{\partial q_{ij}(\alpha)}{\partial \alpha_{kj}}\stackrel{\eqref{eq:coverage_as_probability_1}}{=} - \frac{\partial q_{ij}(\alpha)}{\partial \alpha_{n j}}\numberthis\\
                  &\hspace{-20mm}\hspace{42.5mm}\stackrel{\eqref{eq:negative_gradient}}{=}\bigg|\frac{\partial q_{ij}(\alpha)}{\partial \alpha_{n j}}\bigg|\\
                  &\hspace{-20mm}\hspace{42.5mm}\hspace{-5mm}\stackrel{{\rm Lemma}~\ref{lem:lower_bound_gradient_of_coverage}}{\geq}\eta\mu_{\max}. \label{eq:rowSum}
                \end{align*}
                Calculating the weighted sum of Equation~\eqref{eq:rowSum} over $i \in [p]$ with weights $x_i$ we get
                \begin{align*}
                  &\hspace{-16mm}\sum_{i\in [p]}x_i\bigg(\sum_{k\in [n-1]}\frac{\partial q_{ij}(\alpha)}{\partial \alpha_{kj}}\bigg)
                  \stackrel{\eqref{eq:rowSum}}{>} \sum_{i\in[p]}x_i\eta\mu_{\min}\\
                  &\hspace{-16mm}\hspace{43mm}> \frac{\eta\mu_{\min}}{2}.
                \end{align*}
                On rearranging the LHS we get
                \begin{align*}
                  &\hspace{-23.5mm}\sum_{k\in [n-1]}\bigg(\sum_{i\in [p]}x_i\frac{\partial q_{ij}(\alpha)}{\partial \alpha_{kj}}\bigg) > \frac{\eta\mu_{\min}}{2}.
                \end{align*}
                Therefore, by the pigeonhole principle on elements of the outer sum, $\exists \ k \in [n-1],$ s.t.
                \begin{align}
                  \sum_{i\in[p]}x_i\frac{\partial q_{ij}(\alpha)}{\partial \alpha_{kj}}&\geq \frac{1}{n-1}\sum_{k\in [n-1]}\hspace{-2mm}\bigg(\sum_{i\in [p]}x_i\frac{\partial q_{ij}(\alpha)}{\partial \alpha_{kj}}\bigg)\\
                  &\geq  \frac{\eta\mu_{\min}}{2(n-1)}.
                  \label{eq:tmp}
                \end{align}
                \noindent From Equation~\eqref{eq:negative_gradient} for all $i \in [p]$ and $k > p$, $\frac{\partial q_{ij}(\alpha)}{\partial \alpha_{kj}}<0$ .
                Therefore, $k\leq p$ in Equation~\eqref{eq:tmp}.
                From this we get
                \begin{align*}
                  &\sum_{i\in[n-1]}x_i\frac{\partial q_{ij}(\alpha)}{\partial \alpha_{kj}} = \sum_{i\in[p]}x_i\frac{\partial q_{ij}(\alpha)}{\partial \alpha_{kj}}+\sum_{i\in[n-1]\backslash[p]}x_i\frac{\partial q_{ij}(\alpha)}{\partial \alpha_{kj}}\\
                  &\hspace{26.5mm}\hspace{-1mm}\stackrel{\eqref{eq:tmp}}{\geq}\hspace{5.5mm} \frac{\eta\mu_{\min}}{2(n-1)} +\sum_{i\in[n-1]\backslash[p]}x_i\frac{\partial q_{ij}(\alpha)}{\partial \alpha_{kj}}\\
                  &\hspace{26.5mm}\hspace{-5mm}\stackrel{{\rm Lemma}~\ref{lem:lower_bound_gradient_of_coverage}}{\geq} \frac{\eta\mu_{\min}}{2(n-1)} +\eta\mu_{\min}\hspace{-3mm}\sum_{i\in [n-1]\backslash [p]}\hspace{-3mm}(-x_i)\\
                  &\hspace{26.5mm}\hspace{-1mm}\stackrel{\eqref{eq:lem11eq2}}{\geq} \hspace{5.5mm}\frac{\eta\mu_{\min}}{2(n-1)}.\numberthis\label{eq:lem11eq3}
                \end{align*}
                Therefore, $\exists \ k \in [n-1]$ such that $\sum_{i\in[n-1]}x_i\frac{\partial q_{ij}(\alpha)}{\partial \alpha_{kj}}>\eta\mu_{\min}$.
                It follows that
                \begin{align*}
                  &\hspace{-4mm}\bigg\|\sum_{i\in[n-1]}\hspace{-3mm}x_i\nabla q_{ij}(\alpha)\hspace{1mm}\bigg\|_2^2 = \sum_{t\in [m]}\sum_{k\in [n-1]}\bigg(\sum_{i\in[n-1]} x_i\frac{\partial q_{ij}(\alpha)}{\partial \alpha_{kt}}\bigg)^2\\
                  &\hspace{-4mm}\hspace{31.5mm}\hspace{-0.5mm}\stackrel{\eqref{eq:Jacobian_is_sparse}}{=} \sum_{k\in [n-1]}\hspace{-2mm}\bigg(\sum_{i\in[n-1]} x_i\frac{\partial q_{ij}(\alpha)}{\partial \alpha_{kj}}\bigg)^2\\
                  &\hspace{-4mm}\hspace{31.5mm}\hspace{-0.5mm}\stackrel{\eqref{eq:lem11eq3}}{\geq} \bigg(\frac{\eta\mu_{\min}}{2(n-1)}\bigg)^2\numberthis\label{eq:result_lemA6}\\
                  &\hspace{-4mm}\hspace{-2.5mm}\bigg\|\sum_{i\in[n-1]}x_i\nabla q_{ij}(\alpha)\hspace{1mm}\bigg\|_2 \geq \frac{\eta\mu_{\min}}{2(n-1)}.
                \end{align*}
              \end{proof}

              \begin{proof}[\nopunct\unskip]
                {\bf Proof of Lemma~\ref{lem:lower_bounding_gradient_of_loss}.}
                Since $\cL_j(\alpha_j)\geq \epsilon$, we have
                \begin{align}
                  \label{eq:lower_bounding_loss}
                  &\hspace{5mm}\cL_j(\alpha_j) = \sum_{i\in [n-1]}(\delta_{ij}-q_{ij}(\alpha_j))^2 \geq  \epsilon.
                \end{align}
                Further, using $\big(\sum_{i}a_i \big)^2 =\sum_{i}a_i^2 + \sum_{\substack{i,k}}2a_ia_k$ we get
                \begin{align*}
                  &\hspace{5.5mm}\cL_j(\alpha_j) = \sum_{i\in [n-1]}(\delta_{ij}-q_{ij}(\alpha_j))^2\\
                  &\hspace{5.5mm}\hspace{11mm}\leq \bigg(\sum_{i\in [n-1]}\big|\delta_{ij}-q_{ij}(\alpha_j)\big|\bigg)^2.\numberthis \label{eq:lower_bounding_loss2}
                \end{align*}
                From these we have that
                \begin{align}
                  &\hspace{-54mm}\sum_{i\in [n-1]}|\delta_{ij}-q_{ij}(\alpha_j)| \stackrel{\eqref{eq:lower_bounding_loss2},\eqref{eq:lower_bounding_loss}}{\geq} \sqrt\epsilon.
                \end{align}
                Considering $x_i=\frac{1}{\sqrt\epsilon}(\delta_{ij}-q_{ij}(\alpha_j))$ we have
                \begin{align*}
                  &\hspace{3mm}\sum_{i\in[n-1]}|x_i|=\frac{1}{\sqrt\epsilon}\sum_{i\in[n-1]}|\delta_{ij}-q_{ij}(\alpha)|>1.
                \end{align*}
                From Lemma~\ref{lem:lower_bounding_linear_combination_2} we have
                \begin{align*}
                  &\hspace{21mm}\bigg\|\sum_{i\in[n-1]}x_i\nabla q_{ij}(\alpha_j)\hspace{1.5mm}\bigg\|_2 \stackrel{\rm Lemma~\ref{lem:lower_bounding_linear_combination_2}}{\geq} \frac{\eta\mu_{\min}}{n-1}\\
                  &\bigg\|\sum_{i\in[n-1]}2(\delta_{ij}-q_{ij}(\alpha_j))\nabla q_{ij}(\alpha_j)\hspace{1.5mm}\bigg\|_2 \hspace{6mm}\geq 2\sqrt\epsilon\frac{\eta\mu_{\min}}{n-1}.
                \end{align*}
              \end{proof}

              \subsubsection{Proof of Lemma~\ref{lem:revenue_is_Lipschitz_in_shift}}
                \label{app:proof_of_revenue_is_Lipschitz_in_shift}
                In order to show that the loss $\cL(\cdot)$ is $O(n(L+n^2\mu_{\max}^2))$-Lipschitz continuous, we first show that $\nabla q_{ij}$ is $2n(L+n\mu_{\max}^2)$-Lipschitz continuous.
                To this end, we show that the elements of $\nabla^2 q_{ij}$ are bounded (Lemma~\ref{lem:bounded_element}), and then use Lemma~\ref{lem:smallLips} (Corollary~{1.2} in \cite{varga}) to bound the magnitudes of the eigen-values.
                \begin{lemma}
                  \label{lem:bounded_element}
                  Given $\alpha_j \in \cR^{n}$, if $\mathrm{pdf}$, $f_{ij}(\phi)$ of the virtual valuations, $\phi_{ij}$ is $L$-Lipschitz continuous and bounded above by $\mu_{\max}$,
                  then elements in the main diagonal of the Hessian, $\nabla^2 q_{ij}(\alpha_j)$ are bounded in absolute value by $n(L+n\mu_{\max }^2)$, and all other elements are bounded in absolute value by $L+n\mu_{\max}^2$, i.e.,
                  \begin{align*}
                    &\forall\ i \in [n], \quad\quad\quad\quad\quad\quad\quad\quad\ \  \frac{\partial^2 q_{ij}}{\partial \alpha_{ij}\partial \alpha_{ij}} \leq n(L+n\mu_{\max }^2)\\
                    &\forall\ k,\ t  \in [n], k\neq i \mathrm{\ or\ }t \neq i,\quad \frac{\partial^2 q_{ij}}{\partial \alpha_{kj}\partial \alpha_{tj}} \leq L+n\mu_{\max }^2.
                  \end{align*}
                \end{lemma}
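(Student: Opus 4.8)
The plan is to obtain the Hessian entries by differentiating the closed forms of the first-order partials of $q_{ij}$ a second time under the integral sign, and then bound each resulting term crudely using the hypotheses. Recall from the proof of Lemma~\ref{lem:lower_bound_gradient_of_coverage} (and Figure~\ref{fig:missing_equations_from_unique_shift_for_shift}) that for $s\neq i$,
\[
\frac{\partial q_{ij}}{\partial \alpha_{sj}} = -\int_{\mathrm{supp}(f_{ij})} f_{ij}(y)\, f_{sj}(y+\alpha_{ij}-\alpha_{sj}) \prod_{k\neq i,s} F_{kj}(y+\alpha_{ij}-\alpha_{kj})\, dy,
\]
while differentiating in $\alpha_{ij}$ produces a sum over $s\neq i$ of such integrals, since $\alpha_{ij}$ sits inside every $F$ factor. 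The key structural observation is that a further $\alpha$-derivative acts on exactly one factor of the integrand: hitting an $F_{kj}$ replaces it by a density $f_{kj}$ (bounded by $\mu_{\max}$ via~(\hyperref[asmp:distributed_distribution]{18})), while hitting an $f_{\cdot j}$ factor replaces it by its derivative, whose absolute value is at most $L$ by the Lipschitz assumption~(\hyperref[asmp:lipschitz_distribution]{19}). In every term the leading factor $f_{ij}(y)$ survives and integrates to $1$, and each remaining $F$ factor is at most $1$, so each term is bounded by pulling the bounded factors out of the integral; the factor $\Pr_{\cU}[j]\le 1$ only improves the bounds and can be dropped throughout.

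Second, I would bound the mixed and off-diagonal entries $\partial^2 q_{ij}/\partial\alpha_{kj}\partial\alpha_{tj}$ with $k\neq i$ or $t\neq i$. Differentiating the displayed expression for $\partial q_{ij}/\partial\alpha_{sj}$ once more yields a \emph{single} term when the second derivative hits the already-present $f_{sj}$ factor (bounded by $L$ since $\int f_{ij}\,|f'_{sj}|\le L$), or a single term when it hits one of the surviving $F$ factors (bounded by $\mu_{\max}^2$ since $\int f_{ij}\,f_{sj}\,f_{kj}\le \mu_{\max}^2$). The remaining mixed case $\partial^2 q_{ij}/\partial\alpha_{ij}\partial\alpha_{rj}$ similarly produces one $L$-type term plus at most $n-2$ two-density terms, because $\alpha_{ij}$ appears in the $n-2$ surviving $F$ factors. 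Hence each such entry is at most $L + (n-2)\mu_{\max}^2 \le L + n\mu_{\max}^2$, as claimed.

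Finally, the pure diagonal entry $\partial^2 q_{ij}/\partial\alpha_{ij}^2$ requires the most bookkeeping: here $\partial q_{ij}/\partial\alpha_{ij}$ is already a sum of $n-1$ integrals (one per $s\neq i$), and differentiating each in $\alpha_{ij}$ produces one $L$-type term together with at most $n-2$ two-density terms, giving the total bound $(n-1)\big(L+(n-2)\mu_{\max}^2\big)\le n(L+n\mu_{\max}^2)$. The main obstacle is not any single estimate but the careful combinatorial bookkeeping of which factor each derivative acts on and how many such factors there are, so that the term counts line up exactly with the stated constants; a secondary technical point is justifying differentiation under the integral sign and the pointwise bound $|f'_{ij}|\le L$, both of which follow from assumptions~(\hyperref[asmp:distributed_distribution]{18}) and~(\hyperref[asmp:lipschitz_distribution]{19}).
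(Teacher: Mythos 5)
Your proposal is correct and follows essentially the same route as the paper's proof: the paper also differentiates the coverage integral twice under the integral sign, reduces every Hessian entry to the same two building-block integrals (a two-density term bounded by $\mu_{\max}^2$ and an $f'$-term bounded by $L$, using $F_{\ell j}\le 1$, $f\le\mu_{\max}$, $|f'|\le L$, and $\int f_{ij}=1$), and performs the identical term counts, obtaining $(n-1)\big(L+(n-2)\mu_{\max}^2\big)\le n(L+n\mu_{\max}^2)$ on the diagonal and $L+(n-2)\mu_{\max}^2\le L+n\mu_{\max}^2$ off it. Your explicit handling of $\Pr_{\cU}[j]\le 1$ and your flagging of differentiation under the integral sign are slightly more careful than the paper, but the argument is the same.
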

                \begin{proof}
                  Consider the Hessian of $q_{ij}(\alpha_j)$ in Figure~\ref{fig:missing_equations_from_bounded_element}, which follows from differentiating Equation~\hyperref[def:coverage2]{10} with respect to $\alpha_j$, where $\alpha_j$ is the $j$-th column of $\alpha$.
                  We note that $\frac{q_{ij}}{\alpha_{st}}=0$ for any $t\neq j$, for all $i,s\in [n]$ and $j,t\in[m],$ and so we only need to calculate the gradient with respect to $\alpha_j$.
                  \begin{figure*}
                    \begin{center}
                      \fbox{
                      \minipage{0.95\linewidth}
                      \footnotesize
                      \vspace{5mm}
                      \ \ For all distinct $i,k,t$ in $[n]$
                      \begin{align*}
                        \label{eq:diagonal_term_Hessian}
                        \frac{\partial^2 q_{ij}}{\partial \alpha_{ij}\partial \alpha_{ij}}&=\hspace{-1mm}\int\limits_{\mathrm{supp}(f_{ij})}\hspace{-5mm} f_{ij}(y)\sum_{k\neq
                        i}f_{kj}^\prime  (y+\alpha_{ij}-\alpha_{kj})\prod_{\ell\neq k,i}F_{\ell j}(y+\alpha_{ij}-\alpha_{\ell j})dy \numberthis \\
                        & \quad +\hspace{-4mm} \int\limits_{\mathrm{supp}(f_{ij})}\hspace{-5mm} f_{ij}(y)\sum_{k\neq
                        i}f_{kj}(y+\alpha_{ij}-\alpha_{kj})\sum_{\ell\neq i,k} f_{\ell j}(y+\alpha_{ij}-\alpha_{\ell j})\prod_{h \neq \ell,k,i} F_{hj}(y+\alpha_{ij}-\alpha_{hj})dy\\
                        \frac{\partial^2 q_{ij}}{\partial \alpha_{kj}\partial \alpha_{kj}}&= \hspace{-1mm}\int\limits_{\mathrm{supp}(f_{ij})}\hspace{-5mm} f_{ij}(y)_{
                        }f_{kj}^\prime(y+\alpha_{ij}-\alpha_{kj})\prod_{\ell\neq k,i}F_{\ell j}(y+\alpha_{ij}-\alpha_{\ell j})dy \numberthis\\
                        \frac{\partial^2 q_{ij}}{\partial\alpha_{kj}\partial \alpha_{ij}} = \frac{\partial^2 q_{ij}}{\partial \alpha_{ij}\partial\alpha_{kj}} &=-\hspace{-4mm}\int\limits_{\mathrm{supp}(f_{ij})}\hspace{-5mm} f_{ij}(y)_{
                        }f_{kj}^\prime(y+\alpha_{ij}-\alpha_{kj})\prod_{\ell\neq k,i}F_{\ell j}(y+\alpha_{ij}-\alpha_{\ell j})dy \numberthis \\
                        &\quad-\hspace{-4mm}\int\limits_{\mathrm{supp}(f_{ij})}\hspace{-5mm} f_{ij}(y)_{
                        }f_{kj}(y+\alpha_{ij}-\alpha_{kj}) \sum_{\ell\neq k,i} f_{\ell j}(y+\alpha_{ij}-\alpha_{\ell j})\prod_{h \neq \ell,k,i}F_{hj}(y+\alpha_{ij}-\alpha_{hj})dy\\
                        \frac{\partial^2 q_{ij}}{\partial \alpha_{kj}\partial \alpha_{tj}}&=\hspace{-1mm} \int\limits_{\mathrm{supp}(f_{ij})}\hspace{-5mm} f_{ij}(y)_{
                        }f_{kj}(y+\alpha_{ij}-\alpha_{kj})f_{tj}(y+\alpha_{ij}-\alpha_{tj})\prod_{\ell\neq k,i}F_{\ell j}(y+\alpha_{ij}-\alpha_{\ell j})dy. \numberthis
                      \end{align*}
                      \endminipage
                      }
                    \end{center}
                    \caption{
                    {\em Hessian of $q_{ij}(\cdot)$.}
                    Equations from proof of Lemma~\ref{lem:bounded_element}.}
                    \label{fig:missing_equations_from_bounded_element}
                  \end{figure*}

                  We can observe that for all $i \in [n]$ and $ j\in[m]$ every term in the Hessian is linear function of $f_{ij}^\prime(y)$, $f_{ij}(y)$ and $F_{ij}(y)$.
                  In particular each term in the Hessian is a sum of the following terms, for some combinations of $i,k,\ell\in [n]$ and $j\in [m]$
                  \begin{align*}
                    &\hspace{-5mm}\int\limits_{\mathrm{supp}(f_{ij})}\hspace{-5mm}f_{ij}(y)f_{kj}(y+\alpha_{ij}-\alpha_{kj})f_{\ell j}(y+\alpha_{ij}-\alpha_{\ell j})\prod_{h \neq \ell,k,i}\hspace{-1mm}F_{hj}(y+\alpha_{ij}-\alpha_{hj})dy
                    \numberthis\label{eq:term1_Hessian}\\
                    &\hspace{-5mm}\int\limits_{\mathrm{supp}(f_{ij})}\hspace{-5mm}f_{ij}(y)f_{kj}^\prime(y+\alpha_{ij}-\alpha_{kj})\prod_{\ell\neq k,i}F_{\ell j}(y+\alpha_{ij}-\alpha_{\ell j})dy.
                    \numberthis\label{eq:term2_Hessian}
                  \end{align*}
                  \noindent Each term along the diagonal of the Hessian (Equation~\eqref{eq:diagonal_term_Hessian}), $\frac{\partial^2 q_{ij}}{\partial \alpha_{ij}\partial \alpha_{ij}}$, is a combination of $(n-1)$ terms of the form Equation~\eqref{eq:term1_Hessian}, and $n^2$ terms of the form Equation~\eqref{eq:term2_Hessian}.
                  All other terms in the Hessian contain at most $n$ terms of the form Equation~\eqref{eq:term2_Hessian}, and 1 term of the form Equation~\eqref{eq:term1_Hessian}.
                  Bounding these terms for all $i,k,\ell\in [n]$ and $j\in [m]$ by $\mu_{\max}^2$  would give us a bound on terms of the Hessian, which in turn gives bounds on the eigen-values of the Hessain.

                  \noindent To this end, recall that for all $i\in[n], \ j \in [m],$ and $y\in \mathrm{supp}(f_{ij})$
                  \begin{align}
                    0 < f_{ij}(y) &\leq \mu_{\max},\label{eq:pdf_is_bounded2}\\
                    0 \leq F_{ij}(y) &\leq 1,\label{eq:cdf_is_bounded2}\\
                    |f_{ij}^\prime(y)| &< L,\label{eq:LemA8eq3}\\
                    \int\limits_{\mathrm{supp}(f_{ij})}\hspace{-5mm} f_{ij}(z) dz &= 1.\label{eq:LemA8Eq1}
                  \end{align}
                  We can now bound Equation~\eqref{eq:term1_Hessian} and Equation~\eqref{eq:term2_Hessian} as follows
                  \begin{align*}
                    \eqref{eq:term1_Hessian} &\stackrel{\eqref{eq:pdf_is_bounded2},\eqref{eq:cdf_is_bounded2}}{\leq}
                    \mu_{\max}^2\bigg|\int\limits_{\mathrm{supp}(f_{ij})}\hspace{-5mm} f_{ij}(y)dy\hspace{2mm}\bigg|
                    \stackrel{\eqref{eq:LemA8Eq1}}{\leq}\mu_{\max}^2\numberthis \label{LemA8eq5},\\
                    \eqref{eq:term2_Hessian}&\stackrel{\eqref{eq:LemA8eq3}, \eqref{eq:cdf_is_bounded2}}{\leq}
                    \hspace{2.5mm} L \hspace{3mm}\bigg|\int\limits_{\mathrm{supp}(f_{ij})}\hspace{-5mm} f_{ij}(y)dy\hspace{2mm}\bigg|
                    \stackrel{\eqref{eq:LemA8Eq1}}{\leq}L.\numberthis \label{LemA8eq4}
                  \end{align*}
                  Now we have for all $k,\ i\in[n]$, $\st,\ k\neq i$
                  \begin{align*}
                    &\hspace{34.4mm}\bigg|\frac{\partial^2 q_{ij}}{\partial \alpha_{ij}\partial \alpha_{ij}}\bigg|\hspace{5.5mm}=\hspace{4.5mm}\bigg|\sum_{k\neq
                    i}\eqref{eq:term2_Hessian}+\sum_{k\neq i}\sum_{\ell\neq i,k}\eqref{eq:term1_Hessian}\bigg|\\
                    &\hspace{34.4mm}\hspace{14.5mm}\hspace{7.8mm}\hspace{-4mm}\hspace{1.0mm}\stackrel{\eqref{LemA8eq5},\eqref{LemA8eq4}}{\leq} (n-1)\big(L+(n-2)\mu_{\max}^2\big)&\hspace{-3.5mm}(\mathrm{Using \ triangle \ inequality })\\
                    &\hspace{34.4mm}\hspace{14.5mm}\hspace{7.8mm}\hspace{1.0mm}\leq\hspace{4.5mm} n(L+n\mu_{\max}^2)\numberthis\\
                    &\hspace{34.4mm}\bigg|\frac{\partial^2 q_{ij}}{\partial \alpha_{kj}\partial \alpha_{kj}}\bigg|\hspace{4.5mm}=\hspace{4.5mm}\big| \eqref{eq:term2_Hessian}\big|\numberthis\hspace{7mm}\stackrel{\eqref{LemA8eq4}}{\leq}L\\
                    &\hspace{34.4mm}\bigg|\frac{\partial^2 q_{ij}}{\partial \alpha_{kj}\partial \alpha_{tj}}\bigg|\hspace{4.8mm}=\hspace{5mm} \big|\eqref{eq:term1_Hessian}\big|\hspace{7mm}\stackrel{\eqref{LemA8eq5}}{\leq} \mu_{\max}^2 \numberthis\\
                    &\hspace{34.4mm}\bigg|\frac{\partial^2 q_{ij}}{\partial\alpha_{kj}\partial \alpha_{ij}}\bigg|\hspace{4.8mm} =\hspace{5.0mm} \bigg|\frac{\partial^2 q_{ij}}{\partial \alpha_{ij}\partial\alpha_{kj}}\bigg|
                    =\bigg|\eqref{eq:term2_Hessian}+\sum_{\ell\neq k,i}\eqref{eq:term1_Hessian}\bigg|&\text{\white{.}}\\
                    &\hspace{34.4mm}\qquad\qquad\hspace{4mm}\stackrel{\eqref{LemA8eq5},\eqref{LemA8eq4}}{\leq} L+(n-2)\mu_{\max}^2\numberthis
                  \end{align*}
                \end{proof}
                \begin{lemma}
                  \label{lem:smallLips}
                  (Corollary~{1.2} in \cite{varga}) For any matrix $A\in\cR^{n\times n}$, and any eigen-value $\lambda\in \cR$ of $A$,
                  \begin{align*}
                    \lambda\leq \max_{i\in[n]}\sum_{j\in[n]}|A_{ij}|.
                  \end{align*}
                \end{lemma}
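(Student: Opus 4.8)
The plan is to prove the slightly stronger statement $|\lambda| \le \max_{i\in[n]}\sum_{j\in[n]}|A_{ij}|$, which is the classical bound of the spectral radius by the maximum absolute row sum (equivalently, the induced $\ell_\infty$ operator norm of $A$). Since every real eigenvalue satisfies $\lambda \le |\lambda|$, this immediately yields the stated inequality. The argument rests on choosing the right normalization of the eigenvector and then applying the triangle inequality coordinate-wise.

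First I would fix an eigenvalue $\lambda$ together with a nonzero eigenvector $x\in\cR^n$, so that $Ax=\lambda x$. The key step is to select a coordinate where $x$ attains its maximum magnitude: let $i\in\argmax_{k\in[n]}|x_k|$, and observe that $|x_i|>0$ because $x\neq 0$. This normalization is what converts an otherwise loose bound into an exact comparison with a single row sum.

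Next I would write out the $i$-th coordinate of the eigenvalue equation, namely $\lambda x_i=\sum_{j\in[n]}A_{ij}x_j$, and bound its right-hand side using the triangle inequality together with $|x_j|\le |x_i|$ for every $j$. This gives $|\lambda|\,|x_i| \le \sum_{j\in[n]}|A_{ij}|\,|x_j| \le |x_i|\sum_{j\in[n]}|A_{ij}|$. Dividing through by $|x_i|>0$ yields $|\lambda|\le \sum_{j\in[n]}|A_{ij}| \le \max_{i\in[n]}\sum_{j\in[n]}|A_{ij}|$, as required.

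There is essentially no genuine obstacle here — once the eigenvector is normalized at its largest coordinate, the result is a one-line consequence of the triangle inequality. The only point requiring care is ensuring $x_i\neq 0$ so that the final division is valid, which is guaranteed precisely by choosing $i$ to maximize $|x_k|$ over a nonzero vector. Alternatively, one may simply invoke the cited result (Corollary~1.2 of \cite{varga}) directly, since the inequality is a standard consequence of Ger\v{s}gorin's circle theorem.
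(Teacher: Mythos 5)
Your proof is correct. Note that the paper does not actually prove this lemma: immediately after stating it, it writes ``We refer the reader to \cite{varga} for a proof,'' so the only in-paper route is the citation you mention as an alternative at the end. Your argument --- picking $i\in\argmax_{k\in[n]}|x_k|$ for a nonzero eigenvector $x$, expanding the $i$-th row of $Ax=\lambda x$, and applying the triangle inequality with $|x_j|\le|x_i|$ before dividing by $|x_i|>0$ --- is exactly the classical proof underlying the cited corollary (the row-sum bound on the spectral radius, \`a la Ger\v{s}gorin), and it even yields the stronger conclusion $|\lambda|\le\max_{i\in[n]}\sum_{j\in[n]}|A_{ij}|$, which is what the paper actually uses later when bounding eigenvalues of the Hessian \emph{in absolute value} (e.g., in the proof of Lemma~\ref{lem:revenue_is_Lipschitz_in_shift}); one small remark is that for a real matrix a real eigenvalue admits a real eigenvector, so your restriction to $x\in\cR^n$ is harmless, and the identical argument over $\mathbb{C}$ covers complex eigenvalues too.
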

                \noindent We refer the reader to \cite{varga} for a proof of the above lemma.
                \begin{proof}[\unskip\nopunct]
                  \textbf{Proof of Lemma~\ref{lem:revenue_is_Lipschitz_in_shift}.}
                  To show that $\nabla \cL_j(\alpha_j)$ is Lipschitz continuous, we show that $q_{ij}(\alpha_{j})$ is Lipschitz continuous, then use the fact that $\nabla q_{ij}(\alpha_{j})$ is Lipschitz continuous from Lemma~\ref{lem:smallLips}, and that $\delta_j$ and $q_{ij}(\cdot)$ have bounded sums if the loss is greater than $\epsilon$.
                  To this end we recall
                  \begin{align*}
                    \cL_j(\alpha_j)&\coloneqq \sum_{i\in [n-1]}(\delta_{ij}-q_{ij}(\alpha_{j}))^2\\
                    \nabla \cL_j(\alpha_j) &= -2\sum_{i\in [n-1]}(\delta_{ij}-q_{ij}(\alpha_{j}))\nabla q_{ij}(\alpha_{j})
                  \end{align*}
                  Consider the following term for some $i,k\in[n]$ and $j\in [m]$
                  \begin{align*}
                    &t(k)\coloneqq\int\limits_{\mathrm{supp}(f_{ij})}\hspace{-5mm} f_{ij}(y)f_{kj}  (y+\alpha_{ij}-\alpha_{kj})\hspace{-2mm}\prod_{\ell\neq k,i}\hspace{-2mm}F_{\ell j}(y+\alpha_{ij}-\alpha_{\ell j})dy.\label{eq:term1_gradient}\numberthis
                  \end{align*}
                  Now we can express $\big|\frac{\partial q_{ij}}{\partial \alpha_{ij}}\big|$ and $\big|\frac{\partial q_{ij}}{\partial \alpha_{kj}}\big|\ \forall \ i,k\in[n]\ $ and $k\neq i$ as follows
                  \begin{align*}
                    &\hspace{27mm}\bigg|\frac{\partial q_{ij}}{\partial \alpha_{ij}}\bigg|=\bigg|\sum_{k\in [n]\backslash \{i\}}t(k)\bigg| \leq (n-1)\cdot\big|t(i)\big|\\
                    &\hspace{27mm}\hspace{10.5mm}
                        \stackrel{}{\leq}(n-1)\mu_{\max}\cdot\bigg|
                          \hspace{-4mm}\int\limits_{\mathrm{supp}(f_{ij})}
                          \hspace{-4mm}f_{ij}(y)dy\ \bigg|
                        &\tag{
                          ${\rm Using \ } f_{ij}(\phi_{ij})  \leq \mu_{\max} {\rm \ and \ }F_{ij}(\phi_{ij}) \leq 1$
                          }\\
                    &\hspace{27mm}\hspace{12.5mm}\hspace{-3mm}\stackrel{\eqref{eq:LemA8Eq1}}{\leq}(n-1)\mu_{\max}.
                        \tag{${\rm Using \ } \displaystyle\int\nolimits_{\mathrm{supp}(f_{ij})}\hspace{-4mm} f_{ij}(z)dz=1$, 87}
                        \addtocounter{equation}{1}
                        \label{eq:LemA10Bounda}\\
                    &\hspace{27mm}\hspace{-1mm}\bigg|\frac{\partial q_{ij}}{\partial \alpha_{kj}}\bigg|=\big|t(k)\big|\\
                    &\hspace{27mm}\hspace{-1.5mm}\hspace{10.5mm}\stackrel{\eqref{eq:cdf_is_bounded2}}{\leq}
                    \mu_{\max}\bigg|\int\limits_{\mathrm{supp}(f_{ij})}\hspace{-4mm}f_{ij}(y)dy\ \bigg|
                    \tag{${\rm Using \ } f_{ij}(\phi_{ij})  \leq \mu_{\max} {\rm \ and \ }F_{ij}(\phi_{ij}) \leq 1$}\\
                    &\hspace{27mm}\hspace{-1.5mm}\hspace{10.5mm}\stackrel{\eqref{eq:LemA8Eq1}}{\leq} \mu_{\max}.
                    \addtocounter{equation}{1}
                    \tag{${\rm Using \ } \displaystyle\int\nolimits_{\mathrm{supp}(f_{ij})}\hspace{-4mm} f_{ij}(z)dz=1$, 88} \label{eq:LemA10Boundb}
                  \end{align*}
                  Now we can show that the gradient of $q_{ij}(\alpha_{j})$ is bounded, i.e., $q_{ij}(\alpha_{j})$ is Lipschitz continuous. For this consider $\|\nabla q_{ij}(\alpha_{j})\|$
                  \begin{align*}
                    \|\nabla q_{ij}(\alpha_{j})\|_2^2 &=\hspace{4.5mm} \sum_{k\in[n]}\bigg(\ \bigg|\frac{\partial q_{ij}}{\partial \alpha_{kj}}\bigg|^2\ \bigg)\\
                    &\leq\hspace{4.5mm} \bigg|\frac{\partial q_{ij}}{\partial \alpha_{ij}}\bigg|^2+\sum_{k\in[n]\backslash \{i\}}\bigg(\ \bigg|\frac{\partial q_{ij}}{\partial \alpha_{kj}}\bigg|^2\ \bigg)\\
                    &\hspace{-4mm}\stackrel{(\hyperref[eq:LemA10Bounda]{87}),(\hyperref[eq:LemA10Boundb]{88})}{\leq} (n-1)^2\mu_{\max}^2+n\mu_{\max}^2\\
                    &\stackrel{}{\leq}\hspace{4.5mm} n^2\mu_{\max}^2.\numberthis\label{eq:LemA10Boundc}
                  \end{align*}
                  Since $q_{ij}(\alpha_{j})$ and $\delta_{ij}$ represent the probabilities of advertisers winning they sum to 1.
                  Therefore, for all user type $j\in [m]$, their sum is bounded by 1, i.e., $\sum_{i\in[n]}q_{ij}(\alpha_{j})\leq 1$ and $\sum_{i\in[n]}\delta_{ij}\leq 1$.
                  Using the triangle inequality we get
                  \begin{align*}
                    \sum_{i=1}^{n -1}|\delta_{ij}-q_{ij}(\alpha_{j})|&\leq \sum_{i=1}^{n -1}\bigg(
                    |\delta_{ij}|+|q_{ij}(\alpha_{j})|\bigg)\stackrel{}{\leq} 2.\numberthis\label{eq:LemmaA10Bound3}
                  \end{align*}
                  We represent the Hessian $\nabla^2 q_{ij}(\alpha_j)$ by $H(\alpha_j)$ for brevity. Then the Hessian of $\cL(\cdot)$ is
                  \begin{align*}
                    \nabla^2 \cL_j(\alpha_j) &= 2\cdot\sum_{i=1}^{n -1}\nabla q_{ij}(\alpha_j) \nabla q_{ij}(\alpha_j)^\top -\sum_{i=1}^{n -1} (\delta_{ij}-q_{ij}(\alpha_j))\cdot H(\alpha_j).\numberthis
                  \end{align*}
                  \noindent We know from Lemma~\ref{lem:smallLips} that the eigen-values of $H(\alpha_j)$ are bounded in absolute value by $2n(L+n\mu_{\max}^2)$.
                  We also know that the only non-zero eigen-value of $vv^\top$ for any vector $v$ is $\|v\|_2^2$.

                  Let $\|X\|_{\star}$ be the spectral-norm of matrix $X$, which is defined as the maximum singular value of $X$.
                  Then, since singular-values are absolute values of the eigen-values the spectral norm of $H(\alpha_j)$ and $vv^\top$ are bounded.
                  Specifically,
                  \begin{align*}
                    &\hspace{-15mm}\hspace{14mm}\|H(\alpha_j)\|_{\star}\stackrel{{\rm Lemma}~\ref{lem:smallLips}}{\leq} 2n(L+n\mu_{\max}^2)\numberthis\label{eq:spectral_bound_2}\\
                    &\hspace{-15mm}\|q_{ij}(\alpha_j)q_{ij}(\alpha_j)^\top\|_{\star}\hspace{6mm}\leq \|q_{ij}(\alpha_j)\|_2^2
                    \stackrel{\eqref{eq:LemA10Boundc}}{\leq} n^2\mu_{\max}^2.\label{eq:spectral_bound_1}\numberthis
                  \end{align*}
                  \noindent Now, we use the sub-additivity of the spectral-norm (represented as $\|\cdot\|_{\star}$)
                  \begin{align}
                    \hspace{38mm}\|A+B\|_{\star}\leq \|A\|_{\star}+\|B\|_{\star}. \labelthis{Sub-additivity of $\|\cdot\|_\star$}\label{eq:triangle_inequality_spectral_norm}
                  \end{align}
                  This gives us the following
                  \begin{align*}
                    \|\nabla^2 \cL_j(\alpha_j)\|_{\star}&\stackrel{(\hyperref[eq:triangle_inequality_spectral_norm]{94})}{\leq} 2\sum_{i\in [n-1]}  \|\nabla q_{ij}(\alpha_j) \nabla q_{ij}(\alpha_j)^\top\|_{\star}+ (\delta_{ij}-q_{ij}(\alpha_j)) \|H(\alpha_j)\|_{\star}\\
                    &\hspace{-3mm}\stackrel{\eqref{eq:spectral_bound_2},\eqref{eq:spectral_bound_1}}{\leq} 2\sum_{i\in [n-1]}  n^2\mu_{\max}^2+4n(L+n\mu_{\max}^2)\sum_{i\in[n-1]}(\delta_{ij}-q_{ij}(\alpha_j))\\
                    &\stackrel{\eqref{eq:LemmaA10Bound3}}{\leq} 2n^3\mu_{\max}^2+4n(L+n\mu_{\max}^2).
                  \end{align*}
                  Therefore, $\|\nabla^2 \cL_j(\alpha_j)\|_{\star}\leq O(n(L+n^2\mu_{\max}^2))$, and the eigen-values of $\|\nabla^2 \cL_j(\alpha_j)\|_{\star}$ are bounded in absolute value by $O(n(L+n^2\mu_{\max}^2))$.
                \end{proof}

                \section{Limitations and Future Work}
                  This work leaves several interesting directions open.
                  On the technical side, it would be interesting to improve Theorem~\ref{thm:2} by weakening the assumptions on the distributions, or by deriving better complexity bounds in terms of $\epsilon$ or $n$.
                  Although our algorithm works for intersectional types, it considers a separate constraint for each intersection.
                  $\text{Since there can be}$ exponentially many intersections compared to the types, it would be important to improve the run-time in this setting.
                  Exploring the utility lost from the advertiser's perspective, and potential ways of bounding it would also be of interest.
                  Further, it would be relevant to extend our framework to the (non-truthful) general second price auction~\cite{edelman2007internet, varian2009online}, which is used to auction multiple ad slots together.
                  From a practical standpoint, a natural problem is that advertisers run their campaigns at different times;
                  while an ad campaign is running on the platform, several other campaigns start and finish.
                  Our framework does not account for this.
                  Further, we do not ensure that users of different types derive similar value from an ad.
                  An advertiser could intentionally design an ad to appeal to a specific type, and then, even though the ad receives a balanced coverage, it could generate biased value for users~\cite{SpeicherAVRABGL18}.

                  Finally on the empirical side, testing our framework in the field and studying how the constraints affect user satisfaction, and the profile of ads they see would be important.

                  \section{Conclusion}
                    We initiate a formal study of designing mechanisms for online ad auctions that can ensure advertisements are not shown disproportionately to different populations.
                    This is especially relevant for ads for employment opportunities, housing, and other regulated markets where biases in the recipient population can be illegal and/or unethical.
                    As has been shown recently, existing platforms suffer from various spillover effects that result in such biased distributions.
                    Our approach places constraints on the allocations achieved by an ad across different sub-populations in order to ensure balanced exposure of the content.
                    It can be used flexibly placing constraints on some or all advertisers, across some or all sub-populations, and varying the tightness of the constraint depending on the level of fairness desired.

                    We present a truthful mechanism which attains the optimal revenue while satisfying the constraints necessary to attain such fairness, and present an efficient algorithm for finding this mechanism given the advertiser properties and fairness constraints.
                    Empirically, we observe that our mechanisms can satisfy fairness constraints at a minor loss to the revenue of the platform, even when the constraints ensure it attains perfect fairness.
                    Hence, fairness is not necessarily at odds with maximizing the platform's ad revenue.
                    Furthermore, we show empirically that advertisers are not significantly impacted with respect to their winning percentages -- the sub-populations their ads are shown to change to be fair, but overall they are still reach a similar number of users.

%%%%%%%%%%%%%%%%%%%%%%%%%%%%%%%%%%%%%%%%%%%%%%%%%%%%%%%%%%%%%%%%

\nocite{scipy,scikit-learn}
\bibliography{bibliography}
\bibliographystyle{plain}

\appendix

\section{A Simple Example of Competitive Spillover}
  \begin{figure}[b!]
    \subfigure[
      \footnotesize
      \label{fig:8a} \textit{Coverage as a Function of Shift. (Non-Convex) }
      Coverage for one of the two advertisers with exponentially distributed bids, on two user types.
      We vary the shift of one of the advertisers and report its coverage as a function of the shift.
        ]
        {
          \includegraphics[height=5.0cm]{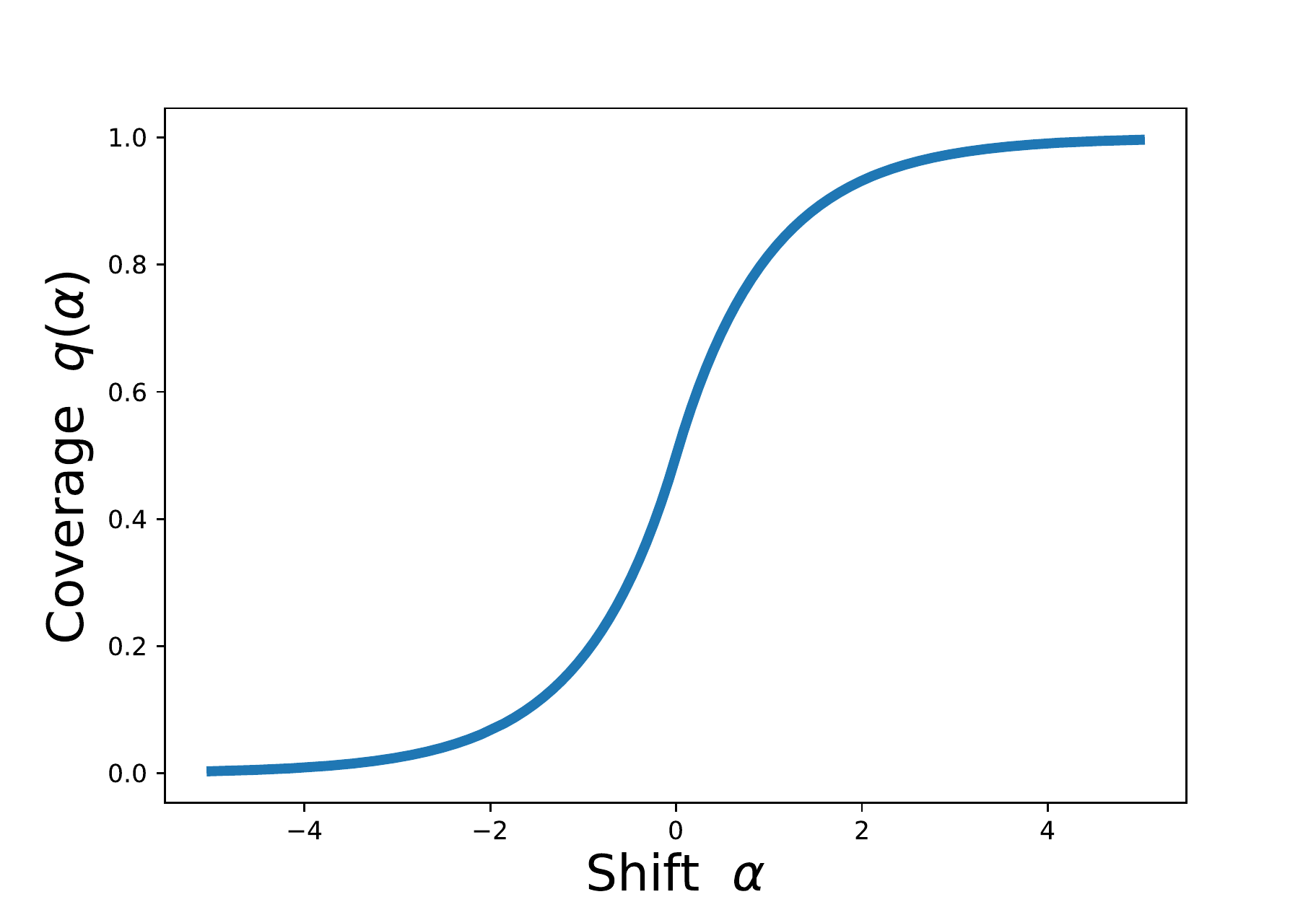}
        }
        \hspace{2mm}
    \subfigure[
      \footnotesize
        \label{fig:8b} \textit{Loss as a Function of Shifts. (Non-Convex) }
        The loss $\mathcal{L}(\alpha)$, for two advertisers with exponential valuations, and $\delta=(0.5,0.5)$.
        We vary the shift of one of the advertisers and report its coverage as a function of the shift.
        ]
        {
          \includegraphics[height=5.0cm]{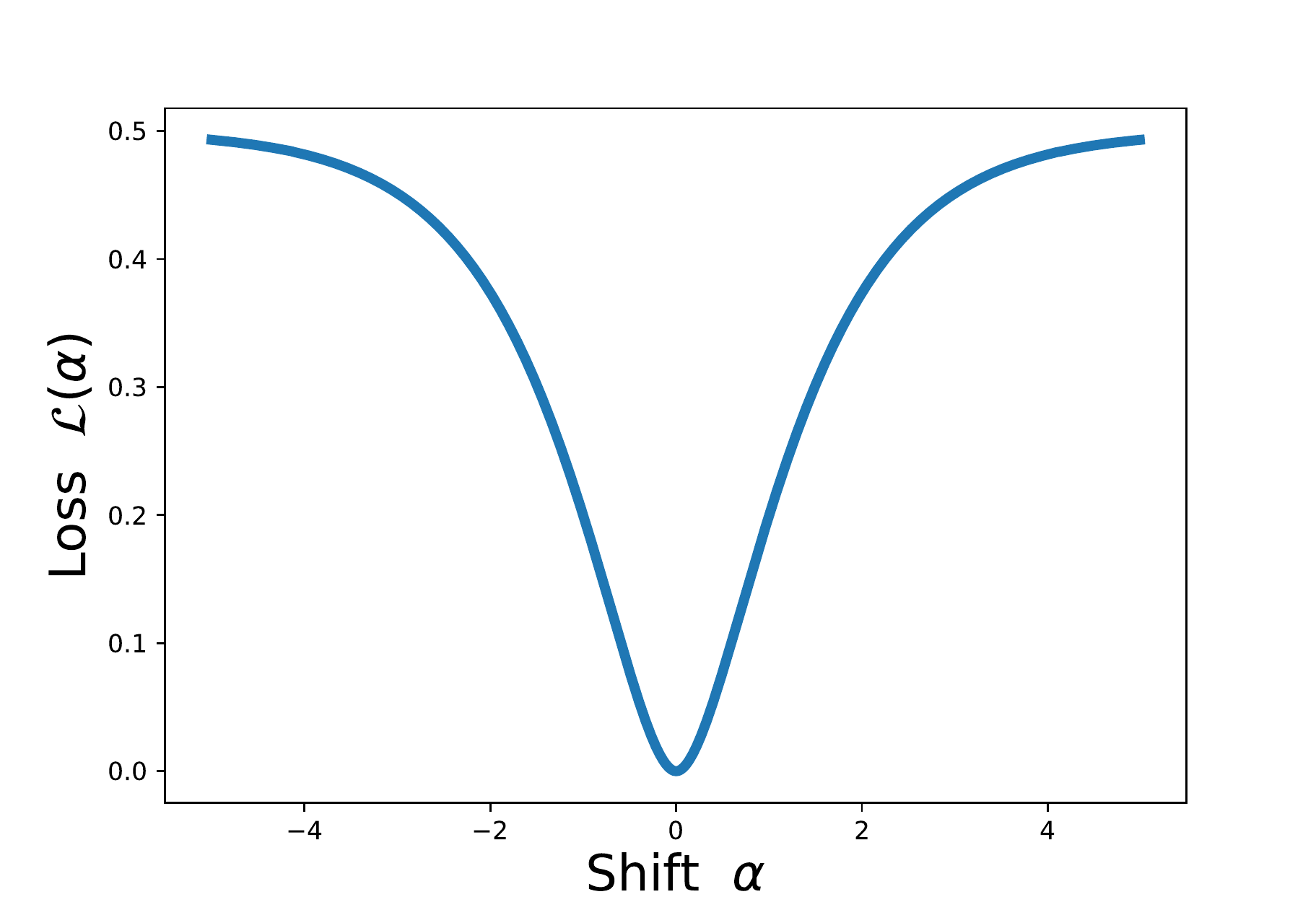}
        }
    \caption{
    \label{fig:8}
    }
  \end{figure}
  In this section we discuss the example of competitive spillover presented the introduction more concretely.

  Let the two advertisers have an equal budget of \$30.
  Both of them place a bid of \$1, if they target the current user and otherwise place a bid of \$0.
  For the sake of simplicity, let us assume that men and women visit the platform alternately.
  Consider an auction mechanism that shows the ad of the highest bidder.
  If there is more than one bidder with the same highest bid, the auction mechanism chooses one uniformly at random.
  Whenever a man visits the webpage, the second advertiser places a bid of \$1, while the first advertiser places a bid of \$0.
  Therefore, the mechanism always shows the second advertiser's ad.% visiting the platform.
  Whereas when a women visits the platform, both the advertisers place a bid of \$1, and one of them shows the advertisement with a 50\% probability.

  Consider the point when 40 users have visited the platform, 20 men and 20 women.
  The second advertiser has shown 20 ads to men, and 10 ads to women.
  Whereas the first advertiser has only displays 10 ads to females.
  Having shown 30 ads, the second advertiser has finished the budget, and leaves the auction, while the first advertiser stays till another 20 women visit the platform.
  In such a situation, the second advertiser who meant the ad to be unbiased among users, ends up under-representing women in the viewers of the ad.

\section{Revenue is Non-Concave in $\alpha$}
  \begin{figure*}[b!]
    \hspace{-6mm}
    \subfigure[
      \footnotesize
      \label{fig:9a}
        We report the total revenue as a function of the shift.
        ]
        {
          \includegraphics[height=5.5cm]{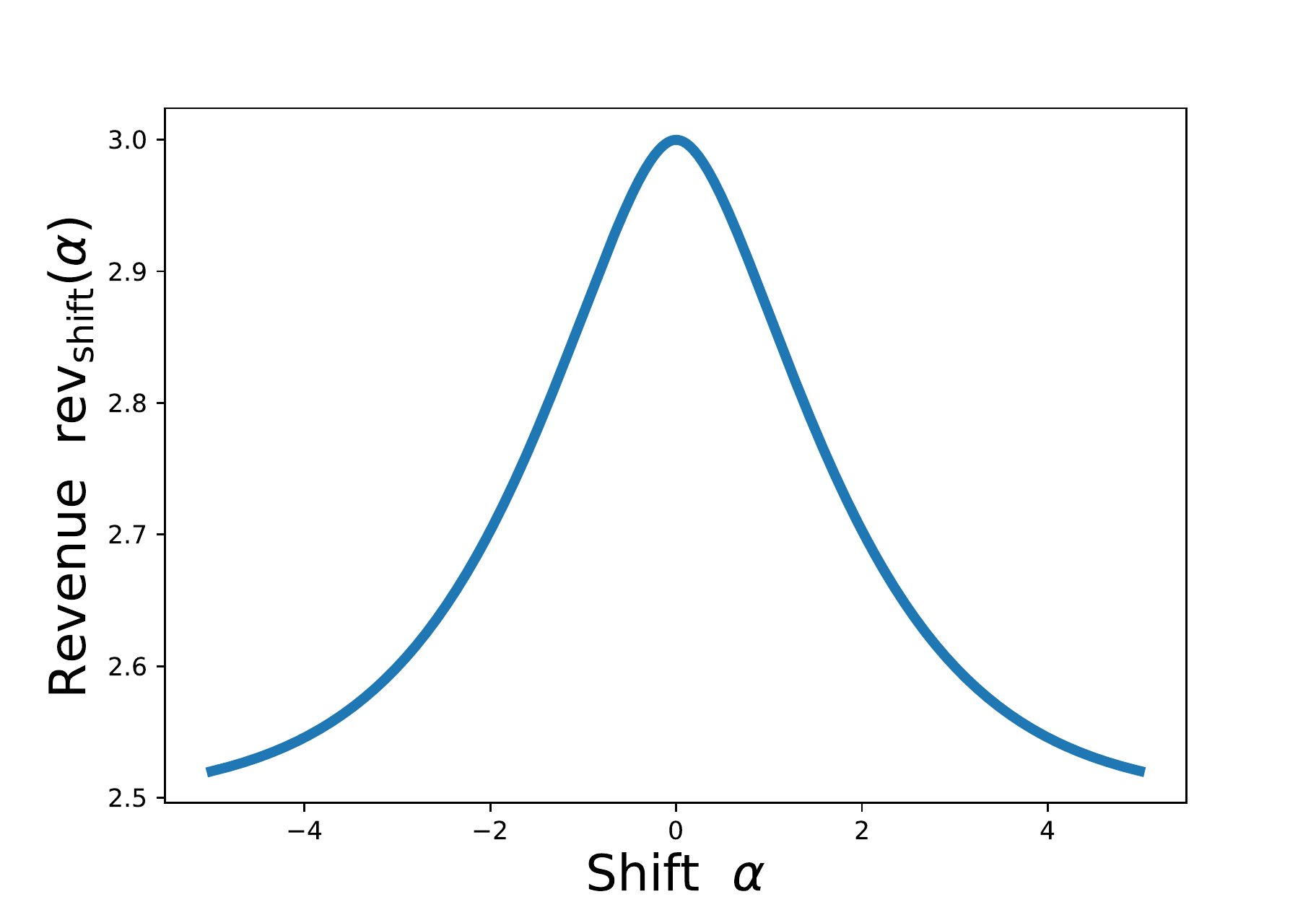}
        }
        \subfigure[
          \footnotesize
          \label{fig:9b}
            We report the total revenue as a function of the coverage.
            ]
            {
              \includegraphics[height=5.5cm]{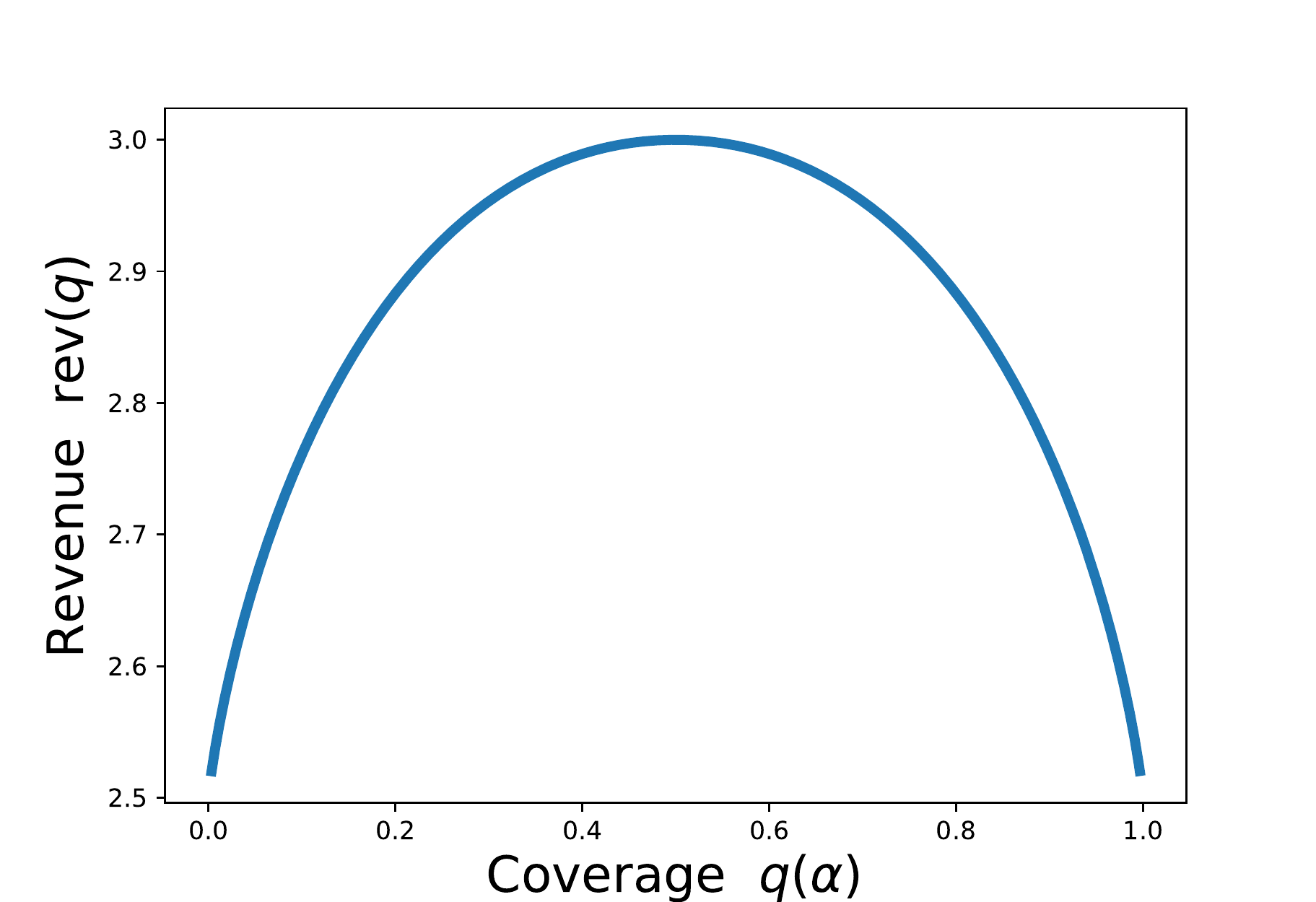}
              \hspace{2mm}
            }
    \caption{\label{fig:9} \textit{Revenue as a Function of Coverage and Shift.}
    Total revenue for two advertisers with exponentially distributed bids, on two user types. We vary the shift of one of the advertisers.}
  \end{figure*}
  \label{app:revenue_is_non_concave}
  Consider two advertisers and one user type with $f_{11}(x)=e^{-x}$ and $f_{21}(x)=e^{-x}$.
  We fix the shift of advertiser 2 to 0, and consider a positive shift $\alpha\geq 0$ of advertiser 1.
  Then
  \begin{align*}
    \mathrm{rev}_{\rm shift}(\alpha) &= \int\limits_{\mathrm{supp}(f_{11})}\hspace{-5mm}yf_{11}(y)F_{21}(y+\alpha)dy\hspace{0.5mm}+\hspace{-5mm}\int\limits_{\mathrm{supp}(f_{21})}\hspace{-5mm}yf_{21}(y)F_{11}(y-\alpha)dy\\
    &=\hspace{4.5mm}\int_0^{\infty}\hspace{-3mm}ye^{-y}(1-e^{-(y+\alpha)})dy+\hspace{-1mm}\int_{\alpha}^{\infty}\hspace{-4mm}ye^{-y}(1-e^{-(y-\alpha)})dy\\
    &=\hspace{4.5mm}1+\nfrac{1}{2}\cdot (\alpha+1) e^{-\alpha}.
  \end{align*}
  Differentiating $\mathrm{rev}_{\rm shift}$ we can observe it is not a concave function of the shift $\alpha$ (see Figure~\ref{fig:9a}).
  Indeed if we consider $\frac{d^2{\rm rev}_{\rm shift}}{d\alpha^2}=\nfrac{1}{2}\cdot (\alpha+1) e^{-\alpha}$, it is positive for all $\alpha>1$.
  Consider the coverage $q(\alpha)$ of advertiser 1
  \begin{align*}
    q(\alpha)&=\int\limits_{\mathrm{supp}(f_{11})}\hspace{-5mm}yf_{11}(y)F_{21}(y+\alpha)dy\\
    &=\hspace{4.5mm}\int_0^{\infty}\hspace{-4mm}e^{-y}(1-e^{-(y+\alpha)})dy\\
    &=\hspace{4.5mm}1-\nfrac{1}{2} \cdot e^{-\alpha}.
  \end{align*}
  Similarly we can observe that $q$ is not a convex function of $\alpha$ (see Figure~\ref{fig:8a}).
  Using $q(\alpha)$ to formulate the loss $\cL(\alpha)$ we can easily observe that it is non-convex as well (see Figure~\ref{fig:8b}).
  Let us re-parameterize the revenue ${\rm rev}_{\rm shift}$ in terms of $q$ as ${\rm rev}(\cdot)$. Then we have
  \begin{align*}
    &\hspace{46mm}\mathrm{rev}(1-q) = 1+(1-q)(1-\log(2-2q)))\numberthis\\
    &\hspace{46mm}\hspace{2.0mm}\frac{d^2\mathrm{rev}(q)}{dq^2}  = \frac{-1}{1-q} \leq 0. \hspace{57mm} (\mathrm{Using \ } q < 1)
  \end{align*}
  We can observe that revenue is a concave function of the coverage (see Figure~\ref{fig:9b}).

\section{Why Is the TV-Distance Small?}\label{sec:small_tv}
  To calculate the TV-distance we consider the distribution of winners selected by the auction mechanism, i.e., the distribution of the number of users an advertiser reaches.
  This distribution is different from coverage which separates the audience by their types.
  We report the total variation distance
  \begin{align}
    d_{TV}(\cM, \mathcal{F})\coloneqq \nfrac{1}{2}\sum_{i=1}^{n}|\sum_{j=1}^{m}q_{ij}(\mathcal{M})-q_{ij}(\mathcal{F})| \in [0,1]\label{eq:tv_norm_supp}
  \end{align}
  between the two distributions, as a measure of how much the winning distribution changes due to the fairness constraints.

  Consider the distribution of advertiser $i$'s coverage as the vector $\{q_{ij}(\mathcal{M})\}_{j\in[m]}\in [0,1]^m$.
  Its projection on the perfectly-fair polytope is
  \begin{align}
    \frac{1}{m} \bigg( \ \sum_{j=1}^{m}q_{ij}(\mathcal{M}) \ \bigg)\cdot (1,1,\cdots,1)\in [0,1]^{n}.\label{eq:projection_of_optimal_distribution}
  \end{align}
  Since the coverage is uniform, it satisfies the perfect fairness constraints.
  Further, using Eq.~\ref{eq:tv_norm_supp} we can observe that this projection has a 0 total variation distance $d_{TV}$ to $\{q_{ij}(\cM)\}_{j\in[m]}$.

  If the solution $q_{ij}(\mathcal{F})$ of the optimal fair mechanism is close to this projection, then the resulting $d_{TV}(\cM, \mathcal{F})$ is small.
  Moving the coverage $q_{ij}(\mathcal{F})$ away from the projection involves a trade-off between increasing the total change in coverage, and decreasing the change for some types the advertiser values more.

  Therefore, if the average bid of an advertiser does not vary significantly between the types, then $q_{ij}(\mathcal{F})$ is close to the projection.
  Importantly, this does not imply that the coverages $q_{ij}(\cM)$ of the unconstrained mechanism are balanced.
  To gain some intuition, consider two advertisers with similar budgets, but one advertiser places a bid of $1+\epsilon$ for men and $1-\epsilon$ for women, while the other places a bid of $1$ for men and women.
  Even though the first advertiser's bid for men is only $2\epsilon$ higher than their bid for women, they would only reach men, i.e., $q_{1}=(1,0)$.
  Whereas, the platform only loses just $\epsilon$ fraction of its revenue by changing $q_{1}$ to its projection $(\nfrac{1}{2},\nfrac{1}{2})$.

\end{document}